\newcommand{\msn}{\textsc{comp-sys}\xspace}
\newcommand{\csn}{\textsc{BvN-sys}\xspace}
\newcommand{\rsn}{\textsc{rr-sys}\xspace}
\DeclareMathOperator*{\argmin}{arg\,min}
\newcommand{\salg}{\ifmmode \mathcal{A}_{rr}\xspace \else $\mathcal{A}_{rr}\xspace$ \fi}
\newcommand{\stalg}{\ifmmode \mathcal{A}_{\mathrm{top}} \else $\mathcal{A}_\mathrm{top}\xspace$ \fi}
\newcommand{\spalg}{\ifmmode \mathcal{A}_{\mathrm{pkg}} \else $\mathcal{A}_\mathrm{pkg}\xspace$ \fi}
 \renewcommand{\skew}{ \phi\xspace}
\newcommand{\para}[1]{\vspace{2pt} \noindent \textbf{#1:}\xspace}
\newcommand{\rrec}{R_{r}}
\newcommand{\crec}{R_{b}}
\newcommand{\dct}{\ifmmode \mathrm{\Delta} \else \Delta \fi} 
\newcommand{\dctrot}{\Delta}
\newcommand{\dctda}{\Delta}
\newcommand{\dctmix}{\Delta}
\newcommand{\dctSchedule}{\ifmmode \mathrm{Time} \else \textrm{Time}\xspace \fi }
\newcommand{\sys}{\textsc{sys}}
\newcommand{\rr}{\textsc{rr}}
\newcommand{\direct}{\textsc{Direct}}
\newcommand{\rrbvn}{\textsc{MulP}\xspace}
\newcommand{\algOnePerm}{\textsc{Perm}\xspace} 
\newcommand{\upper}{\textsc{Upper}\xspace}
\newcommand{\upperU}{\ifmmode \textsc{Upper}^{+} \else \textsc{Upper}$^+$\xspace \fi }
\newcommand{\lowerRsn}{\textsc{Opt}}
\newcommand{\algPivot}{\textsc{Pivot}\xspace} 
\newcommand{\algPivotU}{\textsc{Pivot}$^+$\xspace} 
\newcommand{\cycleSched}{\mathcal{C}} 
\newcommand{\cycleSchedElem}{C} 
\newcommand{\configElem}{C} 
\newcommand{\bvn}{\textsc{BvN}\xspace}
\newcommand{\eff}{\eta} 
\newcommand{\stalgmsn}{{ \textsc{Pivot}^{}_T} }
\newcommand{\spalgmsn}{{ \textsc{Pivot}^{}_P}}
\newcommand{\stalgmsnU}{{ \textsc{Pivot}^+_T}}
\newcommand{\spalgmsnU}{{ \textsc{Pivot}^+_P}  }
 \newcommand{\rotornet}{\emph{RotorNet}\xspace} 
\newcommand{\ps}{\ifmmode \mathcal{S}_{\mathrm{pkg}} \else $\mathcal{S}_{\mathrm{pkg}}\xspace$ \fi}
\newcommand{\ts}{\ifmmode \mathcal{S}_{\mathrm{top}}\else $\mathcal{S}_{\mathrm{top}}\xspace$ \fi}
\newcommand{\totalDemandMat}{\hat{M}}
\newcommand{\ditPS}{T_{\mathrm{pkg}}}
\newcommand{\mdl}{m^{(dl)}}
\newcommand{\Mdl}{M^{(dl)}}
\newcommand{\moh}{m^{(1h)}}
\newcommand{\Moh}{M^{(1h)}}
\newcommand{\mth}{m^{(2h)}}
\newcommand{\Mth}{M^{(2h)}}
 \newcommand{\mdlSum}{\hat{m}^{(dl)}}
\newcommand{\mohSum}{\hat{m}^{(1h)}}
\newcommand{\mthSum}{\hat{m}^{(2h)}}
\newcommand{\inActCell}{\omega^\emptyset} 
\newcommand{\inActCellMax}{\omega } 
\newcommand{\largeLoad}{c_l}
\newcommand{\largeRatio}{t_l}
\newcommand{\numberFlows}{n_f}
 \newcommand{\numberFlowsSmall}{n_s}
 \newcommand{\numberFlowsLarge}{n_l}
\DeclarePairedDelimiter{\ceil}{\lceil}{\rceil}
\def\sentry{\tau}
\def\TM{\mathrm{TM}}
\newcommand{\card}[1]{\lvert #1 \rvert}
\newcommand{\Wcard}[1]{\lvert\lvert #1 \rvert\rvert_{\scriptscriptstyle \Sigma}}
\newcommand{\Forbcard}[1]{\lvert\lvert #1 \rvert\rvert_{\scriptscriptstyle  F}}
\newtheorem{observation}{Observation}
\newtheorem{claim}{Claim}
\newtheorem{definition}{Definition}
\newtheorem{theorem}{Theorem} 
\newtheorem{corollary}[theorem]{Corollary}
\newtheorem{lemma}[theorem]{Lemma}
\newcommand{\khen}[1]{\textcolor{blue}{CG: #1}}
\newcommand{\lp}[1]{#1}
\@maketitle\vspace{-0em}}
\begin{document}

\title{Integrated  Topology and Traffic Engineering for Reconfigurable Datacenter Networks}

\author{
    \IEEEauthorblockN{Chen Griner and Chen Avin}\\
    \IEEEauthorblockA{{School of Electrical and Computer Engineering, Ben-Gurion University of the Negev
    \\ griner@post.bgu.ac.il } \IEEEauthorblockA{avin@bgu.ac.il}
}}

 \maketitle
\begin{abstract}
The state-of-the-art topologies of datacenter networks are fixed, based on electrical switching technology, and by now, we understand their throughput and cost well. 
For the past years, researchers have been developing novel optical switching technologies that enable the emergence of reconfigurable datacenter networks (RDCNs) that support dynamic psychical topologies.
The art of network design of dynamic topologies, i.e., 'Topology Engineering,' is still in its infancy. 
Different designs offer distinct advantages, such as faster switch reconfiguration times or demand-aware topologies, and to date, it is yet unclear what design maximizes the throughput.

This paper aims to improve our analytical understanding and formally studies the throughput of reconfigurable networks by presenting a general and unifying model for dynamic networks and their topology and traffic engineering.
We use our model to study demand-oblivious and demand-aware systems and prove new upper bounds for the throughput of a system as a function of its topology and traffic schedules.

Next, we offer a novel system design that combines both demand-oblivious and demand-aware schedules, and we prove its throughput supremacy under a large family of demand matrices.
We evaluate our design numerically for sparse and dense traffic and show that our approach can outperform other designs by up to $25\%$ using common network parameters.
\end{abstract}

\section{Introduction}

Datacenter networks (DCNs) are 
an essential infrastructure for modern internet applications, which are at the base of our digital lives.  Large-scale web services, cloud computing, and the recent proliferation of large language models (LLMs) are just a few of the typical uses of datacenters.
The use of different online platforms and services is expected to increase even further in the coming years, resulting in an explosive increased demand for datacenters \cite{dataCenterGrowth2023URL,Open_Compute_Project2022}. 
DCNs need to improve their performance by adopting new ideas and technologies. 
Traditionally, DCNs have been connected through electrical packet switches in a wired static topology, e.g., fat-trees, clos networks \cite{closAl2008scalable,liu2013f10,jupiterSingh2015}, and expanders \cite{xpanderKassing2017beyond}. In these DCNs, since the topology is static, it needs to function adequately under any traffic pattern, and, therefore, the topology is optimized for a worst-case traffic pattern and is oblivious to the actual traffic. 

However, electrical packet-switched networks are just one technology. Recent years have given rise to optical circuit switches (OCSes), a newer technology with new possibilities \cite{wang2009your,poutievski2022jupiter,hall2021survey}.  
OCS and other optical technologies, such as those based on wavelength gratings as in Sirius \cite{ballani2020sirius}, enable dynamic reconfigurable network topologies via optical circuit switching.
Using these optical switches, the topology can change dynamically in relatively short periods and adapt to different traffic patterns, thus facilitating the emergence of reconfigurable datacenter networks (RDCNs)\cite{wang2009your,avin2019toward}. 
RDCNs require what was recently termed \emph{``Topology engineering''} \cite{poutievski2022jupiter}: a new dimension of network design that controls the dynamic topology of the network.
Essentially, reconfigurable topologies allow flows to route via shorter paths, increasing throughput by saving network capacity at the cost of a temporal penalty. In other words, these systems can save \emph{bandwidth tax} resulting from multi-hop routing at the cost of a \emph{latency tax} resulting from topology reconfiguration time \cite{mellette2017rotornet,addanki2023mars,griner2021cerberus}. 
\begin{figure}[t]
  \begin{centering}
  \begin{tabular}{c}
 \includegraphics[width=0.80\columnwidth]{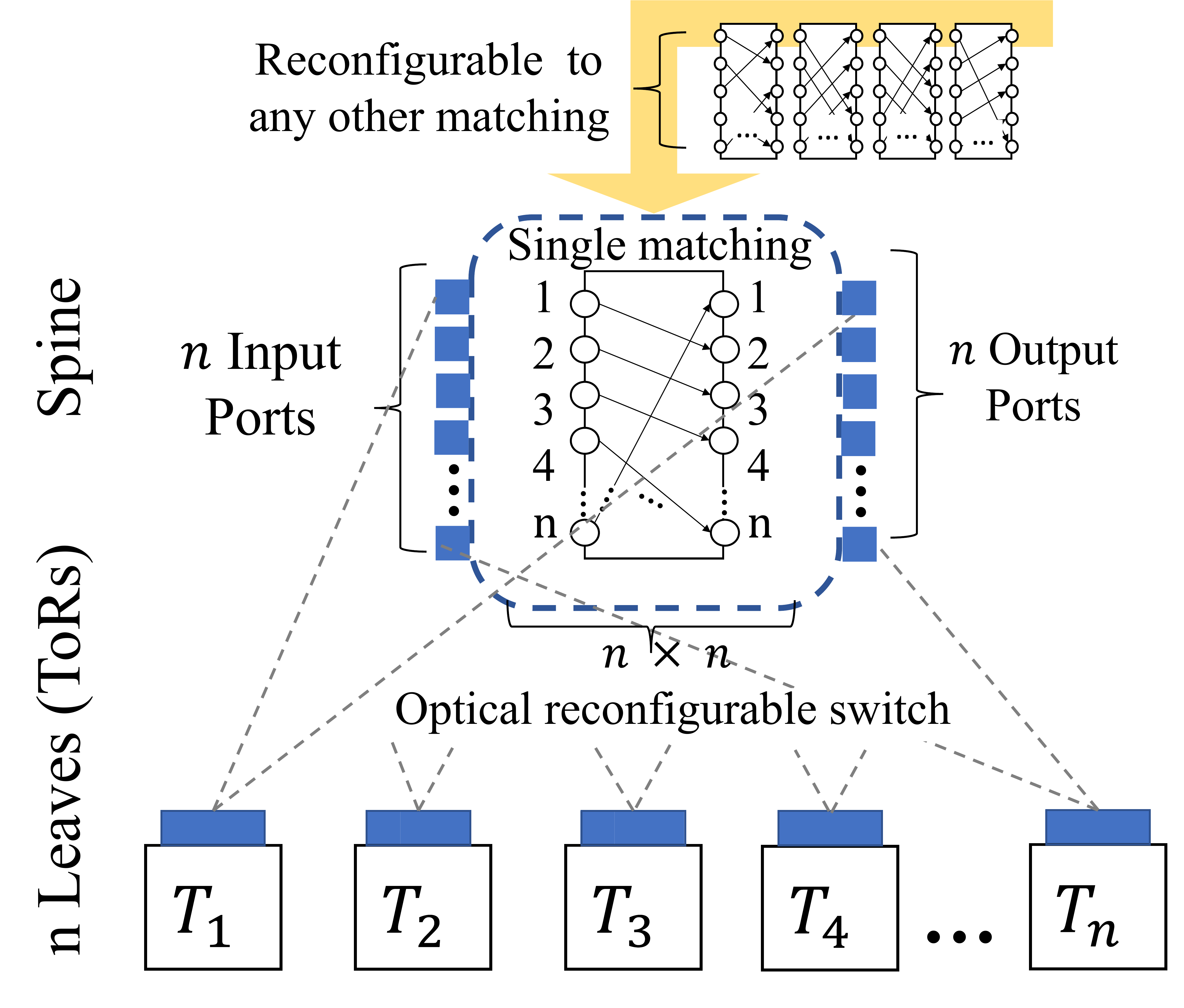} 
  \end{tabular}
   \caption[A schematic view of our simplified network model]{
  A schematic view of our simplified network model. 
  A two-layer leaf-spine architecture. The spine layer consists of a single switch that is able to adapt and change its matching.  The dashed lines represent unidirectional links.
   }
    \label{fig:OrthNetModelIntro}
  \end{centering}
 
\end{figure}

In this paper, we rigorously study networked systems with dynamic topologies, exploring the tradeoffs between different tax types, with the goal of increasing the throughput for these networks.
To do so, we use the most basic and simple model of a reconfigurable datacenter network 
which is illustrated schematically in \autoref{fig:OrthNetModelIntro}.
Similar to previous works \cite{mellette2020expandingOpera,mellette2017rotornet,griner2021cerberus,ghobadi2016projector}, we assume a network based on a two-layer, leaf-spine topology. The leaf layer has $n$  nodes. Each node represents a top-of-rack (ToR) switch. In turn, the nodes are connected using a \emph{single}  
optical spine switch with $n$ input and $n$ output ports of equal capacity. At any given time, this switch can hold any directed \emph{matching} between its input and output ports, creating a directed topology (i.e., graph) between the set of ToRs, $T_1,...,T_n$ (the nodes).
Since at any given time the switch implements a single matching, any node could be connected to any other node, but only to one.
We assume that only a single hop routing is allowed on each matching. However, in our model, data may be sent to a node in one matching and then forwarded to its destination in a different matching. This enables the topology to simulate networks of arbitrary degree over time. 
A similar abstraction to a single optical switch was also made in other works like Helios \cite{farrington2010helios}, Sirius \cite{ballani2020sirius}, and others \cite{wang2009your}. We further discuss our model in \autoref{subsec:DemandMatrixNetworkModel}
The reconfiguration process of a switch, from one matching to another, has a temporal penalty, termed the \emph{reconfiguration time}. Usually, depending on the system type, the reconfiguration time is given as a system parameter.  The sequence of matching that a system implements is denoted as \emph{topology schedule}, and it explicitly defines the dynamic topology of the system. 
The goal of our work is to maximize the throughput of a system; this can be done by the topology schedule, which determines the dynamic topology and its properties, like the path length and the number of reconfigurations.
In our model, multi-hop routing is the main contributor to bandwidth tax, and reconfiguration time serves as the main contributor to latency tax. The networked system's throughput can be increased by reducing the "payment" for both or either tax types. 
The main challenges are to find the best tradeoff between these taxes, and understand what is the lower bound of the achievable throughput for a given system.

There are two basic approaches to topology scheduling (and, more generally,  topology engineering): 
\emph{demand-aware} and \emph{demand-oblivious} topology designs ~\cite{avin2019toward}. 
The two design approaches use different strategies to achieve lower ``taxation''.  
In a nutshell, demand-aware topologies or what may also be called \emph{adaptive topologies} are dynamic networks that adjust the topology to the actual demand, optimizing the topology to particular traffic patterns by rewiring connections.  
On the contrary, demand-oblivious topologies~\cite{hall2021survey} 
present the same topology to any traffic patterns, oblivious to the actual demand.  Note, however, that demand-oblivious systems can be dynamic (e.g., \rotornet \cite{mellette2017rotornet}) or static (e.g., Clos \cite{closAl2008scalable}).  
In this paper, we study both demand-oblivious and demand-aware dynamic topologies.

Typical examples of demand-aware topologies are systems like Eclipse \cite{bojja2016costlyEclipse}, projecTor \cite{ghobadi2016projector}, Morida  \cite{moridafarrington2013multiport}, and others \cite{solsticeliu2015scheduling,hamedazimi2014firefly,zerwas2023duo}. In their approach to topology engineering, the topology can form an arbitrary schedule of matchings where all or most traffic flows are served in a direct connection, i.e., single hop, thus paying no bandwidth tax. 
Some of the best-known approaches for demand-aware topology scheduling (e.g., Eclipse \cite{bojja2016costlyEclipse}, and Solstice  \cite{solsticeliu2015scheduling} are based on Birkhoff von Neumann (BvN) decomposition \cite{frank1956algorithm,birkhoff1946tres} of the demand matrix.
The BvN decomposition is a well-known and widely studied method used to decompose a double stochastic matrix into a set of matchings or permutation matrices, see preliminaries in \autoref{sec:Preliminaries}). 
We denote systems from this family as Birkhoff von Neumann systems or \csn.

An OCS switch can implement the schedule generated using BvN by reconfiguring itself to each permutation matrix in the decomposition in some order. Each reconfiguration incurs a delay in the operation of the OCS during which it cannot transmit data, i.e., the reconfiguration time. 
Bounds on the reconfiguration time are varied.
Most \csn type systems need to reconfigure physical connections (e.g., move mirrors), which can result in a delay on the scale of \emph{micro-seconds} ~\cite{memesF20011296}. Furthermore, they need to calculate a novel schedule, which may be computationally expensive. Indeed, improving the run time of the BvN decomposition is the subject of some recent works \cite{bojja2016costlyEclipse,livshits2018lumos,valls2021birkhoff}.

In contrast to the demand-aware topologies we just described, demand-oblivious system designs can achieve much faster reconfiguration times (e.g., Sirius \cite{ballani2020sirius}, \rotornet \cite{mellette2017rotornet}, and Opera \cite{mellette2020expandingOpera}). 
These networks achieve \emph{nano-second} scale reconfiguration times and pay very little in the form of latency tax \cite{addanki2023mars,griner2021cerberus}.
The topology schedule approach in these designs is to use a set of \emph{predefined matchings} and quickly 'rotate' between them in a \emph{round-robin} and periodic manner, which does not change.
This leads to faster reconfiguration since there is no need to calculate a new topology schedule for each demand pattern.
In these systems, the union of the schedule's matchings results in a complete graph\footnote{Some recent work proposed round-robin systems that emulate other regular graphs like expanders, in the current work, we assume they emulate a complete graph.}. 
All nodes will be connected directly once in each round, or \emph{cycle}. 
Due to the round-robin nature of the operation of systems such as \rotornet and Opera, in this paper, we will denote systems from this family as round-robin systems or, \rsn.
Since reconfiguration time is lower in round-robin systems and the topology configuration is demand-oblivious, one avenue left to improve the throughput of such systems is via \emph{traffic engineering}. 
In particular, demand-oblivious topology schedules typically use two-hop paths for load balancing trading bandwidth tax to help improve the system's throughput. For example, Valiant load balancing \cite{valiant1981universal} as used in Sirius \cite{ballani2020sirius}, or RotorLB, used in \rotornet  \cite{mellette2017rotornet}. 
To implement traffic engineering over a topology schedule, we define a \emph{traffic schedule}, which determines which traffic (e.g., packets) will be sent over which link and at which time. 
We further note that a traffic schedule could be either oblivious (as oblivious routing \cite{azar2003optimalObliviousRouting}) or adaptive (as adaptive routing \cite{lang2001analysisobliviousAdaptiveRouting}). For example, Sirius \cite{ballani2020sirius} uses oblivious routing over oblivious topology while \rotornet  \cite{mellette2017rotornet} and Opera \cite{mellette2020expandingOpera} uses adaptive routing over oblivious topology.

In this paper, we show that to improve the throughput, we should consider \emph{composite} systems (i.e., hybrid topology systems), that is, topology schedules that can combine \emph{both} both round-robin (demand-oblivious) and BvN (demand-aware) designs into a single schedule. We denote such a system as \msn. Recently,  Cerberus~\cite{griner2021cerberus} showed that a hybrid approach could help to improve the throughput in some important cases. In the current work, we provide a more accurate, in-depth analysis of the bounds on the performance of hybrid systems in terms of demand completion time.
We note that since a composite system uses both round-robin and BvN components, we require both a traffic schedule and a topology schedule to define it fully. We further elaborate on this point in the model section of the paper. 

Formally, we study the demand completion time (DCT) of different systems (i.e., \rsn, \csn, \msn), where the DCT of a system is defined as the completion time of the \emph{wors-case} demand matrix for that system. The DCT of a system guarantees an upper on the completion time for \emph{any} demand matrix. Since the DCT is inversely proportional to the throughput \cite{jyothi2016measuring,griner2021cerberus,addanki2023mars}, an upper bound on the DCT of a system provides a lower bound on the system's throughput for \emph{any} demand matrix. In particular, a system with a better DCT than another system, will also have better throughput.
Furthermore, we conjecture that composite systems have higher throughput than \rsn and \csn type systems. 
From a wider perspective, we would like to understand better the trade-offs between different systems and the benefits of composite systems. To this end, our main contributions are the following\\
\para{Problem Formalism}
We formally define a system DCT (\autoref{def:sysDCT}) as a function of its topology and traffic scheduling.  
The system DCT provides an upper bound on the completion time of the system for any demand matrix. 
We claim that the system DCT is an important metric of network performance. Indeed, we argue that a system with a lower system DCT is superior to other systems. \\
\para{Analysis of \rsn and \csn DCTs}
We perform an in-depth analysis of the system DCT of \rsn and \csn.  In particular, we provide the system DCT for \csn in \autoref{obs:main:DAsysDCTBound}, and for \rsn in \autoref{thm:main:RRsysDCT} for general double stochastic matrices. To do so, we expand previous results and offer new insights and novel bounds on the performance of round-robin systems. \\
\para{Novel Algorithms and Analysis of \msn DCT}
We propose a novel algorithm for \msn on an input of general double stochastic matrices and give detailed results for the system-DCT of a composite system on a specific subset of double stochastic matrices in \autoref{thm:mainSystemDCTreslut}.  Our result shows that, under mild assumptions, the system DCT of \msn is better (lower) than either \rsn or \csn. \\
We note, however, that the general case of double stochastic matrices remains an open question.\\
\para{Empirical evaluation}
We empirically evaluate our algorithm's performance on known traffic models using a set of plausible parameters. We show that the \msn system is superior, by more than $20\%$, to any of the other systems, \rsn and \csn, which represent state-of-the-art systems.

The rest of this paper is organized as follows: We begin by explaining our model and problem definition in \autoref{sec:modelAndProbDefinition}, Next, we derive DCT bounds for two existing and known systems in \autoref{sec:NetwrokDesignsCSN} and than in \autoref{sec:NetwrokDesignsRSN}. We present the composite system in \autoref{sec:msn}. We analyze a case study for the family of matrices $M(v,u)$ in \autoref{sec:thecasestudy}. We present an empirical evaluation of our systems in \autoref{sec:Empirical} and the related work in \autoref{sec:relwork}. We conclude with a discussion in Section \ref{sec:discution}.
Due to space constraints, some of the technical results are moved to the Appendix.


\section{Demand Completion Times (DCT): model and problem definition }\label{sec:modelAndProbDefinition}

In this section, we discuss and define the problem of \emph{demand completion time} (DCT) minimization in 
reconfigurable dynamic networks.
We start by defining our problem's input, a demand matrix $M$, representing our network traffic. In turn, we define two types of schedulers: a \emph{topology scheduler} and a \emph{traffic scheduler} (where we consider the special case that allows at most two hop paths).
Both schedulers, together, can formally describe the forwarding of traffic in a dynamic network and the length of the topology scheduler (in seconds), which in turn defines the matrix DCT. Finally, In \autoref{subsec:problemdefinition}, we define the DCT minimization problem as finding the best schedulers for a given system, where each such scheduler's DCT is defined by its worst-case demand matrix.
 
\subsection{Preliminaries  }\label{sec:Preliminaries}
We first recall some relevant, important definitions. 

\para{Doubly stochastic matrix}\label{def:Doublystochasticmatrix}
A square matrix $M$ of non-negative numbers is a \emph{doubly stochastic matrix} (aka bistochastic) if the sum of the cells $M[i,j]$ in each row and column is equal to one, i.e., $\sum_i M[i,j]=\sum_j M[i,j]=1$.

\para{Saturated matrix and scaled matrix}
A $\lambda$ \emph{scaled} doubly stochastic matrix is a matrix where the sum of each row is $\lambda$.
In network traffic, a demand matrix $M$ is called \emph{saturated} if it is $\lambda$ scaled and $\lambda=r$, where $r$ is the capacity of all nodes, i.e., the sum of each row and column, will be $r$.

 \para{Permutation matrix}
 A \emph{permutation matrix} $P$ is a   double stochastic matrix where each row and column has exactly one cell with a value of $1$, and zero otherwise.
 We denote the set of all derangement permutation matrices (without fix points) as $P_\pi$.  
  
\para{The weight of a matrix and a matrix set}
For a demand matrix $M$ we denote its \emph{weight}, $\Wcard{M}$, as the sum of all its elements, 
 $\Wcard{M}=\sum_{ij} M[i,j]$. 
For a set $\overline{M} = \{m_1, \dots, m_v \}$ of such matrices 
we define $\Wcard{\overline{M}}=\sum_{i=1}^v  \Wcard{m_i}$. 

\para{Birkhoff-von Neumann decomposition}
The Birkhoff-von Neumann  theorem \cite{birkhoff1946tres}  (or $\bvn$), states that given a doubly stochastic matrix $M$, there exists a set of $v \le n^2$  positive coefficients $\beta_i>0$,  with the property $\sum_{i=1}^v \beta_i=1$, and a set of $v$ permutation matrices $P_i$, such that their sum equals the original matrix $M$, that is,
   $\sum_{i=1}^v  \beta_i\cdot P_i= M$.

\para{The  M(v) and the M(v,u) Matrices}\label{subsec:MvuMatrixDef}
Along the paper, we use the following matrices.
Let $M(v)$ be defined as a double stochastic matrix where each row and column has $v$ non-zero, non-diagonal elements, each equal to $\frac{1}{v}$. Let $M(v,u)=uM(n-1)+(1-u)M(v)$ where $0 \le u \le 1$.
We note that any $M \in M(v,u)$ is doubly stochastic with permutations (i.e., $M(1)$) and the uniform matrix (i.e., $M(n-1$)( as special cases.
We believe this family of matrices can capture some of the complexities of real network traffic in a more concise framework and is similar in spirit to the traffic models used in \cite{bojja2016costlyEclipse,solsticeliu2015scheduling,valls2021birkhoff}, and others.

\subsection{Demand Matrix and the Network Model}\label{subsec:DemandMatrixNetworkModel}
 Our network is modeled in a similar manner to the ToR-Matching-ToR (TMT) model, which was recently the basis for several reconfigurable datacenter networks (RDCN) architectures \cite{mellette2017rotornet,ballani2020sirius,griner2021cerberus,zerwas2023duo}. 
The network illustrated schematically in \autoref{fig:OrthNetModelIntro}, comprises two layers. A spine switch layer and a leaf switch layer. The leaf layer has $n$ nodes representing top-of-the-rack (ToR) switches $T_1,...,T_n$, and the spine layer has a \emph{single} reconfigurable (dynamic) switch.\footnote{The single switch  
is a special case of the TMT. However, the single switch abstraction was also used in previous works \cite{wang2009your,farrington2010helios,ballani2020sirius}, and we believe it could be generalized in future work.} 
Internally, each ToR in the leaf layer has a host that generates demand. The host is connected to the ToR in a single bidirectional connection. Furthermore, each ToR has one bidirectional port towards the spine switch. The spine switch has $n$ input and $n$ output ports of equal capacity, connected via a single directed \emph{matching} between its input and output ports, creating a directed topology. In our model, The spine switch can dynamically change its matching while the network operates, called a \emph{reconfiguration}. 
As mentioned before, at any time the switch implements a single matching. In this matching, any node could be connected to any other node via a single bidirectional connection but only to one other node.  We use the assumption that only single-hop routing is allowed during each matching. 
However, in our model, to facilitate two-hop routing, data sent to a node in a particular matching can be 'stored' in the ToR (via the host, for example, as in \cite{mellette2017rotornet}) and then forwarded to its destination in a future matching.
This assumption allows us to simulate networks of arbitrary degrees over time. 
We also assume that reconfiguration incurs a temporal cost in the form of \emph{reconfiguration time}, or delay, which we will denote as $R$. The \emph{transmission rate} of all the ports of the switches connected to or from the spine switch in our network is equal and denoted as $r$, and,w.l.o.g, will be typically set to $r=1$. We assume that the transmission rate of the host-to-ToR link is infinite. This last assumption allows us to focus on the topology formed between the ToRs. A similar assumption was used in \cite{jyothi2016measuring}.

Let $M$ denote our $n \times n$ demand matrix in bits of traffic between ToRs. That is $M[i,j]$ is the amount of traffic that ToR (node) $T_i$ needs to send to ToR $T_j$.
Unless otherwise stated, we assume that $M$ is doubly stochastic (i.e., $r=1$).

\subsection{The Topology and the Traffic Schedules}\label{subsec:ScheduleDefinition}

To fully define a dynamic network, we need to define its topology and how traffic is sent on each topology at each time slot.
To do so we define two types of schedules, a topology schedule \ts and a traffic schedule \ps.
We will use \autoref{fig:example} as an example, but we first formally define them.

\para{Topology schedule}  
A topology schedule \ts as an ordered set of $v=\card{\ts}$ tuples of three elements $\{\configElem_i,\alpha_i, R_i\}$. In each tuple, the first element is a permutation matrix $\configElem_i \in P_\pi$ (which can be translated to a switch configuration as a matching). 
The second element of the tuple is the transmission time $\alpha_i > 0 $, the time the system holds active the i$'th$ switch configuration, $\configElem_i$, 
which we also denote as \emph{time slot} $i$.
The third element is a reconfiguration time $R_i \ge 0$: the time it takes to reconfigure between consecutive switch configurations, i.e., replacing the $\configElem_i$ configuration with the $P_{i+1}$ configuration. No transmission is possible during this time.
Formally, the topology schedule is 
\begin{equation*}
    \ts=\{(\configElem_1,\alpha_1,R_1),(\configElem_2,\alpha_1,R_2),...,(\configElem_v,\alpha_v,,R_v)\}
\end{equation*}

We define the \emph{completion time} (denoted as \dctSchedule) of a topology schedule \ts as the sum of all transmission times $\alpha_i$ plus the reconfiguration times $R_i$. Formally, 
\begin{align}\label{eq:scheduleGeneral}
 \dctSchedule(\ts)&=  \sum_{i=1}^{\card{\ts}} (\alpha_i  +R_i) 
\end{align}

\begin{figure*}[ht]
    \centering
    \begin{tabular}{ccc}
       \raisebox{0.5cm}{\includegraphics[width=0.15\textwidth]{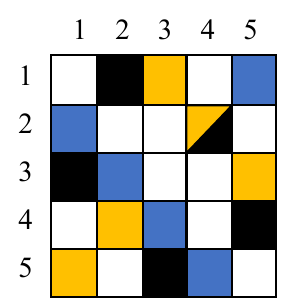}}  &
      \includegraphics[height=3.99cm]{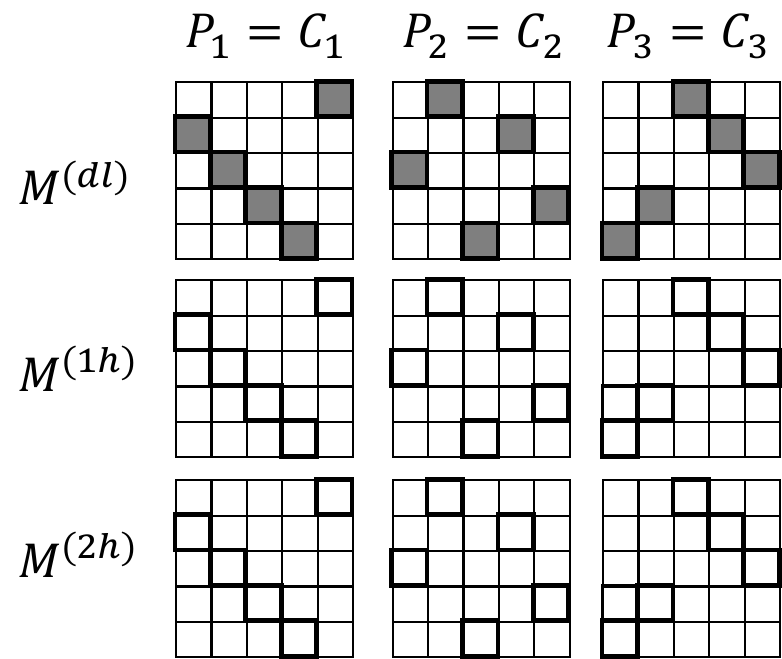} & 
      \includegraphics[height=3.99cm]{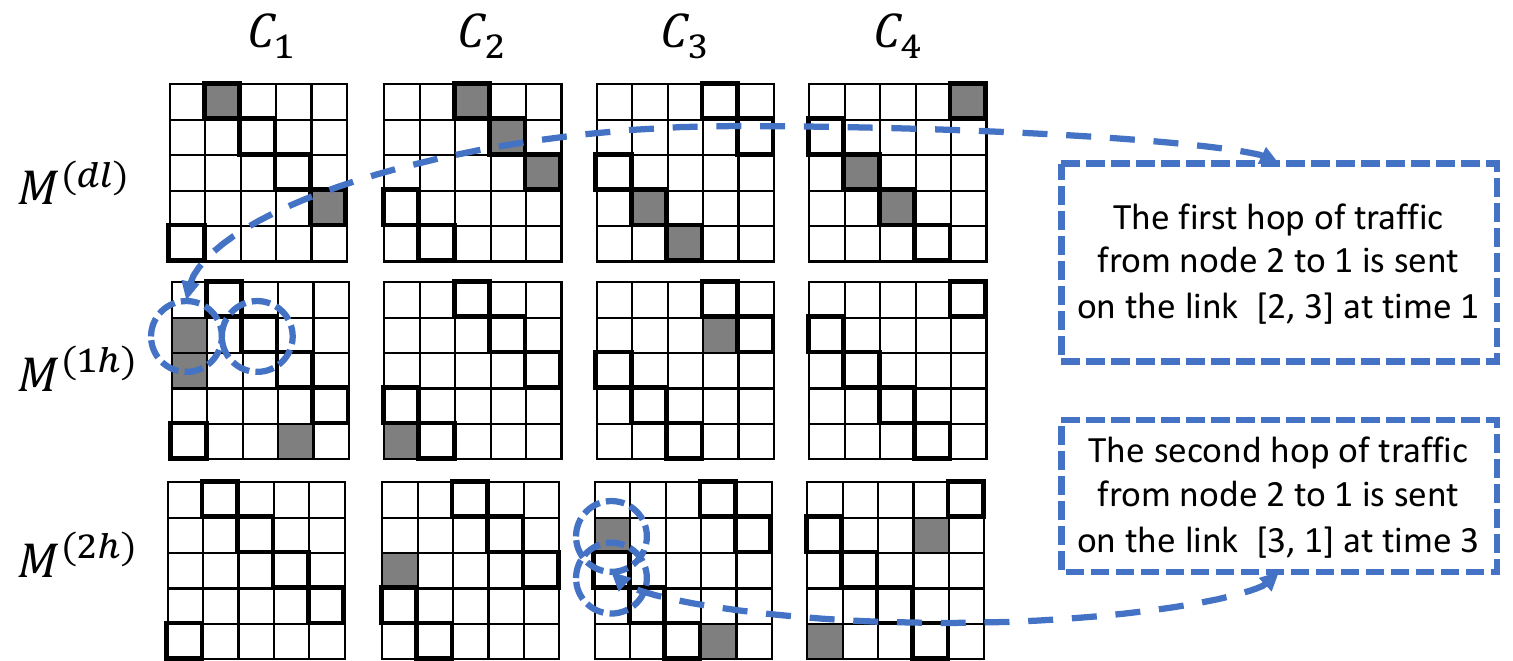} \\
      (a) Demand & \hspace{1cm} (b) BvN-based schedules & \hspace{-2.2cm} (c) Round-Robin-based schedules       
    \end{tabular}
    \caption[An example of a demand matrix and schedules that serve it]{An example of demand and schedules, (a) Double stochastic demand matrix $M$. The values in each cell is 0, $\frac{1}{3}$, or $\frac{2}{3}$ depend on the number of colors in the cell. (b) a simple \bvn-decomposition-based schedules which decompose $M$ into three permutation matrices $(P_1, P_2, P_3)$ (that become the topology schedule $(\configElem_1, \configElem_2, \configElem_3)$) and use only a single hop routing, i.e., $\Mdl$. (c) a round-robin-based topology schedule $(\configElem_1, \configElem_2, \configElem_3, \configElem_4)$ that is predetermined (and oblivious to the demand) and a traffic schedule that also uses 2-hops paths, $\Moh$, and $\Mth$.  
    }
    \label{fig:example}
\end{figure*}

\para{Traffic Schedule}
 A \emph{detailed traffic schedule} defines the traffic sent at each time slot $i$, i.e., during the switch configuration $\configElem_i$. 
The detailed traffic schedule $\ditPS$ is an ordered set of entries, which we assume are packets (though these could also represent traffic flows). Each entry $\sentry_i$ is a tuple with the following six elements: 
(the time slot at which the packet is sent, the packet size, the original source of the packet, the packet's final destination, the packet's current source, and the packet's current destination). We denote these elements as  $\sentry_i=\{t_i,w_i,s_i,d_i,x_i,y_i\}$.
Since the packet is sent over configuration $\configElem_i$, a valid entry must have $\configElem_i[x_i,y_i] = 1$ or alternatively, the edge $(x_i,y_i)$ is in the current matching $\configElem_i$.
In this work, we consider only paths of at most two hops.
For a single hop path $(s_i,d_i)$, we denote the packet sent as \emph{direct local}  \cite{mellette2017rotornet} and $\sentry_i$ will be of the form $\{t_i,w_i,s_i,d_i,s_i,d_i\}$.
For a two hops path 
$s_i \rightarrow q_i \rightarrow d_i $ we denote the packet sent in the first hop as \emph{indirect local} \cite{mellette2017rotornet} and $\sentry_i$ will be of the form $\{t_i,w_i,s_i,d_i,s_i,q_i\}$. The packet sent in the second hop is denoted as \emph{direct non-local} \cite{mellette2017rotornet} and $\sentry_i$ will be of the form $\{t_i,w_i,s_i,d_i,q_i,d_i\}$.
To conclude, a detailed traffic schedule defines all the traffic being sent over a topology schedule.
 
The schedule represents the \emph{amount} of traffic sent between each source and destination pair, and we use three sets of metrics to represent the traffic scheduling. This work considers only schedules that offer up to two hop paths, so we only need three different matrices to summarize the traffic at time slot $i$: A matrix, $\mdl_i$ for the traffic sent via single hop (i.e., direct local), a matrix, $\moh_i$ for the sent packets that are the first hop of a two hops paths traffic (i.e., local indirect), and a matrix, $\mth_i$, for the packets that are second hop of a two hops traffic (i.e., direct non-local) traffic. Also note that in our model, we allow a packet to traverse at most a single hop in each configuration $\configElem_i$. 
 
Next, we define these metrics formally.
Let $\mathds{1}_A$ be an indicator function for condition $A$.
Every cell of $\mdl_i$, represents direct local traffic sent from $s$ to $d$ at time slot $i$.
$\forall s,d$ s.t. $\configElem_i[s, d]=1$, 
\begin{align}
\small
\mdl_i[s, d]:=  
 \sum_{\sentry_j} 
 w_{j} \cdot \mathds{1}_{t_j = i \land s_j = x_j = s \land d_j = y_j = d}, 
\end{align} 
and zero otherwise.
Note that for this case if $\mdl_i[s,d] > 0$, only if  $\configElem_i[s,d]=1$.
Next, every cell of $\moh_i$ represents indirect local traffic, that is, traffic on its first hop out of two, at time slot $i$. For each two hops 
path $s \rightarrow q \rightarrow d$, it update the cell $\moh_i[s, d]$ when $\configElem_i[s,q]=1$. Formally,
$\forall s,d \text{ s.t. } \configElem_i[s,q]=1$,
\begin{align}
\small
 \moh_i[s, d]:=  
\sum_{\sentry_j} 
 w_{j} \cdot \mathds{1}_{t_j = i \land s_j = x_j = s \land d_j = d \land y_j = q}, 
\end{align}

and zero otherwise.
Note that for this case if $\configElem_i[s,d]=1$, then $\moh_i[s, d]$ is zero.
Every cell of $\mth_i$, represents direct non-local traffic.  That is traffic on its last hop of two hops in time slot $i$. For each two hops 
path $s \rightarrow q \rightarrow d$, it update the cell $\moh_i[s, d]$ when $\configElem_i[q, d]=1$. Formally, 
$\forall s,d \text{ s.t. } \configElem_i[q, d]=1$, 
\begin{align}
\small
 \mth_i[s, d]:=  
\sum_{\sentry_j} 
 w_{j} \cdot \mathds{1}_{t_j = i \land s_j = s \land x_j = q \land d_j = y_j = d}, 
\end{align}
and zero otherwise. Note that for this case if $\configElem_i[s,d] = 1$, then $\moh_i[s, d]$ is zero.
Formally, a traffic schedule, $\ps$, is then defined as
$$
    \ps = \{(\mdl_1,\moh_1,\mth_1), \dots ,(\mdl_v,\moh_v,\mth_v) \}.
$$

Let $\Mdl, \Moh$ and $\Mth$  be the ordered sets of the three types of matrices, $\Mdl = \{\mdl_1, \dots, \mdl_v\}, \Moh = \{\moh_1 \dots, \moh_v\}$ and $\Mth =\{\mth_1, \dots, \mth_v\}$. So, $\Mdl$ is the set of the entire direct local schedule, $\Moh$ is the set of the entire indirect local schedule, and $\Mth$ is the set of the entire direct non-local schedule.

Note that the traffic scheduling explicitly defines what packets are sent in each time slot $i$ and on each link $(k,l)$ s.t $\configElem_i[k,l]=1$, in particular, the \emph{total} amount of bits that are sent on the link $(k,l)$ at time $i$ is: $\hat{m}_i[k,l]=\mdl_i[k,l] + \sum_d  \moh_i[k,d] + \sum_s \mth_i[s,l]$. That is the sum of direct, first-hop, and second-hop traffic. 
We define $\totalDemandMat$ as the \emph{total traffic matrix}:
\begin{definition}[Total traffic matrix]\label{def:Totalmatrix}
Given a traffic schedule, $\ps$, we define the total traffic matrix $\totalDemandMat$ as a matrix where each cell $\totalDemandMat[k,l]$ denotes the \emph{total} amount of bits sent on the link  $(k,l)$ throughout the entire schedule:  $\totalDemandMat[k,l] =\sum_i \hat{m}_i[k,l]$. 
\end{definition}

We say that a pair of topology and traffic schedules $(\ts, \ps)$ is \emph{feasible} only if several conditions hold.
These conditions include, for example, the requirements that second-hop packets can be sent only after the first-hop was sent, every (indirect) packet that is sent on a first-hop needs to be sent on a second-hop, packets at time slot $i$ can only be sent on existing links in the configuration of time slot $i$, the time to send the amount of traffic in a time slot can be no longer than the time slot length, etc. We list these conditions formally in the appendix at \autoref{app:subsec:feasibltySchedule}, and for now, consider only feasible schedules.

In general, we would like our traffic schedules to send all of the demand; we define this formally as a \emph{complete schedule}, 
\begin{definition}[Complete schedules]\label{def:completeSchedule}
Let $M$ be a demand matrix and \ts and \ps a pair of feasible schedules. 
We say that \ts and \ps are \emph{complete} schedules for $M$ if all traffic in $M$ is sent and arrives at the destination.
\begin{align}
 \sum_{i=1}^v\mdl_i+\sum_{i=1}^v\moh_i=\sum_{i=1}^v\mdl_i+\sum_{i=1}^v \mth_i=M   
\end{align}
That is, all traffic in $M$ is sent and arrives at the destination.
\end{definition}

\autoref{fig:example} presents an example for the above concepts. In (a) we show a double stochastic demand matrix $M$,
the value in each cell is 0, $\frac{1}{3}$, or $\frac{2}{3}$ depending on the number of colors in the cell. The sum of each row and column is one.
In \autoref{fig:example} (b), we describe a simple \emph{topology schedule} $(\configElem_1, \configElem_2, \configElem_3)$, which is based on a \bvn matrix decomposition into three permutation matrices. Each $\configElem_i$ is highlighted by thick cells. For each $\configElem_i$ we have $\alpha_i=\frac{1}{3}$.
The \emph{traffic scheduling} uses only single-hop paths, so $\Moh$, and $\Mth$ have only zero matrices and $\mdl_i=\alpha_i \configElem_i$ for each time slot $i$. In this case, \ts and \ps are \emph{complete schedules} since $\sum_i \mdl_i = M$. 

In \autoref{fig:example} (c), we present what we later call a round-robin topology schedule. This schedule (i.e., $(\configElem_1, \configElem_2, \configElem_3, \configElem_4)$ shown in thick cells), is predetermined, oblivious to the demand and rotates between a set of permutations that creates a complete graph \cite{mellette2017rotornet}. Also here $\alpha_i=\frac{1}{3}$. The traffic schedule for this case uses 2-hops paths, as seen in $\Moh$, and $\Mth$. For example, at time slot 1, node 2 sends a packet to node 3 (toward node 1) as shown in $\moh_1[2,1]$, and in time slot 3, node 3 forwards this packet to node 1, as shown in $\mth_3[2,1]$.
In this case, \ts and \ps are also \emph{complete schedules} since $\sum_i \mdl_i + \sum_i \moh_i = M$.

 
\subsection{Problem Definition: The DCT and Throughput of a System }\label{subsec:problemdefinition}
 This section defines the main metric of interest studied in this work.
 In the spirit of previous works, we are interested in the \emph{throughput} of different network designs, or as we denote them here, \emph{systems}. We use the \emph{demand completion time} (DCT) explained below as a proxy to study the throughput.
While previous analytical works studied separately topology and traffic, i.e., either systems with given static topologies \cite{namyar2021throughput}, or dynamic topologies with naive traffic scheduling \cite{bojja2016costlyEclipse,valls2021birkhoff}, or only one specific kind of system \cite{addanki2023mars}, we extend the model to support the integration of \emph{both} topology and traffic schedules.
 To the best of our knowledge, this is the first work that formalizes the joint optimization of both the topology and the traffic scheduling, enabling us to compare different systems: dynamic, static, demand-aware, or demand-oblivious, and with different parameters, on a simple model. 
 
 We take the worst-case approach of a classical \emph{throughput} definition by Jyothi et al.~\cite{jyothi2016measuring} and several followups  \cite{griner2021cerberus,namyar2021throughput,addanki2023mars} and define the \emph{demand completion time} (and accordingly the throughput) of a system, as a function of the \emph{worst-case} demand matrix for that system. 
 Following this definition, and considering, for example, all double stochastic matrices, we can say that a system that has a \emph{system DCT} of $x$ seconds can serve \emph{any} demand matrix in less or equal $x$ seconds. In turn, we can compare different systems by their system DCT (or throughput).

 A system $\sys$ is defined as the set of all possible pairs of schedules $(\ts, \ps)$ that it supports.
 To give some examples, a system  
 could be defined by all schedules with a specific reconfiguration time resulting from the system hardware, or all schedules that use only single-hop routing, all schedules that have a demand-oblivious topology, or all schedules that use only a static topology. 
 In the next section, we will formally define several such known systems. 

 A \emph{topology scheduling algorithm}, $\stalg$, in an algorithm that generates a topology schedule, and a \emph{traffic scheduling algorithm}, $\spalg$ in an algorithm that generates a traffic scheduling. We say that a pair of algorithms belong to a system $\sys$ when their generated schedules (for any demand) are in  $\sys$, i.e.,  $(\stalg, \spalg) \in \sys$.
 We can now define the different types of completion times, denoted as $\dct$,
 starting with the \emph{matrix completion time}.

\begin{definition}[Matrix completion time]
For a given demand matrix $M$ and a pair of schedule algorithms $(\stalg, \spalg)$ that generate two feasible and complete schedules $(\ts, \ps)$, the completion time of the matrix $M$ is defined as the time it takes to execute the schedules, 
 \begin{align*}
     ~\dct(\stalg(M), \spalg(M), M)&= \dctSchedule(\stalg(M)).
  \end{align*}
\end{definition}

Going back to the example of \autoref{fig:example}, in (b), the 
matrix completion time is $3 \cdot \frac{1}{3} + 3 R_b$ where $R_b$ is the reconfiguration time of the schedule in (b). In (c), the 
matrix completion time is $4 \cdot \frac{1}{3} + 4 R_c$ where $R_c$ is the reconfiguration time of the schedule in (c). Therefore, to determine which of the schedules is faster, we must know the relations between $R_b$ and $R_c$, e.g., if they are equal, setup (b) is faster.  

The demand completion time for \emph{a pair of algorithms} is defined as the matrix completion time of their \emph{worst-case} matrix. 

\begin{definition}[Algorithms completion time]
Let $\mathcal{M}$ by the family of all double stochastic matrices. 
Let $(\stalg, \spalg)$ be a pair of schedule algorithms.  
The algorithms' completion time is defined, 
 \begin{align*}
      \dct(\stalg, \spalg, \mathcal{M})&= \max_{M \in \mathcal{M}} \dct(\stalg(M), \spalg(M), M).
  \end{align*}
\end{definition}

We can now finally define the \emph{demand completion time} of a system $\sys$, denoted as \emph{system DCT},

\begin{definition}[System DCT]\label{def:sysDCT}
Let $\mathcal{M}$ by the family of all double stochastic matrices. 
Consider a network design system $\sys$.
The demand completion time of a $\sys$ is defined as the completion time of its best algorithms,  
 \begin{align*}
      ~\dct(\sys, \mathcal{M})&= \min_{(\stalg, \spalg) \in \sys} \dct(\stalg, \spalg, \mathcal{M}).
  \end{align*}
\end{definition}

As noted in previous work \cite{griner2021cerberus}, the DCT has a direct relation to the \emph{throughput} of the system. In particular, it follows that the throughput of a system $\sys$, defined in bits per second, is the reciprocal of the system DCT, $\dct(\sys)$. For complete schedules and double stochastic matrices, we then define 
\begin{align}
    \mathrm{Throughput}(\sys, \mathcal{M}) = \frac{1}{\dct(\sys, \mathcal{M})}.
\end{align}

In the rest of the paper, we are set to study the system DCT of different systems.

We define three major systems, \csn, \rsn, and \msn, and explore their properties and tradeoffs in the journey to better understand the DCT and throughput of different systems. We show that our model from \autoref{sec:modelAndProbDefinition} fits known systems like \csn and \rsn, enabling us to study their DCT and derive familiar and new results, in particular, we use these results in our later analysis on \msn.
Recall that our demand is represented by an $n\times n$ saturated doubly stochastic matrix $M$, which means that $\Wcard{M} = nr$ and $r=1$. 
However, when appropriate, we write these parameters explicitly so the bounds and results are generalized to \emph{scaled} doubly stochastic matrices that are not necessarily saturated, i.e., $\Wcard{M} < nr$.
In our analysis of the system DCT, we assume that the delay is only due to switch reconfiguration 
and transmission times. 
We neglect other factors, such as packet loss and congestion 
and focuses on the basic tradeoffs.

\section{The Demand-Aware Topology System: {\csn}}
\label{sec:NetwrokDesignsCSN}

We first consider in our formal model the demand completion times of the well-established system, \csn.
The \csn system is \emph{demand-aware}, meaning that the topology schedule is computed based on a decomposition of the demand matrix $M$.  The traffic schedule of the \csn, in turn, is very simple and supports only single-hop paths.
Examples of networks with this type of system are FireFly \cite{hamedazimi2014firefly}, Eclipse \cite{bojja2016costlyEclipse}, Solstice \cite{solsticeliu2015scheduling}, and  ProjecTor \cite{ghobadi2016projector}.

Recall that the topology schedule \ts is a list of matchings $\configElem_i$ and coefficients $\alpha_i$. In \csn we assume that \ts is generated through a \bvn~decomposition, and for each $i$ in the decomposition we simply set the switch configuration $\configElem_i=P_i$.
The specific \bvn algorithm can vary, but the goal is to minimize the number of reconfigurations (in the simulation, we use the state-of-the-art algorithm \cite{valls2021birkhoff}). 
Since $\ts$ is based on a \bvn~decomposition and $M$ is doubly stochastic the following will hold, $\sum_{i=1}^{\card{\ts}} \beta_i P_i = M$ and the sum of coefficients is one, that is, $\sum_{i=1} \beta_i=1$. 
The transmission time is then $\alpha_i = \beta_i/r$.  
In \csn and its topology schedules, we assume that the reconfiguration time between matchings is a constant, denoted $\crec$, and
is identical to all times, i.e., $\forall i, \;\; R_i = \crec$.
The reconfiguration time for \bvn systems is considered \emph{high}, in particular in comparison to demand-oblivious topology schedules. This is assumed for two reasons: i) either the flexibility that the topology requires (enabling any possible matching) results in a larger time to set the configuration and/or ii) the time to \emph{compute} the next matching (usually in some greedy way) requires much more time than in a precalculated matching. 

The traffic scheduling of \csn, $\ps$, supports only single-hop paths, which means
each element in the direct local matrix equals $\mdl_i[k,l]=\beta_i$ where $\configElem_i[k,l]=1$ and zero otherwise.
In other words, the traffic schedule is the same as the topology schedule in terms of active matrix cells. 
Since we only use direct local traffic, $\Moh$ and $\Mth$ are sets of zero matrices. Recall that the example of \autoref{fig:example} (b) showed topology and traffic schedules of \csn.

In \csn, both the topology and the traffic schedules are of specific types. Let $\stalg=\bvn$ be an algorithm that decomposes $M$ into $v=\card{\ts}$ matching (i.e., permutation matrices) and generates the topology schedule.
The algorithm that generates the traffic schedule that sends only direct local traffic is denoted as $\spalg=\direct$.
We can define \csn formally in the following manner.
\begin{definition}[The \bvn-System, \csn ]\label{def:main:csnSysDef}
       \csn is defined as a system that must use a topology schedule $\bvn$. 
\end{definition}

Following Eq. \eqref{eq:scheduleGeneral} we get that for a  double stochastic matrix $M$ and $r=1$ the \emph{matrix} completion time is
\begin{align}\label{eq:damatDCT}
    &\dctda(\bvn,\direct,M)=  \sum_{i=1}^v (\alpha_i  +\crec) =1 +v \crec 
\end{align}
We note that the matrix DCT is linear in the number of reconfigurations $v$.
It will be minimal when the number of reconfigurations is one, namely a single 
permutation matrix. 
Regarding the worst-case matrix, the \bvn decomposition of any \emph{positive} demand matrix (i.e., where all elements other the diagonal are positive) requires at least $n-1$ permutations and so $n-1$ reconfigurations to be fully transmitted, one such example is the \emph{uniform} matrix. We can therefore state the following observation on the system DCT of \csn.
\begin{observation}[The system DCT of \csn ]\label{obs:main:DAsysDCTBound}
The system DCT of \csn for $r=1$ is at least
    \begin{align}\label{eq:dasysDCT}
    \dctda(\csn, \mathcal{M}) &\ge 1 +(n-1) \crec
\end{align}
\end{observation}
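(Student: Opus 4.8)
The plan is to exploit the min--max structure of the system DCT (\autoref{def:sysDCT}) together with the explicit matrix completion time of \csn given in Eq.~\eqref{eq:damatDCT}. Since $\dctda(\csn,\mathcal{M}) = \min_{(\stalg,\spalg)\in\csn} \max_{M\in\mathcal{M}} \dctda(\stalg(M),\spalg(M),M)$, to lower bound this quantity I only need to exhibit a single ``hard'' demand matrix $M^\star$ whose completion time is at least $1+(n-1)\crec$ for \emph{every} admissible pair of algorithms. Because Eq.~\eqref{eq:damatDCT} shows that the \csn matrix completion time equals $1+v\crec$, where $v=\card{\ts}$ is the number of permutation matrices in the decomposition, it suffices to prove that every \bvn~decomposition of $M^\star$ must use $v\ge n-1$ permutations.

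First I would take $M^\star$ to be the uniform matrix $M(n-1)$, which is doubly stochastic and \emph{positive} in the sense that every off-diagonal entry equals $\tfrac{1}{n-1}>0$ while the diagonal is zero. The core of the argument is then a covering claim: any decomposition $\sum_{i=1}^v \beta_i P_i = M^\star$ with $P_i\in P_\pi$ and $\beta_i>0$ requires $v\ge n-1$. To establish this I would fix an arbitrary row index $k$ and track where each permutation sends it. Each $P_i$ contributes exactly one $1$ in row $k$, at some column $\sigma_i(k)$, so $M^\star[k,j]=\sum_{i:\sigma_i(k)=j}\beta_i$. Since $M^\star[k,j]>0$ for all $n-1$ columns $j\ne k$, every such column must appear as $\sigma_i(k)$ for at least one index $i$; thus the multiset $\{\sigma_i(k)\}_{i=1}^v$ exhausts all $n-1$ off-diagonal columns. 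As each index $i$ yields a single value $\sigma_i(k)$, covering $n-1$ distinct columns forces $v\ge n-1$ by pigeonhole.

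Finally I would assemble the bound: for $M^\star$, every pair $(\stalg,\spalg)\in\csn$ yields $v\ge n-1$, and hence, by Eq.~\eqref{eq:damatDCT}, a matrix completion time of at least $1+(n-1)\crec$. Therefore $\max_{M\in\mathcal{M}}$ is at least this value for every algorithm pair, and taking the $\min$ over pairs preserves the inequality, giving $\dctda(\csn,\mathcal{M})\ge 1+(n-1)\crec$.

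The main obstacle is the covering argument in the second step. It is conceptually simple but must be stated carefully, because one has to argue that the \emph{images} of a single fixed row under the chosen permutations, rather than the permutations themselves, are what must account for all positive positions; the lower bound on $v$ then follows precisely because each permutation pins row $k$ to exactly one column. A secondary subtlety worth flagging is that this yields a lower bound on a $\min$ only because the witness $M^\star$ is universal: it forces $v\ge n-1$ regardless of which particular \bvn~algorithm an admissible pair employs, so no choice of topology scheduler can evade it.
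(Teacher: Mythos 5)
Your proposal is correct and follows essentially the same route as the paper: exhibit the uniform matrix as the hard instance, argue that any \bvn\ decomposition of a positive (off-diagonal) matrix needs at least $n-1$ permutations, and conclude via Eq.~\eqref{eq:damatDCT}. The only difference is that the paper merely asserts the ``at least $n-1$ permutations'' fact, whereas you supply the row-covering pigeonhole argument for it, which is a valid and welcome addition.
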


\section{The Demand-Oblivious System: {\rsn}}
\label{sec:NetwrokDesignsRSN} 
This system is based on a \emph{demand-oblivious} topology
and is denoted as the round-robin systems, \rsn. 
The \rsn system is motivated by recent proposals for datacenters network design like \rotornet \cite{mellette2017rotornet}, Opera \cite{mellette2020expandingOpera}, and Sirius \cite{ballani2020sirius}.
These types of networks are also denoted as 'rotor'-based designs.

In \rsn the system topology is composed of a pre-determined and pre-computed set of $n-1$ matchings (i.e., permutation matrices, switch configurations) 
whose union forms a \emph{complete graph}. 
Let $\cycleSched=\{\cycleSchedElem^1 ,...,\cycleSchedElem^{n-1 } \}$ be the set of switch configurations denoted as \emph{cycle configuration}, where each $\cycleSchedElem^i$ is a permutation matrix.

In \rsn, the topology scheduler $\ts$ transitions from one switch configuration in $\cycleSched$ to another in a periodic, round-robin and fixed manner, 
oblivious to the demand. The period of $n-1$ switch configurations that emulate the complete graph is called a \emph{cycle}. 
Each configuration $\cycleSchedElem^i$ is held for a time $\delta$, i.e., $\forall i, \alpha_i=\delta$.
The reconfiguration time in the topology schedule is also constant and denoted as $\rrec$, $\forall i, R_i=\rrec$. An important property of the \rsn topology schedule is the \emph{duty cycle} as defined in prior work \cite{mellette2017rotornet,mellette2020expandingOpera} to be $\eff$ where $\eff= \frac{\delta}{ \delta+\rrec}$. This is the fraction of time spent sending data, $\delta$, to the total time spent in a slot, $\delta +\rrec$. 
We denote the algorithm that generates such a round-robin topology schedule as $\stalg=\rr$.
We define \rsn formally in the following manner.
\begin{definition}[The Round-Robin System, \rsn]\label{def:main:rsnSysDef}
    \rsn is defined as a system that must use $\rr$ as its topology schedule. 
\end{definition}

We discuss next the set of traffic schedulers that \rsn is may choose from.
Up-to-date, there is no single, optimal, well-accepted traffic scheduling algorithm for \rsn.
Previous work \cite{ballani2020sirius} considered 2-hops Valiant routing \cite{valiant1981universal}. While this approach uses oblivious routing and is easy to implement, it is not optimal regarding worst-case demand completion times for \rsn. A different example, \rotornet, uses RotorLB  \cite{mellette2017rotornet}, a decentralized protocol that uses adaptive routing and allows nodes to negotiate routing decisions. Briefly, in RotorLB, single hop is preferred over two hops, and  nodes only help other nodes if they have available capacity, thus spreading the load more evenly. The bounds of this approach are not clear.
Nevertheless, here we present several algorithms and bounds that enable us to find the system DCT of \rsn formally.

\subsection{The system DCT of \rsn}
%

 
Recall that in our model, traffic can be sent \emph{directly} (in a single hop) \emph{or} \emph{indirectly} (using two hops). A traffic scheduling algorithm for \rsn should decide which traffic in the demand matrix $M$ is routed directly using a single hop and which traffic is routed indirectly in two hops on the \rsn topology schedule $\rr$.
Our first result is a lower bound for the system DCT of \rsn,
by a lower bound on the completion time of a permutation demand matrix for any traffic scheduling algorithm.

\begin{restatable}{theorem}{LowerBoundPerm}[Lower bound for \rsn system DCT]\label{thm:LowerBoundPerm}
Let $\spalg$ be any traffic scheduling algorithm for \rsn that generates a complete schedule. Let $P \in P_{\pi}$ be any derangement permutation matrix, then we have   
\begin{align}
 \dctrot(\rr, \spalg, P) &\geq (2-\frac{2}{n})\frac{1}{ \eff r} \frac{ \Wcard{P}}{n}
\end{align}
\end{restatable}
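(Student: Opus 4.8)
The plan is to bound the completion time $T:=\dctrot(\rr,\spalg,P)$ from below by playing two capacity constraints against each other: a global one coming from the duty cycle, and a per-destination one that exploits the fact that a permutation demand gives every destination a single direct predecessor. I would fix an arbitrary complete schedule \ps for $P$ and classify its transmitted volume by weight into a direct part and an indirect (first-hop) part. Since the schedule is complete, every indirect packet is relayed exactly once, so $\Wcard{\Mth}=\Wcard{\Moh}$ and $\Wcard{\Mdl}+\Wcard{\Moh}=\Wcard{P}$; consequently the total number of transmitted bits is
\begin{equation*}
B=\Wcard{\Mdl}+\Wcard{\Moh}+\Wcard{\Mth}=\Wcard{P}+\Wcard{\Moh}.
\end{equation*}
This identity is the bridge that converts ``how much traffic is forced to take two hops'' into ``how much total link capacity is consumed.''

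For the global constraint I would observe that at any transmitting instant exactly one matching is active, offering $n$ links of rate $r$, and that the system transmits during only an $\eff$-fraction of the wall-clock time; hence over $[0,T]$ the deliverable volume obeys $B\le nr\,\eff\,T$. For the per-destination constraint I would use that $P\in P_\pi$ is a permutation: each destination $d$ has a unique source $s$ with $P[s,d]>0$, so the only link carrying \emph{direct} traffic into $d$ is $(s,d)$. In the round-robin topology this link lies in exactly one of the $n-1$ cycle configurations, so it is active only an $\tfrac{\eff}{n-1}$-fraction of the time, and the direct volume delivered to $d$ is at most $\tfrac{r\eff}{n-1}T$. Summing over the $n$ destinations bounds the entire direct schedule, $\Wcard{\Mdl}\le\tfrac{nr\eff}{n-1}T$, and therefore forces the indirect volume up, $\Wcard{\Moh}=\Wcard{P}-\Wcard{\Mdl}\ge\Wcard{P}-\tfrac{nr\eff}{n-1}T$.

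Combining the pieces is then routine: substituting the indirect lower bound into the bit identity gives $B\ge 2\Wcard{P}-\tfrac{nr\eff}{n-1}T$, and chaining this with the global bound $B\le nr\eff T$ yields
\begin{equation*}
nr\eff T\;\ge\;2\Wcard{P}-\frac{nr\eff}{n-1}\,T,
\end{equation*}
which rearranges to $nr\eff T\cdot\tfrac{n}{n-1}\ge 2\Wcard{P}$ and hence to the claimed bound $T\ge\bigl(2-\tfrac{2}{n}\bigr)\tfrac{1}{\eff r}\tfrac{\Wcard{P}}{n}$.

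I expect the main obstacle to be making the per-destination active-time bound fully rigorous. The statement ``configuration $c$ is active for an $\tfrac{\eff}{n-1}$-fraction of $[0,T]$'' holds exactly only over a whole number of cycles; for a general $T$ there is a boundary term of up to one slot, so the clean bound $\Wcard{\Mdl}\le\tfrac{nr\eff}{n-1}T$ is really a claim in the averaged/many-cycle regime (equivalently, one aligns $T$ to a cycle boundary, or absorbs the additive slot into the statement). Pinning down this accounting --- together with reading $\Wcard{\Mth}=\Wcard{\Moh}$ and $\Wcard{\Mdl}+\Wcard{\Moh}=\Wcard{P}$ directly off the feasibility and completeness conditions --- is where the real care is needed; once the two capacity inequalities are in hand, the rest is one line of algebra.
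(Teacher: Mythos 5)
Your proof is correct, but it takes a genuinely different route from the paper's. The paper parametrizes a complete schedule by its \emph{skewness} $\skew$ (the fraction of demand sent on one hop), first lower-bounds the DCT by $(2-\skew)\frac{1}{\eff r}\frac{\Wcard{M}}{n}$ via the maximal cell of the total traffic matrix (\autoref{thm:lowerBoundRSN}), and then separately proves the upper bound $\skew\le\frac{2}{n}$ for permutations (\autoref{thm:lowRSNApp}) by an exchange/balancing argument: it argues an optimal schedule sends demand cells only directly and then takes $\argmin_\skew\max\bigl(\frac{\skew}{v},\frac{2(1-\skew)}{n-\inActCellMax-v}\bigr)$. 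You replace that second step with a per-destination capacity constraint, $\Wcard{\Mdl}\le\frac{nr\eff}{n-1}T$, which holds for \emph{every} complete schedule because a derangement permutation gives each destination a unique direct incoming link, active in exactly one of the $n-1$ cycle configurations; combined with the global constraint $\Wcard{\Mdl}+\Wcard{\Moh}+\Wcard{\Mth}\le nr\eff T$ and the completeness identities $\Wcard{\Mdl}+\Wcard{\Moh}=\Wcard{P}$, $\Wcard{\Mth}=\Wcard{\Moh}$, one line of algebra gives the bound. What your version buys is that it needs no reasoning about what an \emph{optimal} schedule would do (no exchange argument, no $\argmin$), and it makes explicit that the bound is just the intersection of two linear capacity constraints; what the paper's version buys is the general $M(v)$/$\inActCellMax$ statement (\autoref{cor:lowerBoundRSNMV}), which is reused elsewhere, whereas your per-destination argument as written is tied to the permutation case (though it generalizes to $M(v)$ by counting $v$ direct links per destination). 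The slot-alignment caveat you flag is real but equally present in the paper's own accounting (the cycle count $x=\max(M_{i,j})/(\delta r)$ in \autoref{thm:defRotDCTGenApp} is treated as continuous), and the paper resolves it by allowing $\delta$ to be small or dynamic, so it does not disadvantage your argument relative to theirs.
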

We provide details of the proof in the appendix in \autoref{sec:LowerBoundPerm}. However, for brevity, we provide the following proof overview.

\para{Proof overview for \autoref{thm:LowerBoundPerm}}
Our proof has two main parts. We first prove that the lower bound is a function of the skew parameter we denote as $\skew$, then prove an upper bound on this parameter.
The skew parameter, $\skew$, denotes the fraction of traffic sent via a single hop in a given traffic schedule, and $1-\skew$ is the fraction that is sent on two hops.
It follows that $(2-\skew)$ is the average path length of a given schedule.
We give a formal definition for $\skew$ in the appendix using \autoref{def:skew}.  
Following our definition,  In \autoref{thm:lowerBoundRSN}, we show that the DCT of \rsn is lower bounded by $(2-\phi)\frac{1}{ \eff r} \frac{ \Wcard{M}}{n}$. 

To find a lower bound on the DCT, we need to increase $\skew$ (and to decrease the average path length). We, therefore, in \autoref{thm:lowRSNApp}, prove the upper bound for $\skew$ with any optimal traffic scheduler \spalg on any demand permutation matrix in $M(v)$.
As part of the proof of our theorem, we state the following corollary of \autoref{thm:lowerBoundRSN} and \autoref{thm:lowRSNApp}.

\begin{corollary}[Lower bound for $M(v)$ on \rsn]\label{cor:lowerBoundRSNMV}
Let $\spalg$ be any traffic scheduling algorithm for \rsn that generates a complete schedule. Given an $n\times n$ matrix $M\in M(v)$, the \emph{lower bound} for the DCT of \rsn is
   \begin{align}
    \dctrot(\rr,\spalg, M(v))\geq (2-\frac{2v}{n-1+v})\frac{1}{ \eff r} \frac{ \Wcard{M}}{n}
\end{align} 
\end{corollary}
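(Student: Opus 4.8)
The plan is to derive the bound in two stages, mirroring the decomposition already announced in the proof overview, and then simply compose them. In the first stage I would invoke \autoref{thm:lowerBoundRSN}, which expresses the completion time of any complete \rsn schedule purely in terms of its skew $\skew$: namely $\dctrot(\rr,\spalg,M)\ge (2-\skew)\frac{1}{\eff r}\frac{\Wcard{M}}{n}$. The intuition I would rely on is a conservation-of-bits count: direct-local traffic is transmitted once while each unit of two-hop traffic occupies a link twice (first hop plus second hop), so the total number of transmitted bits is $(2-\skew)\Wcard{M}$; dividing by the aggregate link capacity, which is $n$ active links at rate $r$ available for an $\eff$ fraction of the time, yields the stated lower bound on the elapsed time.

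The second stage is to control $\skew$ from above for the family $M(v)$. Here I would use \autoref{thm:lowRSNApp}, which asserts that for every $M\in M(v)$ and every complete (in particular, every throughput-optimal) scheduler $\spalg$, the fraction of directly routed traffic satisfies $\skew\le \frac{2v}{n-1+v}$. Because the coefficient $2-\skew$ is strictly decreasing in $\skew$ and the remaining factor $\frac{1}{\eff r}\frac{\Wcard{M}}{n}$ is positive, substituting the largest admissible value of $\skew$ produces the smallest value of the right-hand side, and this smaller quantity remains a valid lower bound for every scheduler. Combining the two stages gives $\dctrot(\rr,\spalg,M(v))\ge (2-\frac{2v}{n-1+v})\frac{1}{\eff r}\frac{\Wcard{M}}{n}$, which is the claim. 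As a sanity check, setting $v=1$ (so that $M(v)$ is a single derangement permutation) recovers exactly the factor $2-\frac{2}{n}$ of \autoref{thm:LowerBoundPerm}, and setting $v=n-1$ (the uniform matrix) gives a skew of $1$ and the ideal bound $\frac{1}{\eff r}\frac{\Wcard{M}}{n}$ attained when all traffic can be routed directly.

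The genuinely hard step is the skew bound of \autoref{thm:lowRSNApp}; the corollary itself is only the composition above. To prove that bound I would exploit the rigid structure of the round-robin cycle: its $n-1$ derangement matchings have union equal to the complete directed graph, so over one cycle every ordered pair $(i,j)$ is adjacent in exactly one time slot. For $M(v)$ each node has only $v$ nonzero destinations, which sharply limits the number of slots usable for direct delivery relative to the node's total demand, forcing most traffic onto two-hop paths as $v$ shrinks. The delicate accounting, and the source of the precise form $\frac{2v}{n-1+v}$, is that the second hops of indirectly routed flows also consume each relay node's single outgoing link per slot, so direct and relayed traffic compete for the same scarce capacity; quantifying this competition across a cycle, rather than bounding the direct opportunities in isolation, is where the main effort lies.
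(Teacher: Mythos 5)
Your proof follows exactly the route the paper takes: \autoref{cor:lowerBoundRSNMV} is obtained by composing the skew-parametrized lower bound of \autoref{thm:lowerBoundRSN} (specialized to $\inActCellMax=1$, i.e., only the diagonal inactive) with the skew upper bound of \autoref{thm:lowRSNApp}, and your sanity checks at $v=1$ and $v=n-1$ match the paper's remarks. One quantifier in your write-up is wrong, though, and it is worth repairing: you assert that \autoref{thm:lowRSNApp} bounds the skew of \emph{every} complete scheduler, but it only bounds the skew of the DCT-minimizing schedule $\ps^*$. The claim is false for arbitrary schedulers --- the $\direct$ scheduler on $M(v)$ has $\skew=1>\frac{2v}{n-1+v}$ whenever $v<n-1$ --- so the chain ``$\dctrot\ge(2-\skew)C$ and $\skew\le\frac{2v}{n-1+v}$, hence $\dctrot\ge(2-\frac{2v}{n-1+v})C$'' does not go through directly for every $\spalg$. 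The fix is one line: apply the two theorems to the optimal schedule to get $\dctrot(\rr,\spalg^*,M(v))\ge(2-\skew^*)C\ge(2-\frac{2v}{n-1+v})C$, and then use that any scheduler's completion time is at least the optimum's. (Equivalently, one can observe that for $\skew$ above the balance point $\frac{2v}{n-1+v}$ the bottleneck shifts to the direct cells and the lower bound only grows, which is precisely the content of the $\argmin$ step in the paper's proof of \autoref{thm:lowRSNApp}.) Your closing discussion of why $\frac{2v}{n-1+v}$ arises --- direct and relayed traffic competing for the same per-slot outgoing capacity --- is consistent with the paper's averaging argument over the $nv$ direct cells and the $n(n-1-v)$ relay cells.
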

This results from combining the lower bound on the DCT of \rsn with the upper bound on $\skew$ for matrices from $M(v)$.
Finally, we note that the worst case if for $M(1)$, i.e., $v=1$, that represents the case of permutation matrix $P\in P_\pi$, and imply $\skew=\frac{2}{n}$ which proves Theorem \ref{thm:LowerBoundPerm}. 

Next, we consider a traffic scheduler, which, surprisingly, achieves the lower bound we stated before for any derangement permutation matrix $P\in P_\pi$ on \rsn. 
We denote this as the permutation scheduler \algOnePerm.
Briefly, \algOnePerm, generates a \emph{detailed scheduled} which equally distributes the demand from the permutation demand cells such that each connection will eventually send a total of $\frac{2}{n}$ (normalized) bits. We provide a full description of this algorithm in the appendix in \autoref{app:sec:AlgOnePerm}. This includes its pseudo-code in \autoref{alg:onePremSched} and a proof for the following lemma.

\begin{restatable}{lemma}{feasibleApp}[DCT for the \algOnePerm Algorithm]\label{lemma:feasibleApp}
Let $P\in P_\pi$ be an $n \times n$ scaled permutation matrix. 
Then, the schedule generated from the algorithm $\spalg=\algOnePerm$, is feasible, complete, and has a demand completion time, $\dctrot$ of,
\begin{align}
 \dctrot(\rr,\algOnePerm,P)= (2-\frac{2}{n})\frac{1}{ \eff r} \frac{ \Wcard{P}}{ n}
\end{align}
\end{restatable}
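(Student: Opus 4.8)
The plan is to exhibit the detailed schedule produced by \algOnePerm explicitly and then verify the three required properties in turn: feasibility, completeness, and the exact value of $\dctrot$. The scheduler is a Valiant-style two-hop load balancer specialized to a single derangement permutation: writing $d=\pi(s)$ for the unique destination of source $s$ and $\lambda=\Wcard{P}/n$ for the common value of the nonzero cells, \algOnePerm splits each demand $M[s,d]=\lambda$ into $n$ equal chunks of $\lambda/n$, one per potential relay $q$, and routes chunk $q$ along $s\to q\to d$. The two degenerate relays $q=s$ and $q=d$ both collapse to the direct edge $s\to d$, so exactly $2\lambda/n$ of each flow travels directly and the remaining $(n-2)\lambda/n$ is spread over the $n-2$ genuine two-hop paths; this is precisely the skew value $\skew=\tfrac{2}{n}$ used in \autoref{thm:LowerBoundPerm}.

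The first substantive step is to show that the total traffic matrix $\totalDemandMat$ is perfectly balanced, i.e. every off-diagonal entry $\totalDemandMat[a,b]$ equals $2\lambda/n$. I would argue this by a short case analysis on whether $b=\pi(a)$. If $b=\pi(a)$, the edge carries only the $2\lambda/n$ of direct traffic of the flow out of $a$ and no relayed traffic; if $b\neq\pi(a)$, it carries $\lambda/n$ as the first hop of $a$'s flow (relay $b$) plus $\lambda/n$ as the second hop of the flow $\pi^{-1}(b)\to b$ (relay $a$), again totalling $2\lambda/n$. Completeness in the sense of \autoref{def:completeSchedule} then follows by summing the chunk decomposition per source–destination pair: for the pair $(s,d)$ the direct and first-hop contributions sum to $\tfrac{2\lambda}{n}+(n-2)\tfrac{\lambda}{n}=\lambda=M[s,d]$, and symmetrically the direct and second-hop contributions sum to $\lambda$, which are exactly the two equalities of \autoref{def:completeSchedule}.

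Given balance, the completion time reduces to a capacity-counting argument. The total demand-weighted work is $\Wcard{P}\,(2-\tfrac{2}{n})$ bit-hops, namely the average path length $2-\skew$ times the total weight, while the usable capacity per unit of wall-clock time is $n\,\eff\,r$ ($n$ parallel edges, each transmitting at rate $r$ for a $\eff$ fraction of the time). Dividing work by capacity rate yields $\dctrot=\tfrac{\Wcard{P}(2-2/n)}{n\,\eff\,r}=(2-\tfrac{2}{n})\tfrac{1}{\eff r}\tfrac{\Wcard{P}}{n}$, which is exactly the claimed value. The one thing this division silently assumes is that the schedule wastes no capacity, and it is precisely the uniform $2\lambda/n$ loading of every edge that guarantees there is no slack slot to fill.

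The main obstacle, and where I expect most of the work to sit, is feasibility in the sense of \autoref{app:subsec:feasibltySchedule} — in particular the precedence constraint that each indirect packet's second hop ($\mth$) be sent only after its first hop ($\moh$). Since the round-robin order visits the relay edge $q\to d$ before $s\to q$ for some relays within a single cycle, a naive one-cycle assignment violates precedence; I would therefore schedule the balanced load so that, for every relayed chunk, the second hop lands in a strictly later slot than its first hop (e.g.\ by front-loading first hops and deferring the matching second hops across cycles), and then check that this reordering never creates a slot whose edge load exceeds the per-slot budget $\delta r$. Confirming that this precedence-respecting assignment still packs the balanced load with no idle slot — so that the capacity-counting computation is met with equality rather than merely as an inequality, matching the lower bound of \autoref{thm:LowerBoundPerm} — is the delicate point, and is exactly where the explicit construction and slot ordering of \autoref{alg:onePremSched} must be invoked.
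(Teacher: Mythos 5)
Your proposal is correct and follows essentially the same route as the paper's proof: the same two-cycle Valiant-style schedule with uniform $2/n$-per-edge loading, completeness by summing the per-relay chunks, and the same resolution of the precedence constraint (all first hops placed in the first cycle, all second hops in the second). The only cosmetic difference is that you obtain the completion time by dividing total bit-hops by aggregate capacity $n\eff r$, whereas the paper counts $2(n-1)$ slots of length $\delta=\frac{1}{n}$ directly; given the uniform per-edge loading you establish, these are the same computation.
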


However, \algOnePerm only works for a single permutation. To find the system DCT, we need an algorithm for a general double stochastic matrix.
We next consider an algorithm that builds a schedule using \emph{multiple permutations}, we denote this scheduler as $\rrbvn$. This algorithm works in the following way: It decomposes any double stochastic demand matrix $M$ into a set of \emph{permutation matrices} (regardless of how many) and sends each using the permutation scheduler, \algOnePerm. The pseudo-code for this algorithm appears in \autoref{alg:rrbvn}. 
 Formally, we prove the following result about the DCT in the \rsn system, which matches the lower bound of \autoref{thm:LowerBoundPerm}.  

\begin{restatable}{theorem}{upperRSNBVN}[Upper bound for \rsn system DCT]
\label{thm:upperRSNBVN}
Let $M$ be a scaled doubly stochastic matrix. Then, using the traffic scheduling algorithm, $\rrbvn$ the matrix demand completion time in \rsn is 
\begin{align}
 \dctrot(\rr,\rrbvn, M) &= (2-\frac{2}{n})\frac{1}{ \eff r} \frac{\Wcard{M}}{n}.
\end{align}
\end{restatable}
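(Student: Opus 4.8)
The plan is to reduce the general (scaled) doubly stochastic case to the single-permutation case already settled by Lemma~\ref{lemma:feasibleApp}, and then to exploit the additivity of both the transmitted weight and the completion time over a \bvn decomposition. The algorithm $\rrbvn$ first writes $M=\sum_{i=1}^{v}\beta_i P_i$ with each $P_i$ a derangement permutation matrix and $\beta_i>0$, and then schedules the traffic of every scaled permutation $\beta_i P_i$ using $\algOnePerm$ on the fixed, demand-oblivious round-robin topology $\rr$. Since Lemma~\ref{lemma:feasibleApp} already pins down the completion time of $\algOnePerm$ on a single scaled permutation exactly, the remaining work is to argue that running these $v$ per-permutation sub-schedules back-to-back on the continuously cycling $\rr$ topology stays feasible and complete, and that the overall completion time is exactly the sum of the per-permutation completion times.

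First I would record the weight-additivity identity. Summing $M=\sum_i\beta_i P_i$ cell by cell gives $\Wcard{M}=\sum_{k,l}M[k,l]=\sum_i\beta_i\sum_{k,l}P_i[k,l]=\sum_i\Wcard{\beta_i P_i}$, since $\Wcard{P_i}=n$ and the weights add irrespective of how the permutations overlap. Next I would invoke Lemma~\ref{lemma:feasibleApp} on each $\beta_i P_i$: the resulting $\algOnePerm$ sub-schedule is feasible and complete with $\dctrot(\rr,\algOnePerm,\beta_i P_i)=(2-\tfrac{2}{n})\tfrac{1}{\eff r}\tfrac{\Wcard{\beta_i P_i}}{n}$. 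I would then assemble the global schedule as the concatenation of these $v$ sub-schedules on the common topology. Completeness is immediate, as the union of the sub-schedules delivers $\sum_i\beta_i P_i=M$, so $\sum_i\mdl_i+\sum_i\moh_i=M$ in the sense of Definition~\ref{def:completeSchedule}; feasibility follows because $\rr$ is periodic and oblivious, so each sub-schedule can resume from wherever the previous one left off in the cycle without violating any link or hop-ordering constraint, and no reconfiguration is double-counted because the duty-cycle factor $\eff$ already charges each slot its own reconfiguration time. Adding the lengths then yields
\begin{align*}
\dctrot(\rr,\rrbvn,M) &= \sum_{i=1}^{v}\dctrot(\rr,\algOnePerm,\beta_i P_i) \\
&= \Bigl(2-\tfrac{2}{n}\Bigr)\tfrac{1}{\eff r}\frac{\sum_i\Wcard{\beta_i P_i}}{n}
 = \Bigl(2-\tfrac{2}{n}\Bigr)\tfrac{1}{\eff r}\frac{\Wcard{M}}{n}.
\end{align*}

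The hard part will be justifying the exact equality rather than a mere $\le$: a naive concatenation could waste link-slots if a sub-schedule ended mid-cycle and the next were forced to restart at a cycle boundary. The way around this is that Lemma~\ref{lemma:feasibleApp} already delivers each $\algOnePerm$ sub-schedule as a perfectly packed schedule that meets the conservation bound---its completion time equals the total number of link-bits it injects divided by the effective link rate $n r\eff$---so concatenating them introduces no idle link-slots and the per-permutation times sum exactly. Equivalently, I could sidestep the boundary bookkeeping altogether by arguing directly from conservation: $\algOnePerm$ routes each permutation with single-hop fraction $\tfrac{2}{n}$, hence average path length $2-\tfrac{2}{n}$, so the total link-bits injected by $\rrbvn$ is $(2-\tfrac{2}{n})\Wcard{M}$; dividing by the round-robin topology's effective throughput $n r\eff$ reproduces the claimed value. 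Matching the lower bound of Theorem~\ref{thm:LowerBoundPerm} would then certify that $\rrbvn$ is optimal for \rsn.
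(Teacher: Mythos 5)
Your proposal is correct and follows essentially the same route as the paper's proof: \bvn-decompose $M$ into scaled permutations, apply Lemma~\ref{lemma:feasibleApp} to each, and sum the completion times using weight additivity $\Wcard{M}=\sum_i\Wcard{\beta_i P_i}$. Your concern about concatenation wasting link-slots is the same issue the paper addresses by adjusting the slot hold time $\delta$ per sub-schedule, so the two arguments coincide in substance.
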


\begin{proof}
Let $M$ be a $\lambda$ scaled doubly stochastic matrix. Using \bvn, we can decompose a scaled double stochastic matrix to $v$ different matchings. 
We get a list of $v$ coefficients that sum to $\lambda$, that is, $\sum_{i=1}^v\beta_i=\lambda$, and a list of $v$ permutation matrices $P_i$. 
We can then schedule each matching separately on \rsn using the algorithm \algOnePerm, as shown in \autoref{lemma:feasibleApp}.~
Recall that the sum of the product of the coefficients $\beta_i$ and permutation matrices $P_i$ is equal to the original traffic matrix, i.e.,  $\sum_{i=1}^v\beta_iP_i=M$. It also holds for the sum of the matrix weights $\sum_{i=1}^v\Wcard{\beta_iP_i}=\Wcard{M}$.
The DCT of \rrbvn is the sum of the DCTs of all the different permutations, that is
\begin{align*}
  \dctrot(\rr,\rrbvn,M)&=
  \sum_{i=1}^v \dctrot(\rr,\algOnePerm,\beta_iP_i) \\
  &=\sum_{i=1}^v  (2-\frac{2}{n})\frac{1}{ \eff r} \frac{ \Wcard{\beta_i P_i}}{n} \\
  &=   (2-\frac{2}{n})\frac{1}{ \eff r n}  \sum_{i=1}^v  \Wcard{\beta_i P_i}\\
  &= (2-\frac{2}{n})\frac{1}{ \eff r} \frac{ \Wcard{M}}{n}
\end{align*}
We would note that the length of each schedule changes. We, therefore, need to change the slot hold time, $\delta$, accordingly for each sub-scheduled permutation matrix so that the schedule finishes at the correct time.
Since the sum of all schedules achieves the value stated in the theorem, BvN will work for any doubly stochastic matrix the theorem follows. 
\end{proof}
We note that our proof and \rrbvn algorithm assumes that \rsn can use a dynamic value for $\delta$. This is very useful for the purpose of our proof and bounds; however, it might not be practically possible. Another option is to use a very small value for $\delta$, (which is indeed the case for related work \cite{mellette2017rotornet,mellette2020expandingOpera}). This would mean that \rsn might spend a short time idle and not transmitting at the end of the schedule, we leave the details of the practical implementation for future work.

While it was previously shown that permutation matrices are the worst case for \rsn type systems, both formally \cite{griner2021cerberus,Vamsiaddanki2023mars} and informally \cite{mellette2017rotornet,ballani2020sirius}, the above theorem closes a small gap and enables us to state the system DCT of \rsn formally. To summarize our \rsn analysis,  we offer the following theorem for the system DCT of \rsn, which follows directly from \autoref{thm:upperRSNBVN}.

\begin{theorem}[The system DCT of \rsn]\label{thm:main:RRsysDCT} Let $\mathcal{M}$ be the set of all double stochastic matrices. For $r=1$, and $\rrec=0$, the system DCT of \rsn is: 
   \begin{align}\label{eq:rrsysDCT}
    \dct(\rsn, \mathcal{M}) & =  2-\frac{2}{n}. 
\end{align} 
\end{theorem}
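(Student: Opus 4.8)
The plan is to unpack \autoref{def:sysDCT} for the special structure of \rsn. Since every member of \rsn is forced to use the round-robin topology schedule $\rr$ (\autoref{def:main:rsnSysDef}), the only freedom lies in the choice of traffic scheduling algorithm $\spalg$, so the system DCT is the min--max $\dct(\rsn,\mathcal{M}) = \min_{\spalg}\max_{M\in\mathcal{M}} \dctrot(\rr,\spalg,M)$. The first thing I would observe is that the two free parameters appearing in the earlier bounds collapse under the theorem's hypotheses: with $r=1$ every doubly stochastic $M$ has weight $\Wcard{M}=n$ (each of the $n$ rows sums to one), and with $\rrec=0$ the duty cycle satisfies $\eff=\tfrac{\delta}{\delta+\rrec}=1$. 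Substituting these makes both the achievable value and the lower bound equal to the constant $2-\tfrac{2}{n}$, independent of $M$.

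For the upper bound I would exhibit a single admissible algorithm that serves \emph{every} matrix within the target time, namely $\rrbvn$. By \autoref{thm:upperRSNBVN}, for any scaled doubly stochastic $M$ we have $\dctrot(\rr,\rrbvn,M)=(2-\tfrac{2}{n})\tfrac{1}{\eff r}\tfrac{\Wcard{M}}{n}$; plugging in $r=1$, $\eff=1$, and $\Wcard{M}=n$ yields $\dctrot(\rr,\rrbvn,M)=2-\tfrac{2}{n}$ for every $M\in\mathcal{M}$. Taking the maximum over $\mathcal{M}$ therefore gives exactly $2-\tfrac{2}{n}$, and since $\rrbvn$ is one valid choice of $\spalg$, the min over algorithms is at most this value, establishing $\dct(\rsn,\mathcal{M})\le 2-\tfrac{2}{n}$.

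For the matching lower bound I would invoke \autoref{thm:LowerBoundPerm}, which holds for an arbitrary traffic scheduling algorithm generating a complete schedule. Fixing any derangement permutation matrix $P\in P_\pi$ (itself doubly stochastic with $\Wcard{P}=n$), the theorem gives $\dctrot(\rr,\spalg,P)\ge(2-\tfrac{2}{n})\tfrac{1}{\eff r}\tfrac{\Wcard{P}}{n}=2-\tfrac{2}{n}$ under the same substitutions. Because $P\in\mathcal{M}$, this single hard instance forces $\max_{M\in\mathcal{M}}\dctrot(\rr,\spalg,M)\ge 2-\tfrac{2}{n}$ for every $\spalg$, so the min over $\spalg$ obeys $\dct(\rsn,\mathcal{M})\ge 2-\tfrac{2}{n}$. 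Combining the two inequalities closes the equality.

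The genuine content lives entirely in the two cited theorems, so the only real subtlety is conceptual rather than computational, and it is where I would be most careful: the upper bound must come from a \emph{single} algorithm whose guarantee is uniform over all of $\mathcal{M}$ (which works precisely because $\Wcard{M}$ is constant across doubly stochastic matrices), whereas the lower bound comes from a \emph{single} worst-case instance that defeats \emph{every} algorithm. The fact that the permutation instance attains exactly the value $\rrbvn$ guarantees everywhere is what pins the min--max down to an exact constant instead of merely bracketing it.
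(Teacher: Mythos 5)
Your proposal is correct and follows essentially the same route as the paper: the upper bound comes from \autoref{thm:upperRSNBVN} via the $\rrbvn$ algorithm, and the lower bound from the permutation hard instance. If anything, your citation of \autoref{thm:LowerBoundPerm} for the lower bound is slightly more precise than the paper's own proof, which points to the intermediate \autoref{thm:lowerBoundRSN} (a bound still parameterized by $\skew$) rather than the finished permutation-matrix bound.
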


\begin{proof}
\autoref{thm:upperRSNBVN} proves that the bound stated in the theorem is feasible using the $\rrbvn$ algorithm for \emph{any} demand matrix. \autoref{thm:lowerBoundRSN} proves that
there is a matrix that takes at least the bound in the theorem.
This proves that the above bound is the best-case DCT for worst-case demand. 
\end{proof}

While we have shown an algorithm that achieves the system DCT for any double stochastic matrix, we can still improve the DCT of \rsn for some particular matrices. 
To see how, we first define a simple traffic schedule algorithm, $\stalg=\direct$. This algorithm creates a schedule by sending packets exclusively using a single hop. 
The following DCT bound was shown in previous work \cite{griner2021cerberus}, we repeat it here. We further give a proof in the appendix in \autoref{sec:directBoundProof}) to demonstrate that it is aligned with our definitions and model.

\begin{restatable}{theorem}{defRotDCTGenApp}[$\direct$ upper bound]\label{thm:defRotDCTGenApp}
Let $M$ be any demand matrix and $\direct$ be a traffic schedule where all packets are sent over a single hop. The matrix completion time in \rsn is then
 \begin{align}
    \dctrot(\rr ,\direct,M ) = (n-1)\frac{\max(M_{i,j})}{\eff r}
\end{align}    
\end{restatable}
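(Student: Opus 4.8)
The plan is to exploit the single defining property of the round-robin topology: the $n-1$ matchings $\cycleSchedElem^1,\dots,\cycleSchedElem^{n-1}$ of the cycle $\cycleSched$ union to a complete graph, so every ordered pair $(i,j)$ with $i\neq j$ is an active edge $\cycleSchedElem^k[i,j]=1$ in \emph{exactly one} configuration $k$. Since $\direct$ forbids two-hop paths, the entire demand $M[i,j]$ of that cell can only be carried on the unique slot where $(i,j)$ is present, i.e. $\mdl_k[i,j]=M[i,j]$ for that $k$ while the relay sets $\Moh,\Mth$ are all zero. Hence a single full cycle through all $n-1$ configurations is both necessary (every nonzero cell must wait for its slot) and sufficient (one appearance drains the cell) to complete $M$, and the matrix completion time equals the length of that one cycle.

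First I would bound the time a single slot needs. The active edges of $\cycleSchedElem^k$ form a matching, so they have pairwise-disjoint endpoints and transmit in parallel at rate $r$; slot $k$ therefore drains once its heaviest active link is emptied, requiring transmission time $\max_{(i,j):\cycleSchedElem^k[i,j]=1} M[i,j]/r$. Next I would use the fact that \rsn holds \emph{every} slot for the same duration $\delta$ and reconfigures with the same $\rrec$ (the schedule is demand-oblivious and periodic). A uniform $\delta$ must be large enough for the worst slot, and since the globally maximal cell $\max_{i,j}M[i,j]$ lies in some configuration, the smallest feasible choice is $\delta=\max_{i,j}M[i,j]/r$; this is what turns the estimate into an equality rather than merely an upper bound.

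Finally I would assemble the total time. One full cycle costs $\sum_{i=1}^{n-1}(\alpha_i+R_i)=\sum_{i=1}^{n-1}(\delta+\rrec)=(n-1)(\delta+\rrec)$ by Eq.~\eqref{eq:scheduleGeneral}. Substituting the duty-cycle identity $\delta+\rrec=\delta/\eff$ together with $\delta=\max_{i,j}M[i,j]/r$ yields $(n-1)\,\delta/\eff=(n-1)\frac{\max_{i,j}M[i,j]}{\eff r}$, exactly the claimed $\dctrot(\rr,\direct,M)$. The main obstacle is the bookkeeping that makes this an \emph{equality}: one must argue both that a full cycle is unavoidable (some required edge appears only in the last configuration reached) and that the uniform slot length is forced by the single heaviest cell, while checking that the resulting schedule meets the feasibility conditions (no link overloaded, each cell fully drained within its own slot) deferred to \autoref{app:subsec:feasibltySchedule}.
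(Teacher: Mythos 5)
Your proposal is correct and rests on the same two observations as the paper's proof: under $\direct$ each cell $M[i,j]$ can only drain through the unique configuration of the cycle containing the edge $(i,j)$, and the duty cycle $\eff$ converts pure transmission time into wall-clock time. The difference is in the accounting. You tune the slot length to the demand, setting $\delta=\max_{i,j}M[i,j]/r$ so that a \emph{single} cycle of $n-1$ slots completes $M$; the paper instead keeps $\delta$ as the system's (arbitrary, typically small) slot time and counts the number of full cycles needed, $x=\max_{i,j}M[i,j]/(\delta r)$, so that the total is $x(n-1)(\delta+\rrec)$ and $\delta$ cancels out of the final expression. The paper's version buys two things: it preserves the demand-oblivious character of the $\rr$ topology schedule (no parameter of the schedule depends on $M$), and it makes the $\eff$ in the final formula unambiguously the system's nominal duty cycle. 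In your version, $\eff$ is the duty cycle of the \emph{retuned} schedule $\delta/(\delta+\rrec)$ with $\delta=\max M/r$; if $\rrec$ is a hardware constant rather than a fixed fraction of $\delta$, this is a different (larger) number than the system's $\eff$, and your one-cycle schedule would in fact finish strictly faster than $(n-1)\max M/(\eff_{\mathrm{sys}} r)$, contradicting the claimed equality as a statement about the system parameter. Relatedly, your remark that $\delta=\max M/r$ is ``what turns the estimate into an equality'' is not quite the right justification: the paper's fixed-$\delta$, multi-cycle schedule also attains equality, because the heaviest edge is active for only a $\eff/(n-1)$ fraction of wall-clock time regardless of how the cycle is chopped up. Rephrasing your argument with a fixed $\delta$ and the cycle count $x$ removes both issues and recovers the paper's proof exactly (modulo the integrality of $x$, which both you and the paper gloss over).
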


For the worst-case demand matrix, the permutation matrix, this algorithm has a much worse DCT than $\rrbvn$. However, it will improve the DCT significantly when faced, for example with the uniform matrix.

The question of the best traffic schedule for \rsn on \emph{any} particular matrix is still open. Therefore, in the rest of the work, we consider a simple algorithm for \rsn that combines the best results between \autoref{thm:defRotDCTGenApp} and \autoref{thm:upperRSNBVN}.
Since both $\rrbvn$ and the naive $\direct$ traffic scheduling algorithms are defined on the general doubly stochastic matrix, we can combine them to form an improved algorithm $\upper$. This algorithm takes the best traffic schedule between the two, depending on the traffic matrix $M$ that is, 

\begin{definition}[$\upper$ algorithm  for \rsn DCT]\label{def:rsnUpperBound} 
\begin{align*}
    & \dctrot(\rr,\upper,M) =\\&= \min( \dctrot(\rr,\rrbvn,M), \dctrot(\rr,\direct,M)) 
\end{align*}
    
\end{definition}
 
Using this definition and comparing the expressions for DCT from \autoref{thm:upperRSNBVN} and \autoref{thm:defRotDCTGenApp}, we can arrive at the following observation. When the maximal element in $M$ is larger than nearly twice (i.e., by $(2-\frac{2}{n}$) the average matrix value $\frac{ \Wcard{M}}{n(n-1)}$ we will use the bound from \autoref{thm:upperRSNBVN} and otherwise, the bound from \autoref{thm:defRotDCTGenApp},
\begin{gather}
    \dctrot(\rr,\upper,M)=\nonumber \\=
\begin{cases}
  (2-\frac{2}{n})\frac{1}{ \eff r} \frac{ \Wcard{M}}{ n}, & 
    (2-\frac{2}{n})\frac{ \Wcard{M}}{ (n-1)n} \le \max(M_{i,j})\\  
\frac{(n-1)\max(M_{i,j})}{ \eff r},     &\text{otherwise}
\end{cases}
\label{eq:RRUPPER}
\end{gather}
We see that, indeed, $\upper$ cannot improve the result of $\rrbvn$ in the worst case.

In the next section, we consider a new alternative to the two existing systems we discussed, \msn, and study its performance.

\section{The Composite System and The Pivot Algorithm}\label{sec:msn}
Before introducing \msn, we make a simple observation following the previous sections that will serve as our main motivation.
The  \emph{worst-case} demand matrix for \rsn (i.e., a permutation matrix) is the \emph{best-case} demand matrix for \csn. The opposite is also almost the same: the \emph{best-case} demand matrix for \rsn (i.e., a uniform matrix) is a very bad case for \csn (but not the worst).
This motivates \msn, which is a system that can adapt to different types of demand using advantages from both the \csn and \rsn systems combined.
While both previously discussed systems, \rsn and \csn, were able to use a \emph{single} type of schedulers
we define \msn as a system and a network that can use both demand-aware and demand-oblivious types of topology schedules for the \emph{same} demand matrix.
Additionally, it must adjust its traffic schedule to make it feasible and complete.

One way to implement a schedule that uses demand-aware and demand-oblivious topologies is to decompose the demand matrix 
into two parts and use different schedule pairs for each part. We will define such an algorithm for \msn in this section.
But first, formally, we define \msn as the following system.

\begin{definition}[The Composite System, \msn]\label{def:main:msnSysDef}
    Let $M$ be any double stochastic demand matrix. \msn is defined as a system that decomposes $M$,  into two double stochastic matrices $M^{\bvn}$ and $M^{\rr}$, such that $M=M^{\bvn}+M^{\rr}$. It then schedules the topology and traffic for $M^{\bvn}$ with $\csn$ and sequentially  for $M^{\rr}$ using $\rsn$.

\end{definition}

\begin{algorithm}[t]
  \caption[The \algPivot\ Algorithm]{The \algPivot\ Algorithm}\label{alg:piovt}
  \begin{algorithmic}[1]
  \Require$M \in {\mathbf{R}^+}^{n\times n}$ scaled double stochastic
  \Ensure $M^{\bvn}$, $M^{\rr}$
   \State  $P_M=\{P_1,...,P_v\}$,$\beta_M=\{\beta_1,...,\beta_v\}$  \Comment{{\color{blue} Decompose $M$ to a set permutation matrices and coefficients using \bvn, sorted by $\beta_i$ from large to small}} 
   \State $D_0=\dctrot(\rr,\upper,M),D_v=\dctda(\bvn,\direct,M)$\label{alg:pivot:extreem}
\For {$i \in [1,v-1]$}
    \State    $M^{\bvn}=\sum_{j=1}^i\beta_j P _j$
    \State    $M^{\rr}=\sum_{j=i+1}^v\beta_j P _j$
    \State $D_i=\dctda(\bvn,\direct,M^{\bvn})$$+\dctrot(\rr,\upper,M^{\rr})$
\EndFor
  \State $f= \underset{i}{\argmin} ~D_i$  \Comment{{\color{blue} Find best decomposition}}
  \State  $M^{\bvn}=\sum_{j=1}^f \beta_j P _j$ and $M^{\rr}=\sum_{j=f+1}^v \beta_j P _j$\label{alg:pivot:dct}
    \end{algorithmic}
\end{algorithm}
Next, we describe a novel decomposition algorithm that we will use for our evaluation of \msn, both for the empirical and theoretical analysis.
Algorithm \algPivot (see \autoref{alg:piovt} for the pseudo-code), starts with  
\bvn matrix decomposition of the demand matrix $M$.
W.l.o.g., we assume that all permutations $P_i$ are sorted by their weight, that is, by their coefficients $\beta_i$, where $\beta_i$ is the largest coefficient.
Essentially, \algPivot is searching for the best partition of the permutations. It creates $v+1$ decompositions of $M$ to $M^{\bvn}$ and $M^{\rr}$. Each decomposition is defined by the index of a \emph{pivot permutation} where all permutations strictly heavier than the pivot permutation create $M^{\bvn}$ and all smaller create $M^{\rr}$. By comparing the expected DCT of each decomposition (Line \ref{alg:pivot:dct}), \algPivot returns the decomposition that minimizes the DCT. We note that \algPivot also examine the cases where $M^{\bvn}=M$ and $M^{\rr}=M$ (Line \ref{alg:pivot:extreem}). 

Let $\stalgmsn$ and  $\spalgmsn$ denote the topology and traffic schedules resulting for the \algPivot algorithm, respectively. Formally, $\stalgmsn$ is the concatenation of the topology schedules $\bvn(M^{\bvn})$ and $\rr(M^{\rr})$,
and $\spalgmsn$ is the concatenation of the traffic schedules $\direct(M^{\bvn})$ and $\direct(M^{\rr})$.

We can state the following general theorem for \msn and \algPivot
 \begin{theorem}[DCT of \msn with \algPivot]\label{thm:naiveBoundCompositeDCTWpivot}  
For any scaled doubly stochastic matrix $M$, the DCT of \msn is 
at most as the lowest DCT between \rsn or \csn, 
 \begin{align}
    & \dctmix(\stalgmsn,\spalgmsn,M )\leq \nonumber \\
    &\leq\min( \dctda(\bvn,\direct,M ), \dctrot(\rr, \upper,M))
\end{align}  
\end{theorem}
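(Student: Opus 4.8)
The plan is to exploit the fact that \algPivot already enumerates the two degenerate splits as candidates and returns the best one, so the bound should drop out by restricting the algorithm's minimization to just those two cases. Concretely, I would first argue that the matrix completion time realized by the composite schedules $(\stalgmsn,\spalgmsn)$ equals the value selected on Line~\ref{alg:pivot:dct} of \autoref{alg:piovt}. This uses the definition of \msn: since the two sub-schedules are run sequentially, their completion times add, so $\dctmix(\stalgmsn,\spalgmsn,M) = \dctda(\bvn,\direct,M^{\bvn}) + \dctrot(\rr,\upper,M^{\rr}) = D_f$, where $f=\argmin_i D_i$ is the index chosen by the algorithm.

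Next I would identify the two boundary candidates computed on Line~\ref{alg:pivot:extreem}. The case $i=0$ corresponds to $M^{\bvn}=0$ and $M^{\rr}=M$, so that all of $M$ is served by the round-robin component and $D_0 = \dctrot(\rr,\upper,M)$. Symmetrically, $i=v$ corresponds to $M^{\rr}=0$ and $M^{\bvn}=M$, giving $D_v = \dctda(\bvn,\direct,M)$. Since the $\argmin$ ranges over all $i \in \{0,\dots,v\}$, both $D_0$ and $D_v$ are in the candidate set, so $D_f \le D_0$ and $D_f \le D_v$, and therefore $D_f \le \min(D_0,D_v)$. Combining this with the first step yields exactly the claimed inequality $\dctmix(\stalgmsn,\spalgmsn,M)\le \min(\dctda(\bvn,\direct,M),\dctrot(\rr,\upper,M))$.

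The only real work is the bookkeeping needed to justify the first step: that every candidate split produces a feasible and complete pair of schedules whose completion times add. For this I would note that $M=M^{\bvn}+M^{\rr}$ holds by construction of the \bvn decomposition, and that every partial sum $\sum_{j=1}^{i}\beta_j P_j$ is itself a scaled doubly stochastic matrix, since each permutation $P_j$ contributes equally to all row and column sums. Hence $M^{\bvn}$ is a valid input to \csn and $M^{\rr}$ a valid input to \rsn, so the schedules produced by $\bvn/\direct$ (the \csn construction) and by $\rr/\upper$ (feasible and complete by \autoref{thm:upperRSNBVN} via \autoref{lemma:feasibleApp}) are each feasible and complete, and concatenating two complete sequential schedules remains feasible and complete. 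I expect no hard combinatorial core here; the theorem is essentially a ``free'' consequence of having built the two extreme points into the candidate set of \algPivot, and the main care is simply confirming the additivity of the sequential DCT and the doubly stochastic structure of the two parts.
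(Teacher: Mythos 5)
Your proposal is correct and matches the paper's own (one-line) argument: \algPivot\ explicitly evaluates the two degenerate splits $D_0=\dctrot(\rr,\upper,M)$ and $D_v=\dctda(\bvn,\direct,M)$ on Line~\ref{alg:pivot:extreem}, so the returned minimum is at most either of them. The extra bookkeeping you add about the doubly stochastic structure of the partial sums and the additivity of sequential schedules is sound but not needed for this particular bound, since only the two extreme candidates enter the argument.
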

\begin{proof}
    This follows since \algPivot\ tests and compares the cases $(M^{\bvn}=M$, $M^{\rr}=0_n)$, and  $(M^{\bvn}=0_n$, $M^{\rr}=M)$.  Therefore, $\algPivot$ will be at least as good as either. 
\end{proof}

We discuss the running time of \algPivot~in the appendix \autoref{sec:sruuningtime}. But interestingly, we emphasize that \algPivot only needs to make a single \bvn decomposition during its execution.

As for the previous systems, we would like to find the system DCT for \msn, namely, to find the worst-case demand matrix for it and its completion time.

In the next section, we make a step in this direction, and we study the DCT of \msn and \algPivot and compare it to \csn and \rsn for the special case of the $M(v,u)$ family of matrices.

\begin{figure*}[t]
  \centering
  \begin{tabular}{cc}
  \multicolumn{2}{c}{\includegraphics[width=1\linewidth]{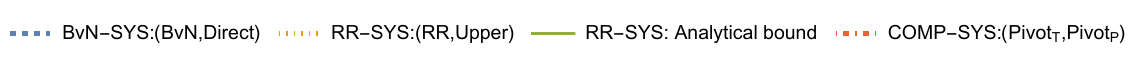}} \\
  \includegraphics[width=.39\linewidth]{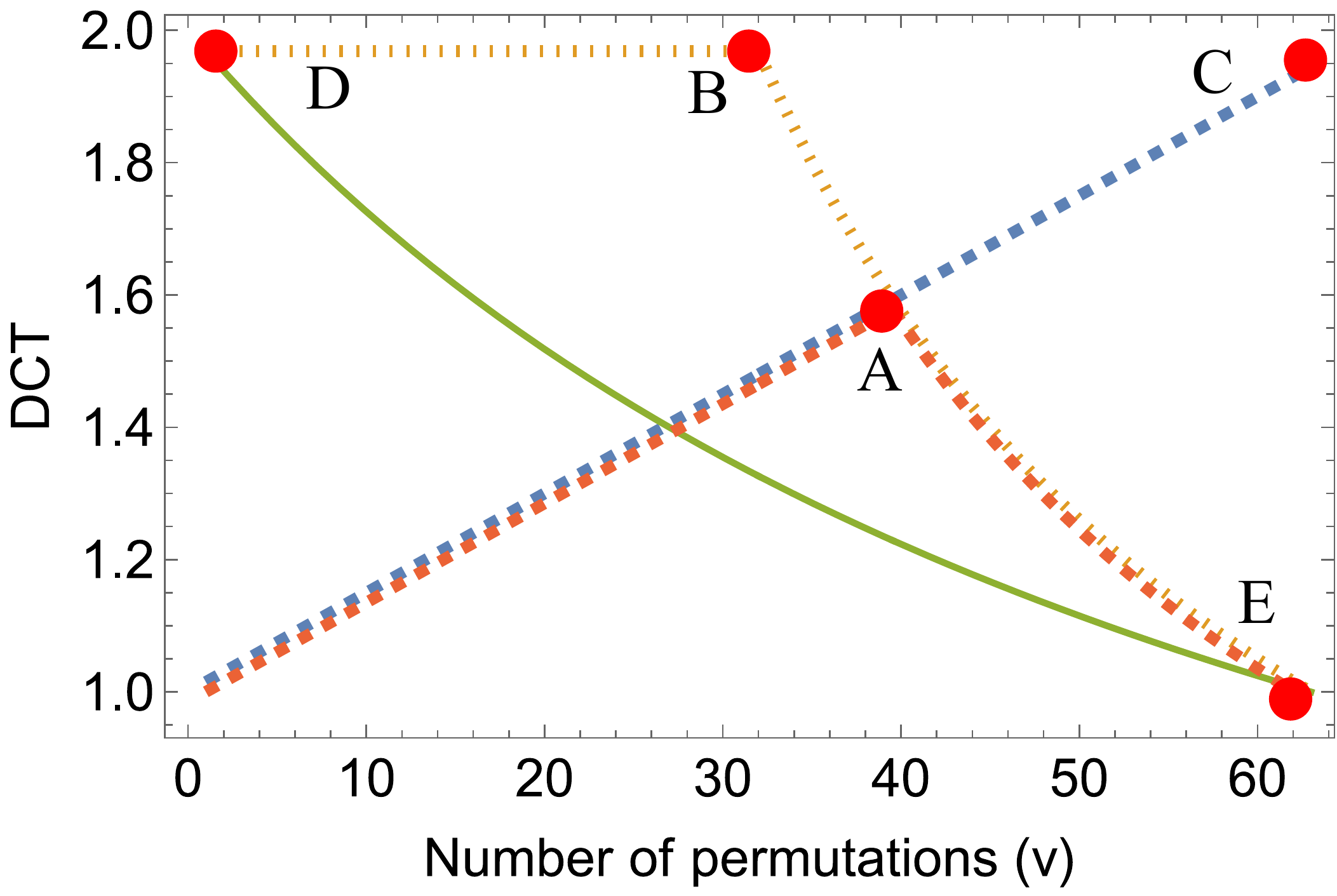} &
  \includegraphics[width=.40\linewidth]{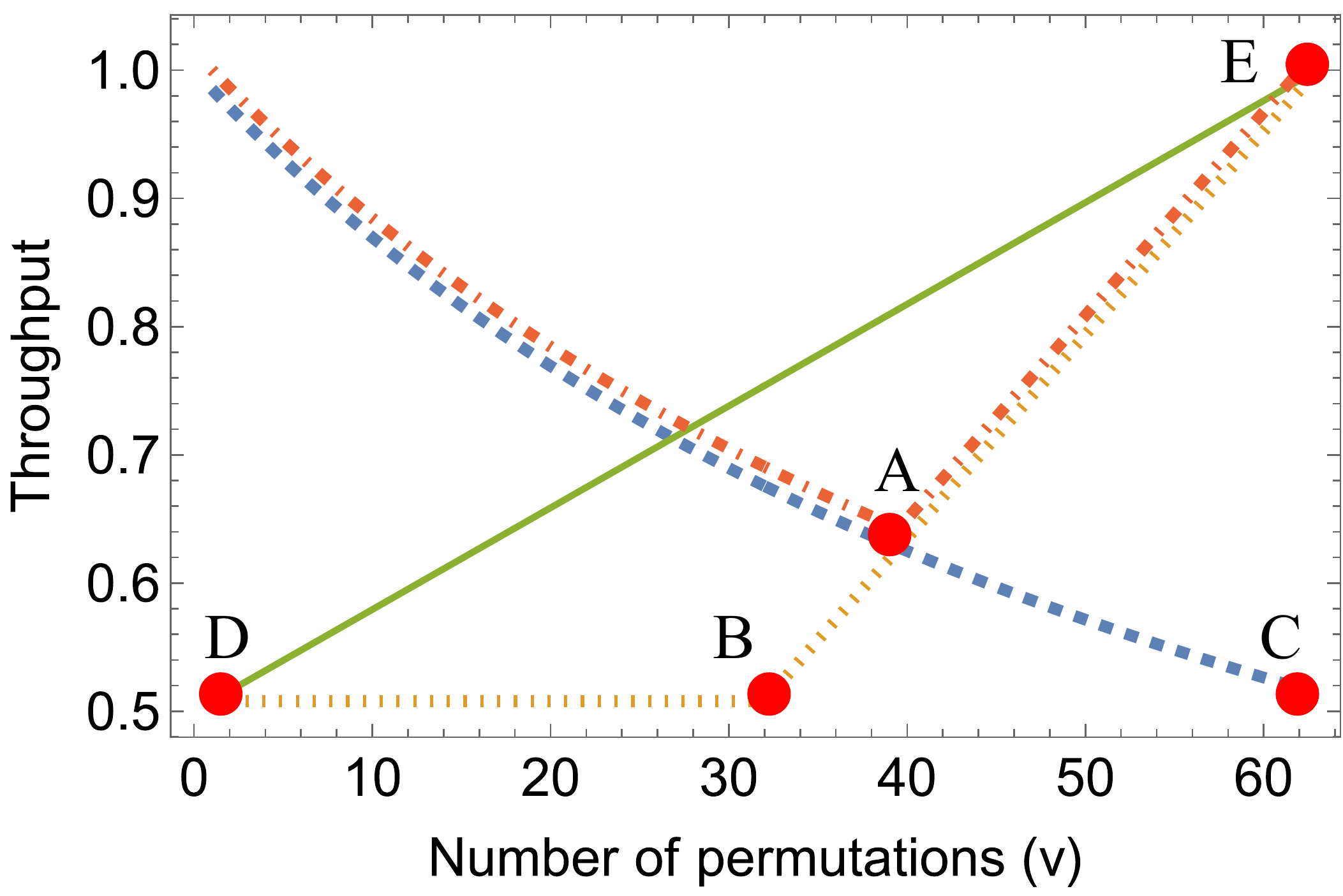}  \\
  \small{(a) The DCT for M(v)} & \small{The throughput for M(v)}
  \end{tabular}
   \caption[The demand completion time and throughput for the family M(v)]{The demand completion time and throughput for the Family of $M(v)=M(v,0)$ matrices. }\label{fig:dctBoundMV}
\end{figure*}

\section{A Case Study: The system DCT of \msn for the M(v,u) Matrices Family}\label{sec:thecasestudy}
 
In \autoref{sec:NetwrokDesignsCSN}, and \autoref{sec:NetwrokDesignsRSN}, we discussed the bounds and worst-case DCT of the systems \rsn and \csn for a general double stochastic matrix. However, no such bounds are known for \msn for a double stochastic matrix. That is, it is unknown what is the matrix with the worst DCT, and what this DCT is.
This section will discuss the DCT bounds of \rsn and \csn compared to our \msn system for a wide family of double stochastic traffic matrices $M(v,u)$. We leave the general question open for future research.
Recall that we first defined the $M(v,u)$ matrix family in \autoref{subsec:MvuMatrixDef}.

Throughout this evaluation,  we will use several assumptions: $\eff=1$, which means that the reconfiguration time $\rrec=0$. 
The transmission rate is normalized, meaning that the transmission rate is $r=1$.
For the numerical examples and demonstrations, we set $\crec=\frac{1}{n}$.  As we will see later, using this value means that both \rsn and \csn have the same system DCT.  This ensures no system has an obvious advantage over the other on any matrix from the set $M(v, u)$. Furthermore, a similar assumption is made in Eclipse \cite{bojja2016costlyEclipse}.
Lastly, since in this section, we assume that $n=64$, we get $\crec=\frac{1}{64}\approx 15ms=0.015s$.
We note that this value for $\crec$ is also a good approximation of the actual value of dynamic switch reconfiguration times \cite{memesF20011296,hall2021survey}.

Let us now consider the main result of this section, the system DCT for each of our systems in the following theorem.
This theorem describes the system DCT of all of our three systems of interest on the matrix family $\mathcal{M}=M(v,u)$.

\begin{theorem}[Systems DCT for $M(v,u)$]\label{thm:mainSystemDCTreslut}
 Let $\mathcal{M}$ be the family of all $M(v,u)$ matrices. Then,
 \begin{itemize}
   \item  The system DCT of \csn is:
\begin{align}\label{eq:sysDCTBVN}
     \dct(\bvn,\mathcal{M}) 
    = 1+\crec (n-1) 
\end{align}
which is about {\bf 1.94} for $n=64$ and $\crec=\frac{1}{n}$.
The worst-case is, for example, the matrix is $M(n-1,0)$.
  \item  The system DCT of \rsn is:
\begin{align}\label{eq:sysDCTRSN}
    \dct(\rr,\mathcal{M})= 2-\frac{2}{n} 
\end{align}
which is about {\bf 1.96} for $n=64$.
The worst-case matrix is $M(1,0)$, a permutation matrix.
  \item  The system DCT of \msn for a \bvn reconfiguration time $\crec < \frac{(2n-4)}{n^2}$:
\begin{align}\label{eq:sysDCTMSN}
    &\dct(\msn,\mathcal{M}) \le 
    \frac{\sqrt{1+4\crec(n -1)}+1}{2} 
     \end{align}
    which is about {\bf 1.6} for $n=64$ and $\crec=\frac{1}{n}$.
    The worst-case matrix is $M(\Ddot{v},0)$,
    where $\Ddot{v}=\frac{\sqrt{1+4\crec(n -1)}-1}{2\crec}$.
    \end{itemize}
For all systems the worst-case matrices are achieved for $u=0$, so $M(v,0)=M(v)$
\end{theorem}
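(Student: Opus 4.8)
The plan is to establish the three system-DCT formulas separately and then identify the common worst-case structure at $u=0$. For \csn, the result follows almost immediately from \autoref{obs:main:DAsysDCTBound}: the matrix $M(n-1,0)$ is the uniform matrix, whose \bvn decomposition requires exactly $n-1$ permutations, so its matrix DCT is $1+\crec(n-1)$, matching the lower bound in the observation. I would argue this is the worst case within $M(v,u)$ by noting that the number of permutations in an optimal \bvn decomposition of any $M(v,u)$ is maximized when the matrix is ``most uniform,'' which happens at $v=n-1$. For \rsn, the formula $2-\frac{2}{n}$ is exactly \autoref{thm:main:RRsysDCT}, and since $M(1,0)$ is a derangement permutation matrix, \autoref{thm:LowerBoundPerm} and \autoref{lemma:feasibleApp} pin down the permutation matrix as the worst case, giving $\skew=\frac{2}{n}$ and the stated value.

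The substantive part is the \msn bound. Here I would use \autoref{thm:naiveBoundCompositeDCTWpivot} together with the explicit per-family expressions for \csn and \rsn on $M(v,0)=M(v)$. The key idea is that \algPivot splits $M(v)$ into a \csn-served part and an \rsn-served part, and on $M(v)$ one can compute both costs as functions of the split. I would parametrize the decomposition so that the \bvn part costs roughly $1 + \crec k$ for some number $k$ of permutations it absorbs, while the remaining \rsn part contributes a term proportional to its weight via \autoref{thm:upperRSNBVN} (or via \autoref{cor:lowerBoundRSNMV} for the lower direction). The total DCT becomes a sum of a linear-in-$\crec$ term and a term decreasing in the number of permutations handed to \csn; balancing these two competing costs is an optimization over the pivot index. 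Setting the derivative (or the discrete difference) to zero yields the critical value $\Ddot v=\frac{\sqrt{1+4\crec(n-1)}-1}{2\crec}$, and substituting back gives the claimed bound $\frac{\sqrt{1+4\crec(n-1)}+1}{2}$. The condition $\crec<\frac{2n-4}{n^2}$ should emerge as exactly the regime in which this interior optimum is feasible (i.e., the balancing point falls strictly between the two extreme splits), so that the composite strictly beats both pure systems.

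I would then verify that the worst case over the whole family $M(v,u)$ occurs at $u=0$. The plan is to show that increasing $u$ (adding uniform mass) only helps \msn: the uniform component $M(n-1)$ is cheaply absorbed by the \rsn side of the split, so the adversary's best choice is $u=0$, reducing $M(v,u)$ to $M(v)$. With this reduction, maximizing the composite DCT over $v$ recovers the quadratic optimization above, confirming $M(\Ddot v,0)$ as the witness.

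\textbf{Main obstacle.} The hard step is the optimization over the pivot for \msn: I need a clean closed-form expression for the cost of each split of $M(v)$ into its \bvn and \rsn parts, which requires knowing how \algPivot's two cost terms interact as the pivot index moves, and then arguing that the continuous relaxation's optimum $\Ddot v$ gives a valid upper bound on the genuinely discrete problem. Establishing that the balancing condition coincides exactly with $\crec<\frac{2n-4}{n^2}$, and that $u=0$ is worst simultaneously for all three systems, will require care in comparing the \upper-based bound (with its case split in \eqref{eq:RRUPPER}) against the pure \bvn cost across the full parameter range.
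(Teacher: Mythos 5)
Your treatment of \csn and \rsn is fine and matches the paper (both are inherited from \autoref{obs:main:DAsysDCTBound} and \autoref{thm:main:RRsysDCT}, with the $u=0$ reduction argued by peeling off the uniform component $uM(n-1)$ and serving it at no extra cost on the round-robin side). The gap is in your derivation of the \msn bound. You present $\Ddot{v}$ as the \emph{interior optimum of the pivot optimization} for a fixed matrix: a balance point between a cost linear in the number of permutations handed to \csn and a cost decreasing in that number, found by setting a derivative to zero. That is not where $\Ddot{v}$ comes from, and the computation would not produce it. On $M(v)$ all \bvn coefficients are equal, so the cost of handing $k$ of the $v$ permutations to \csn and the rest to \rsn is (piecewise) \emph{linear in $k$}; its minimum sits at an endpoint, which is exactly the paper's \autoref{obs:naiveMixDCT}: \algPivot on $M(v)$ degenerates to $\min\bigl(\dctda(\bvn,\direct,M(v)),\,\dctrot(\rr,\upper,M(v))\bigr)$, i.e., send the whole matrix one way or the other. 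If you did set the discrete difference of the split cost to zero you would land at $v=\frac{n-2}{n\crec}$ (the paper's $\Dot{v}$, relevant only in the complementary regime $\crec\ge\frac{2n-4}{n^2}$), with a flat objective there — not at $\Ddot{v}$, and not at the claimed DCT value.

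The quantity $\Ddot{v}$ is instead the \emph{adversary's} worst-case choice of the matrix parameter $v$: it maximizes the min of the two pure-system curves, so it is the crossing point of $1+\crec v$ (increasing in $v$) with the $\direct$ branch $\frac{n-1}{v}$ of $\upper$ (decreasing in $v$, from \autoref{lemma:DCTRROnMvUpper}). Solving $1+\crec v=\frac{n-1}{v}$, i.e., $\crec v^2+v-(n-1)=0$, gives $\Ddot{v}=\frac{\sqrt{1+4\crec(n-1)}-1}{2\crec}$ and the DCT $1+\crec\Ddot{v}=\frac{\sqrt{1+4\crec(n-1)}+1}{2}$. Correspondingly, the condition $\crec<\frac{2n-4}{n^2}$ is not about an interior pivot optimum being feasible; it determines on which branch of the piecewise $\upper$ curve the \csn line crosses — below the kink at $v=\frac{n}{2}$ the crossing hits the flat branch $2-\frac{2}{n}$ (and \msn then only ties \rsn), above it the crossing hits $\frac{n-1}{v}$ and yields the stated strictly better bound. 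To repair your proof you should replace the derivative-balancing step by (i) the endpoint/linearity argument showing \algPivot reduces to the min of the two pure systems on $M(v)$, and (ii) a max-over-$v$ of that min, with the case split on $\crec$ governed by where $1+\crec v$ meets the kink of $\upper$.
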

Before diving into this theorem, let us first make two important observations.  
Consider how will \algPivot\ work with the $M(v)$ family of matrices. 
When \algPivot\ algorithm is used with $M(v)$, it converges into a simpler form. Since, in this case, there is only one $\beta_i$ size (as seen in the pseudo-code in \autoref{alg:piovt}), we will always send the entire matrix using either \csn or \rsn.
The DCT of this system and algorithm is the following observation for the $M(v)$, family.
 \begin{observation}[DCT of \msn with \algPivot\ on $M(v)$]\label{obs:naiveMixDCT}  
\begin{align}
    & \dctmix(\stalgmsn,\spalgmsn,M(v)  )= \nonumber\\
    &=\min( \dctda(\bvn,\direct,M(v) ), \dctrot(\rr, \upper,M(v)))
\end{align}  
\end{observation}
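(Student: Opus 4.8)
The plan is to exploit the special structure of $M(v)$ to show that the objective \algPivot\ minimizes is \emph{concave} in the pivot index, so its minimum over the $v+1$ candidate splits is attained at one of the two extreme splits, which are precisely pure \csn and pure \rsn. First I would pin down the decomposition in play: since $M(v)$ equals $\tfrac1v$ times a $v$-regular $0/1$ matrix (each row and column carries exactly $v$ off-diagonal ones), a $v$-regular bipartite (multi)graph decomposes into $v$ perfect matchings, so its \bvn\ decomposition consists of \emph{exactly} $v$ permutation matrices $P_1,\dots,P_v$ with \emph{disjoint supports}, all sharing the single coefficient $\beta_j=\tfrac1v$. Hence the pivot index $i$ (the number of permutations sent to the \csn\ part) ranges over $\{0,1,\dots,v\}$, the split is $M^{\bvn}=\sum_{j\le i}\tfrac1v P_j$ and $M^{\rr}=\sum_{j>i}\tfrac1v P_j$, with weights $\Wcard{M^{\bvn}}=\tfrac{i}{v}n$ and $\Wcard{M^{\rr}}=\tfrac{v-i}{v}n$.

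Next I would write the candidate completion time $D_i=\dctda(\bvn,\direct,M^{\bvn})+\dctrot(\rr,\upper,M^{\rr})$ as an explicit function of $i$ and control each summand. Using Eq.~\eqref{eq:damatDCT} and the section's normalization $r=1$, the \csn\ part decomposes $M^{\bvn}$ into $i$ matchings of total transmission time $\tfrac{i}{v}$ plus $i$ reconfigurations, so $\dctda(\bvn,\direct,M^{\bvn})=\tfrac{i}{v}+i\crec$, which is \emph{affine} in $i$. For the \rsn\ part, \autoref{def:rsnUpperBound} makes $\dctrot(\rr,\upper,M^{\rr})$ the pointwise minimum of two quantities: by \autoref{thm:upperRSNBVN} the $\rrbvn$ bound equals $(2-\tfrac2n)\tfrac{1}{\eff r}\tfrac{\Wcard{M^{\rr}}}{n}=(2-\tfrac2n)\tfrac{1}{\eff r}\tfrac{v-i}{v}$, affine in $i$; and by \autoref{thm:defRotDCTGenApp} the $\direct$ bound equals $(n-1)\tfrac{\max(M^{\rr}_{k,l})}{\eff r}=(n-1)\tfrac{1/v}{\eff r}$, which is \emph{constant} in $i$ because the disjoint supports force every nonzero entry of $M^{\rr}$ to stay exactly $\tfrac1v$ for all $i<v$.

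I would then invoke concavity. A pointwise minimum of two affine functions of $i$ is concave in $i$, so the \rsn\ term is concave; adding the affine \csn\ term keeps $D_i$ concave. A concave function on $[0,v]$ lies on or above the chord joining its endpoints, so for every $i$ we get $D_i\ge (1-\tfrac iv)D_0+\tfrac iv D_v\ge\min(D_0,D_v)$, whence $\min_i D_i=\min(D_0,D_v)$ (the endpoints are themselves candidates). Finally I identify the endpoints: $i=0$ gives $M^{\rr}=M(v)$, so $D_0=\dctrot(\rr,\upper,M(v))$ (pure \rsn), and $i=v$ gives $M^{\bvn}=M(v)$, so $D_v=\dctda(\bvn,\direct,M(v))$ (pure \csn); these are exactly the two values assigned on Line~\ref{alg:pivot:extreem}. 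Since \algPivot\ outputs the minimizing split and $\dctmix(\stalgmsn,\spalgmsn,M(v))=\min_i D_i$, the claimed equality follows.

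The main obstacle is establishing that the \rsn\ summand is genuinely affine (hence that its minimum with the $\rrbvn$ branch is concave) rather than merely monotone in $i$. The $\rrbvn$ branch is transparent because that DCT depends only on $\Wcard{M^{\rr}}$, which is linear in $v-i$; the delicate point is the $\direct$ branch, where I must argue that the largest entry of $M^{\rr}$ is pinned at $\tfrac1v$ regardless of how many permutations have been peeled off — and this is exactly where the disjoint-support, equal-coefficient structure of the $M(v)$ decomposition is indispensable. Granting that, the concavity-implies-endpoint-optimality step is immediate and, because the chord bound holds pointwise, it applies directly on the integer grid $\{0,\dots,v\}$ with no need for a continuous relaxation.
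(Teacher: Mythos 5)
Your proposal is correct and rests on the same idea as the paper's one-line justification, namely that the \csn{} summand is affine in the pivot index while the equal coefficients $\beta_j=\tfrac1v$ and disjoint supports pin down the \rsn{} summand, so the optimum sits at an extreme split. Your concavity framing (affine plus a minimum of affine functions is concave, hence minimized at an endpoint of $\{0,\dots,v\}$) is simply a rigorous elaboration of what the paper states as ``linearity of the DCT in \csn{} and the fact that all permutations are of the same size.''
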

 This follows from the linearity of the DCT in \csn, and the fact that all permutations are of the same size.  
 
The second observation is the main takeaway of this section.
The system DCT \msn is strictly lower then \csn and \rsn.

\begin{observation}
\label{obs:sysDCTMSNSup} 
 Let $\mathcal{M}$ be the family of all $M(v,u)$ matrices.
 When $0 < \crec < \frac{(2n-4)}{n^2}$, then the system DCT of \msn is strictly lower than both \csn and \rsn.
\begin{align}
    & \dct(\msn,\mathcal{M}) < \min(\dct(\bvn,\mathcal{M}),\dct(\rr,\mathcal{M}))
\end{align}  
\end{observation}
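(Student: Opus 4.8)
The plan is to prove the strict inequality directly from the three closed-form expressions supplied by Theorem~\ref{thm:mainSystemDCTreslut}, since the observation is purely a statement relating $\dct(\bvn,\mathcal{M}) = 1 + \crec(n-1)$, $\dct(\rr,\mathcal{M}) = 2 - \frac{2}{n}$, and the \msn upper bound $\dct(\msn,\mathcal{M}) \le \frac{\sqrt{1+4\crec(n-1)}+1}{2}$. Because the \msn quantity is already an upper bound on its system DCT, it suffices to show that this bound is strictly smaller than each of the other two exact values; the true \msn system DCT is then, by transitivity, also strictly smaller. To streamline the algebra I would introduce the abbreviation $a := \crec(n-1)$, which is strictly positive since $\crec > 0$ and $n \ge 2$, and rewrite the \msn bound as $\frac{\sqrt{1+4a}+1}{2}$ and the \csn value as $1+a$.

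For the comparison against \csn, I would show $\frac{\sqrt{1+4a}+1}{2} < 1+a$. Clearing the denominator and isolating the radical turns this into $\sqrt{1+4a} < 1+2a$; both sides are positive for $a>0$, so squaring is an equivalence and the inequality reduces to $1+4a < 1 + 4a + 4a^2$, i.e. $0 < 4a^2$. This holds for every $a>0$, so the \msn bound beats \csn for all $\crec>0$, without using the upper hypothesis on $\crec$.

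The interesting half is the comparison against \rsn, and this is where the hypothesis $\crec < \frac{2n-4}{n^2}$ enters and, I expect, is the crux of the argument. I would reduce $\frac{\sqrt{1+4a}+1}{2} < 2 - \frac{2}{n}$ to $\sqrt{1+4a} < 3 - \frac{4}{n}$. Substituting the upper bound $a < \frac{(2n-4)(n-1)}{n^2}$ gives $1 + 4a < 1 + \frac{4(2n-4)(n-1)}{n^2} = \frac{9n^2 - 24n + 16}{n^2}$, and the key algebraic observation is that the numerator factors as a perfect square, $9n^2 - 24n + 16 = (3n-4)^2$. Hence $1+4a < \left(\frac{3n-4}{n}\right)^2 = \left(3 - \frac{4}{n}\right)^2$, and since $3 - \frac{4}{n} > 0$ for all $n \ge 2$, taking square roots yields exactly $\sqrt{1+4a} < 3 - \frac{4}{n}$, as required. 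This also reveals that the threshold $\crec = \frac{2n-4}{n^2}$ is precisely the value at which the \msn bound and \rsn coincide, which explains why the hypothesis appears in exactly this form.

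Combining the two comparisons gives $\dct(\msn,\mathcal{M}) \le \frac{\sqrt{1+4a}+1}{2} < \min\!\left(1+a,\, 2 - \frac{2}{n}\right) = \min(\dct(\bvn,\mathcal{M}), \dct(\rr,\mathcal{M}))$, which is the claim. The only genuine obstacle is spotting the perfect-square factorization $9n^2 - 24n + 16 = (3n-4)^2$; everything else is monotone squaring of positive quantities, and I would only need to note in passing that $a>0$ and $3-\frac{4}{n}>0$ to justify those steps.
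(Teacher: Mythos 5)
Your proposal is correct and follows the same route the paper intends: the observation is stated as an immediate consequence of the closed-form system DCTs in Theorem~\ref{thm:mainSystemDCTreslut}, and your argument is exactly the explicit algebraic comparison of those three expressions (which the paper leaves implicit). Your perfect-square computation $9n^2-24n+16=(3n-4)^2$ correctly pins down that the hypothesis $\crec<\frac{2n-4}{n^2}$ is tight, i.e.\ it is precisely the point where the \msn bound meets $2-\frac{2}{n}$, so nothing is missing.
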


Part of \autoref{thm:mainSystemDCTreslut} claims the worst case matrix for every system is found at $u=0$, to get an intuition of why this is the case for systems using \rsn (\rsn and \msn), we note that we can separate the $M(v,u)$ matrix into its two components, and schedule the uniform component using \direct\ separately and the non-uniform component using either \upper for \rsn or \algPivot\ for \msn. Since the uniform matrix $M( n-1)$ can be scheduled at an optimal rate using \direct. For \csn, we can still decompose $M(v,u)$ using $n-1$ matchings, thus, the system DCT is unaffected. We give a more formal proof of this point in the appendix in \autoref{app:worstCaseAnalysisMVU}.
 
\para{An Illustrative Example}
To get a glimpse of the results of \autoref{thm:mainSystemDCTreslut}, we present them for the case of $M(v,0)=M(v)$ in \autoref{fig:dctBoundMV}. In this figure, we present the general curves of the DCT of our three systems
to visualize the results better.
We present the behavior of the matrix DCT for the entire range of $M(v)$, i.e., $v=1, \dots, 63$.  
In \autoref{fig:dctBoundMV} (a), we plot the DCT for \csn (using \bvn and \direct), \rsn (using \rr\ and \upper), and \msn (using \algPivot). We additionally plot the lower bound for \rsn from \autoref{thm:lowerBoundRSN}, since the optimal DCT for each case is not known (See Appendix \autoref{sec:loweroundOfRSN} for short discussion). 
In \autoref{fig:dctBoundMV} (b), we see the results of the throughput of each system.
We add five significant points, $[A,B,C,D,E]$, to the figures and we discuss them next. Note that the points are homologous to both figures.
Looking at the figures, recall that \msn is using the \algPivot\ as in \autoref{obs:naiveMixDCT}, thus, the DCT of \msn can be described as the line representing the minimum between the DCT of the \upper algorithm in \rsn and the DCT of \csn using \bvn, where the two lines meet at point $(A)$. This point is the worst-case demand for \msn as seen in \autoref{eq:sysDCTMSN}.  
For \rsn the worst-case matrix is at point $(D)$, which is the permutation matrix as proven in \autoref{thm:LowerBoundPerm} and \autoref{lemma:feasibleApp} and stated in \autoref{thm:mainSystemDCTreslut} and \autoref{eq:sysDCTRSN}.
We note that for matrices between points $(D)$ and $(B)$, \upper has the same bound as in $(D)$.
Finally, for \csn the worst case point is at $(C)$, which, as we noted in \autoref{eq:sysDCTBVN} is for the uniform matrix where $v=n-1$. Note that the numerical results stated in \autoref{thm:mainSystemDCTreslut} fit the location of $(A)$, $(B)$, and $(C)$ in this figure. 
Looking at the \rsn lower bound, we see that it merges with the \rsn upper bound using $\upper$ in both of the extreme points, including the worst-case point at $M(1)$, which is marked by $(D)$, and best case matrix which is $M(n-1)$ at point $(E)$.
This proves that $(D)$ is the worst matrix for \rsn.
To conclude, the throughput (i.e., $\frac{1}{\dct}$) of both \csn and \rsn is about $0.5$, the throughput of \msn is $0.625$ about $25\%$ more, as seen in \autoref{fig:dctBoundMV} (b).


Recall that, with regard to \rsn and \csn, we have already proven the system DCT for general double stochastic matrix, in \autoref{thm:main:RRsysDCT} for \rsn, and in \autoref{obs:main:DAsysDCTBound}. Since the worst-case matrix for each system is part of the $M(v,u)$ matrix family, we focus on developing the results of theorem \autoref{thm:mainSystemDCTreslut} for \msn. 
For clarity, we consider the case $u=0$.
 \subsection{DCT of \msn on M(v)}\label{sec:DCTofMSNonMv}
 
In this section, we discuss the DCT of \msn the $M(v)$ type matrices. Since our definition for \msn includes the use of a two-component system, \bvn and \rsn using \upper algorithm, we first give exact expressions for the DCT of those systems on $M(v)$ and conclude with the DCT of \msn. 
This result, in turn, will help to compute the system DCT of \msn.

\para{The DCT of \csn on $M(v)$} For \csn, the result is very straightforward and based directly on previous results.
We can use the DCT from \autoref{eq:damatDCT}. We decompose the matrix $M(v)$ into $v$ permutation matrices. And since $\frac{ \Wcard{M(v)}}{r n}=1$ we can arrive at the result.  $\dctda(\bvn,\direct,M(v))=1+\crec v$.

\para{The DCT of \rsn for $M(v)$}
We state the DCT on $M(v)$ with \rsn in the following lemma.
\begin{lemma}[The DCT of \rsn on $M(v)$]\label{lemma:DCTRROnMvUpper}
The DCT of \rsn on the $M(v)$ matrix type is 
    \begin{gather} 
   \dctrot(\rr,\upper,M(v)) = 
\begin{cases}
   (2-\frac{2}{n}) & \text{for }1\le v<\frac{2}{n}\\  
  \frac{(n-1)}{ v }    & \text{otherwise}
\end{cases}
\end{gather}
\end{lemma}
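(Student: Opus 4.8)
The plan is to combine the two DCT bounds we already established for \rsn and then take their minimum, exactly as $\upper$ is defined. Recall from \autoref{def:rsnUpperBound} that $\dctrot(\rr,\upper,M) = \min(\dctrot(\rr,\rrbvn,M), \dctrot(\rr,\direct,M))$, so the entire task reduces to evaluating each of the two component bounds on the specific matrix $M(v)$ and determining for which values of $v$ each one wins.

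First I would compute $\dctrot(\rr,\rrbvn,M(v))$ using \autoref{thm:upperRSNBVN}. Under the standing assumptions of this section ($r=1$, $\eff=1$), that theorem gives $\dctrot(\rr,\rrbvn,M(v)) = (2-\tfrac{2}{n})\tfrac{\Wcard{M(v)}}{n}$. Since $M(v)$ is doubly stochastic its total weight is $\Wcard{M(v)} = n$, so $\tfrac{\Wcard{M(v)}}{n} = 1$ and this term collapses to the constant $2-\tfrac{2}{n}$, independent of $v$. This matches the first branch of the lemma.

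Next I would compute $\dctrot(\rr,\direct,M(v))$ using \autoref{thm:defRotDCTGenApp}, which states $\dctrot(\rr,\direct,M) = (n-1)\tfrac{\max(M_{i,j})}{\eff r}$. For $M(v)$ every nonzero off-diagonal entry equals $\tfrac{1}{v}$, so $\max(M_{i,j}) = \tfrac{1}{v}$, and with $\eff=r=1$ this gives $\dctrot(\rr,\direct,M(v)) = \tfrac{n-1}{v}$, the second branch of the lemma.

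Finally I would take the minimum of the two expressions to identify the crossover. The \direct\ bound $\tfrac{n-1}{v}$ beats the \rrbvn\ bound $2-\tfrac{2}{n} = \tfrac{2(n-1)}{n}$ precisely when $\tfrac{n-1}{v} \le \tfrac{2(n-1)}{n}$, i.e. when $v \ge \tfrac{n}{2}$; conversely the constant \rrbvn\ bound is smaller for small $v$. The only genuine obstacle here is cosmetic rather than mathematical: the threshold in the lemma statement is written as $v < \tfrac{2}{n}$, which is dimensionally off and appears to be a typo for the crossover point $v < \tfrac{n}{2}$ that the algebra actually produces. I would present the case split at $v = \tfrac{n}{2}$, noting that for $1 \le v < \tfrac{n}{2}$ the \rrbvn\ term $2-\tfrac{2}{n}$ is the minimum and for $v \ge \tfrac{n}{2}$ the \direct\ term $\tfrac{n-1}{v}$ is the minimum, which together establish the piecewise formula. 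Since both component theorems are already proven, no new technical machinery is required beyond substituting $\Wcard{M(v)}=n$ and $\max(M_{i,j})=\tfrac{1}{v}$ and comparing.
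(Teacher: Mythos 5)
Your proposal is correct and follows essentially the same route as the paper's own proof: both evaluate the two component bounds from \autoref{thm:upperRSNBVN} and \autoref{thm:defRotDCTGenApp} on $M(v)$ by substituting $\Wcard{M(v)}=n$ and $\max(M(v)_{i,j})=\tfrac{1}{v}$, then locate the single intersection point of the two monotone expressions at $v=\tfrac{n}{2}$. You are also right that the threshold ``$v<\tfrac{2}{n}$'' in the lemma statement is a typo for $v<\tfrac{n}{2}$ --- the paper's own derivation solves $\tfrac{n-1}{v}=2-\tfrac{2}{n}$ to get $v\ge\tfrac{n}{2}$, exactly as you do.
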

\begin{proof}
Recall the result for the DCT of \rsn using algorithm $\upper$ in \autoref{def:rsnUpperBound}. This is the minimal value between the bound found in \autoref{thm:upperRSNBVN} and the bound found in \autoref{thm:defRotDCTGenApp}.
Setting the correct values for our setup in these two theorems, i.e., $\eff=1$, $r=1$, $\frac{\Wcard{M(v)}}{n}=1$ and $\max(M(v)_{i,j})=\frac{1}{v}$. We get two monotonous functions. This means we must find the (single) value of $v$ where they intersect. This is
$\frac{(n-1)}{ v }=(2-\frac{2}{n})$. Solving for $v$ we get $v\geq \frac{n}{2}$. The Lemma follows.
\end{proof}
 
We can now continue to the main
result of this section, the analysis of the system DCT of \msn.

\para{The DCT of \msn for $M(v)$}
First, we note that the general expression for the DCT of \msn is a two-part expression depending on the \bvn reconfiguration time $\crec$. However, in this section, we only present the part relevant to our assumptions, that is, for  $\crec < \frac{(2n-4)}{n^2}$ and we note that $\crec=\frac{1}{n}=0.015$ is in this range when $n=64$.
Furthermore, while we have set $\rrec=0$, for \csn, if $\crec=0$, the DCT will be $1$ for all matrices, as this is the only 'tax' or cost of \csn, making this system optimal for any demand.
Here we present the case $\crec < \frac{(2n-4)}{n^2}$, the  fully general case, including $\crec \geq \frac{(2n-4)}{n^2}$ is in the appendix in \autoref{app:thm:DCTMixMvuFUll}.
\begin{lemma}[DCT of \msn on $M(v)$]\label{lemma:msnUpperDCT}
For $u=0$ and $\crec < \frac{(2n-4)}{n^2}$
the DCT of \msn for $M(v)$ is the following expression.
\begin{gather}
   \dctmix(\stalgmsn,\spalgmsn,M(v))  = \\=
\begin{cases}
  1+\crec v & 1\le v<\frac{\sqrt{1+4\crec(n -1)}-1}{2\crec}\\
  \frac{(n-1)}{v}   & \text{otherwise}
\end{cases}
\end{gather}
\end{lemma}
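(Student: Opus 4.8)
The plan is to reduce the statement to a minimization of three elementary functions of $v$ and then show that one of them is always dominated. By \autoref{obs:naiveMixDCT}, since every permutation in the \bvn decomposition of $M(v)$ carries the same coefficient $\tfrac1v$, \algPivot never splits the matrix, so $\dctmix(\stalgmsn,\spalgmsn,M(v)) = \min\bigl(\dctda(\bvn,\direct,M(v)),\,\dctrot(\rr,\upper,M(v))\bigr)$. I would then substitute the two closed forms already available: the \csn side is $\dctda(\bvn,\direct,M(v)) = 1+\crec v$ (linearity of the \bvn completion time together with $\Wcard{M(v)}/n=1$), and by \autoref{lemma:DCTRROnMvUpper} the \rsn side equals $\dctrot(\rr,\upper,M(v)) = \min\bigl(2-\tfrac2n,\ \tfrac{n-1}{v}\bigr)$. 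Hence the quantity to analyze is simply $\min\bigl(1+\crec v,\ 2-\tfrac2n,\ \tfrac{n-1}{v}\bigr)$.

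Next I would study $h(v) := \min\bigl(1+\crec v,\ \tfrac{n-1}{v}\bigr)$, i.e.\ the minimum of the two \emph{non-constant} terms. The term $1+\crec v$ is increasing and $\tfrac{n-1}{v}$ is decreasing, and at $v=1$ the former is the smaller (indeed $1+\crec<n-1$ since $\crec<1$ and $n\ge 3$), so the two curves cross exactly once, at the positive root of $1+\crec v=\tfrac{n-1}{v}$, equivalently of $\crec v^2+v-(n-1)=0$, which is $\Ddot{v}=\tfrac{\sqrt{1+4\crec(n-1)}-1}{2\crec}$. Consequently $h(v)=1+\crec v$ for $v\le\Ddot{v}$ and $h(v)=\tfrac{n-1}{v}$ for $v\ge\Ddot{v}$, and $h$ attains its \emph{maximum} at $v=\Ddot{v}$, with value $1+\crec\Ddot{v}=\tfrac{1+\sqrt{1+4\crec(n-1)}}{2}$.

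It then remains to show the constant term $2-\tfrac2n$ is redundant, i.e.\ that it never realizes the outer minimum. For this I would prove that, under the hypothesis $\crec<\tfrac{2n-4}{n^2}$, the peak value of $h$ lies strictly below $2-\tfrac2n$:
\[
\frac{1+\sqrt{1+4\crec(n-1)}}{2}<2-\frac2n .
\]
Isolating the radical and squaring (the right-hand side $3-\tfrac4n$ is positive) turns this into $4\crec(n-1)<\tfrac{8(n^2-3n+2)}{n^2}$, and factoring $n^2-3n+2=(n-1)(n-2)$ collapses it to precisely $\crec<\tfrac{2(n-2)}{n^2}=\tfrac{2n-4}{n^2}$ — so the hypothesis is exactly the condition that makes the constant irrelevant. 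Since $h(v)\le\tfrac{1+\sqrt{1+4\crec(n-1)}}{2}<2-\tfrac2n$ for every $v$, the outer minimum equals $h(v)$, which is the claimed piecewise expression.

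The only genuinely non-routine step is this last algebraic equivalence between ``$\tfrac{1+\sqrt{1+4\crec(n-1)}}{2}<2-\tfrac2n$'' and the stated bound ``$\crec<\tfrac{2n-4}{n^2}$''; everything else is monotonicity bookkeeping. Two minor points I would flag: $v$ ranges over integers in $[1,n-1]$, so the boundary case $v=\Ddot{v}$ is handled by reading the inequalities with the strict/non-strict convention used in the statement; and the complementary regime $\crec\ge\tfrac{2n-4}{n^2}$, where $2-\tfrac2n$ does surface and the expression acquires a third piece, is the content deferred to the appendix (\autoref{app:thm:DCTMixMvuFUll}).
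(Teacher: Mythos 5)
Your proof is correct and takes essentially the same route as the paper: reduce via \autoref{obs:naiveMixDCT} to the minimum of the two closed-form DCTs, $\min\bigl(1+\crec v,\ 2-\tfrac{2}{n},\ \tfrac{n-1}{v}\bigr)$, and locate the crossing of $1+\crec v$ with $\tfrac{n-1}{v}$ at $\Ddot{v}=\tfrac{\sqrt{1+4\crec(n-1)}-1}{2\crec}$. Your explicit verification that $\crec<\tfrac{2n-4}{n^2}$ is precisely the condition under which the constant branch $2-\tfrac{2}{n}$ never realizes the minimum is equivalent to the paper's (appendix) derivation of the same threshold by asking whether the \bvn line is still below the \upper curve at its breakpoint $v=\tfrac{n}{2}$.
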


\para{Proof sketch}
For $u=0$ and $\crec < \frac{(2n-4)}{n^2}$ the DCT of \csn is lower than the DCT of \rsn and \upper for $v<\frac{2}{n}$.
Following \autoref{obs:naiveMixDCT}, 
to prove the lemma, we only need to find the intersection point between the DCT of \csn, $1+\crec v$ and the DCT of \rsn where the \direct\ schedule archives the minimum, i.e., the DCT is  $\frac{n-1}{v}$. Solving for $v$ gives the result and ends the proof sketch.
To find the system DCTs as stated in  \autoref{thm:mainSystemDCTreslut}, we note that the worst-case matrix is when $v=\frac{\sqrt{1+4\crec(n -1)}-1}{2\crec}$ and the DCT is then $\dct(\msn,\mathcal{M})=1+\frac{\sqrt{1+4\crec(n -1)}-1}{2\crec}\crec=\frac{\sqrt{1+4\crec(n -1)}+1}{2}$.

To conclude, we see that our result and proof of \autoref{thm:mainSystemDCTreslut} show that \rsn and \csn have a system DCT about $22\%$ higher than \msn for our chosen parameters and the family $M(v,u)$.
However, this numerical result is only for our ad-hoc set of parameters. We may want to explore what happens to this ratio in other cases.
First, let us denote the ratio between system DCT  as $\psi$. This is the minimum between the ratio of the system DCT of \rsn to \msn and \csn to \msn, that is
\begin{align}\label{eq:sysDCTRatio}
   \psi= \min(\frac{ \dct(\rr,\mathcal{M})}{\dct(\msn,\mathcal{M})},\frac{\dct(\bvn,\mathcal{M})}{\dct(\msn,\mathcal{M})})-1
\end{align}
Where $1$ was reduced such that $0\leq \psi \leq 1$.

We make the following observation regarding our parameter set and the ratio of systems DCT.

\begin{observation}[Limit of $\psi$]\label{obs:optimalParmsMSN}
    Let $\mathcal{M}$ be the family of all $M(v,u)$ matrices.
   Assuming that the \bvn reconfiguration time is bounded by $\crec < \frac{(2n-4)}{n^2}$.
$\psi$  is maximized when $\crec=\frac{1}{n}$.
For this value of $\crec$ when the number of nodes $n$ increases 
    \begin{align}
        \underset{ n\rightarrow \infty}{ \psi} =\sqrt{5}-2\approx0.236 
    \end{align}
\end{observation}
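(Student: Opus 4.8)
The plan is to treat $\psi$ as a one-variable function of the reconfiguration time $\crec$ (with $n$ held fixed), locate its maximizer, and then evaluate the resulting expression asymptotically in $n$. The key structural observation is that both competing ratios in \eqref{eq:sysDCTRatio} share the same denominator $\dct(\msn,\mathcal{M}) = D(\crec) := \frac{\sqrt{1+4\crec(n-1)}+1}{2}$ from \eqref{eq:sysDCTMSN}, so that $\psi+1 = \frac{\min(\dct(\rr,\mathcal{M}),\,\dct(\bvn,\mathcal{M}))}{D(\crec)}$. Here $\dct(\rr,\mathcal{M}) = 2-\frac{2}{n}$ from \eqref{eq:sysDCTRSN} is constant in $\crec$, while $\dct(\bvn,\mathcal{M}) = 1+\crec(n-1)$ from \eqref{eq:sysDCTBVN} is strictly increasing; the two numerators cross at $\crec^\star = \frac{n-2}{n(n-1)}$, which lies strictly inside the admissible interval $(0,\frac{2n-4}{n^2})$ (a short comparison of the two fractions, valid for $n>2$). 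This splits the analysis into two regimes.

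First I would show $\psi$ is increasing on $[0,\crec^\star]$, where the minimum equals $\dct(\bvn,\mathcal{M})$. The substitution $s = \sqrt{1+4\crec(n-1)} \ge 1$ gives $1+\crec(n-1) = \frac{s^2+3}{4}$, so the quotient $\frac{\dct(\bvn,\mathcal{M})}{D(\crec)}$ collapses to the clean form $\frac{s^2+3}{2(s+1)}$, whose $s$-derivative is proportional to $(s-1)(s+3)$ and hence nonnegative for $s\ge 1$; since $s$ is increasing in $\crec$, so is $\psi$. On $[\crec^\star,\frac{2n-4}{n^2}]$ the minimum equals the constant $\dct(\rr,\mathcal{M})$ while $D(\crec)$ keeps increasing, so $\psi$ is decreasing there. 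Therefore $\psi$ attains its maximum at the crossover $\crec^\star = \frac{n-2}{n(n-1)}$, which tends to $\frac{1}{n}$; since $\crec^\star$ and the canonical choice $\crec=\frac{1}{n}$ have the same leading behavior in $n$, I would evaluate $\psi$ at $\crec=\frac{1}{n}$ as claimed.

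For the limit, substituting $\crec=\frac{1}{n}$ gives $1+4\crec(n-1) = 5-\frac{4}{n}$, so $D(\tfrac1n)=\frac{\sqrt{5-4/n}+1}{2} \to \frac{\sqrt{5}+1}{2}$ (the golden ratio), while both $\dct(\rr,\mathcal{M})=2-\frac{2}{n}$ and $\dct(\bvn,\mathcal{M})=2-\frac{1}{n}$ tend to $2$. Hence the governing ratio tends to $\frac{2}{(\sqrt5+1)/2} = \frac{4}{\sqrt5+1} = \sqrt5-1$ after rationalizing, giving $\psi \to (\sqrt5-1)-1 = \sqrt5-2 \approx 0.236$, as stated.

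The main obstacle is the monotonicity argument in the first regime: the raw quotient $\frac{1+\crec(n-1)}{D(\crec)}$ is not visibly monotone, and differentiating directly in $\crec$ is messy. The substitution $s=\sqrt{1+4\crec(n-1)}$ is what makes the sign of the derivative transparent through the factorization $(s-1)(s+3)$; without it the proof bogs down in algebra. A secondary, more cosmetic point is reconciling the exact maximizer $\crec^\star=\frac{n-2}{n(n-1)}$ with the stated value $\frac{1}{n}$: they differ only at order $1/n^2$ and coincide in the limit, so the asymptotic claim $\sqrt5-2$ is unaffected, though a fully rigorous finite-$n$ statement would pin the optimum at $\crec^\star$.
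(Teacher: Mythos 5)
Your argument is correct and follows the same route as the paper's (very terse) justification: the maximum of $\psi$ sits at the crossover where the two numerator candidates $\dct(\rr,\mathcal{M})=2-\tfrac{2}{n}$ and $\dct(\bvn,\mathcal{M})=1+\crec(n-1)$ coincide, since both ratios share the increasing denominator $D(\crec)=\tfrac{\sqrt{1+4\crec(n-1)}+1}{2}$. The paper asserts this in one sentence without verifying monotonicity on either side; your substitution $s=\sqrt{1+4\crec(n-1)}$, which reduces the first-regime quotient to $\tfrac{s^2+3}{2(s+1)}$ with derivative proportional to $(s-1)(s+3)$, supplies the missing justification cleanly, and your limit computation $\tfrac{2}{(\sqrt{5}+1)/2}=\sqrt{5}-1$ matches the stated $\sqrt{5}-2$. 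You are also right to flag that the exact maximizer is $\crec^\star=\tfrac{n-2}{n(n-1)}$, not $\tfrac{1}{n}$: setting $1+\crec(n-1)=2-\tfrac{2}{n}$ gives $\crec^\star=\tfrac{1}{n}\bigl(1-\tfrac{1}{n-1}\bigr)$, so the paper's claim that $\psi$ ``is maximized when $\crec=\tfrac{1}{n}$'' holds only up to an $O(1/n^2)$ correction; the asymptotic value $\sqrt{5}-2$ is unaffected, but a finite-$n$ statement should place the optimum at $\crec^\star$. This is a genuine (if minor) imprecision in the paper that your write-up repairs.
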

 This can be deduced by finding the maximal value for $\psi$ from \autoref{eq:sysDCTRatio}, which achieved when $\dct(\rr,\mathcal{M})=\dct(\bvn,\mathcal{M})$, and as a result the \bvn reconfiguration rate must be set to $\crec=\frac{1}{n}$.
This observation, combined with our main result in \autoref{thm:mainSystemDCTreslut}, shows that $\msn$ maintains its advantage over the other systems for any system size.


In the following section, we empirically explore the performance of our three systems on a general double stochastic matrix. 

\section{Empirical Evaluation}\label{sec:Empirical}
In this section, we evaluate the performance of \msn and use \csn and \rsn as benchmarks. For \msn we consider the novel \algPivot, which is already described in \autoref{sec:msn}.
We begin by describing the traffic generation model and then present the results of our evaluation.  

\subsection{Traffic Generation}
Our traffic generation model is based on the model used in Eclipse \cite{bojja2016costlyEclipse} and in \cite{valls2021birkhoff} and denoted as $\TM(\largeRatio,\numberFlows,\largeLoad,n)$
A demand matrix is generated using the following four parameters. $\numberFlows$: the total number of flows per node, 
$\largeRatio$: the fraction of large flows per node, where $0 \le \largeRatio \le 1$, and 
$\largeLoad$: the load of large flow traffic, e.g., where if $\largeLoad=0.7$, then the large flows have $70\%$ of all traffic in terms of bits. 
Lastly, $n$ is the size of the square demand matrix. 
We define the number of large flows as $\numberFlowsLarge=\ceil{\largeRatio \numberFlows}$, and the number of small flows as $\numberFlowsSmall= \numberFlows-\ceil{\largeRatio \numberFlows}$.
The matrix is generated by adding $\numberFlows$ uniformly random derangement permutation matrices $P_i$. Each permutation is weighted by its type, whether small or large, plus additive white Gaussian noise to each permutation. 
The weight of a large flow permutation is, therefore, the following $\frac{\largeRatio}{\numberFlowsLarge}+\mathcal{N}(0,\sigma_l^2)$, where we set $\sigma_l= \lambda \frac{\largeRatio}{\numberFlowsLarge}$. We use a similar weight for the small flows where $\sigma_s= \lambda \frac{1-\largeRatio}{\numberFlowsSmall}$. Where $\lambda$ is the deviation from the mean weight. We use $\lambda=0.01$, meaning the standard deviation is about $1\%$ of the average.

This might generate a slightly smaller or larger matrix than a doubly stochastic matrix; therefore, we normalize its weight to one.

\subsection{Evaluation}
In this section, we discuss our empirical results. We will first see results on a chosen set of parameters and then show a sensitivity analysis. 
In the evaluation, we use the "birkhoff+" algorithm from the Birkhoff decomposition using the Julia package from  \cite{valls2021birkhoff}. Their \bvn algorithm creates what we described as an $\epsilon$ schedule, i.e., a schedule that leaves a small amount of leftover traffic that is not served. We define this formally in \autoref{def:epsilonSchedule}. 
Since we have used their algorithm for the evaluation, we also use the same $\epsilon$ value used in the evaluation of the paper, which is $\epsilon=10^{-4}$.
To keep results consistent, the matrix we used to evaluate \csn and \msn is a reconstructed matrix from the Birkhoff decomposition of  "birkhoff+".

\begin{figure*}[t]
  \begin{centering}
  \begin{tabular}{ccc}
  \includegraphics[width=.3\linewidth]{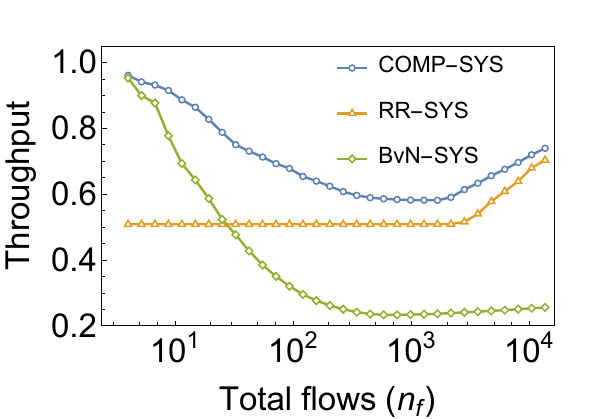} &
 \includegraphics[width=.3\linewidth]{ 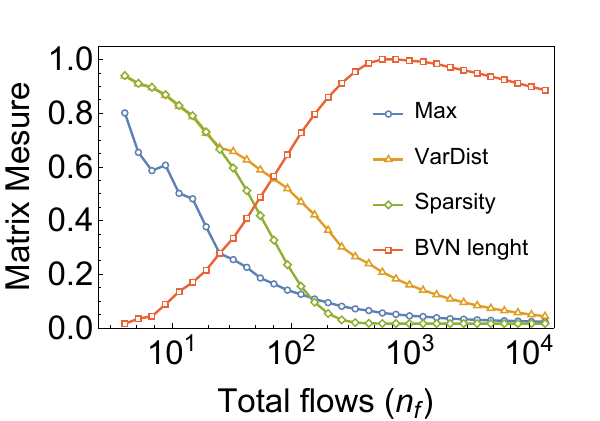}
 &
 \includegraphics[width=.3\linewidth]{ 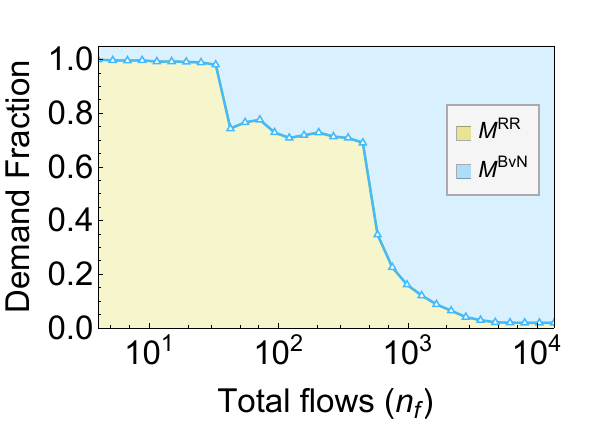}\\
  \small{(a) Three different systems } & \small{(b) Different matrix measures}  &  \small{(c) \algPivot division of the demand}
  \end{tabular}
   \caption[The throughput of the three systems on our traffic model and four different measures of the demand matrices used in the evaluation]{(a) The throughput of the three systems on matrices generated by the traffic model. Where \msn was tested with \algPivot. Note that \msn:\algPivot\ is slightly moved up. (b) Four different measures of the demand matrices from (a). (c) \algPivot\ division of the demand to $M^{\bvn}$ and $M^{\rr}$.
   }
    \label{fig:DCTsystems}
  \end{centering}
  \vspace{-.2cm}
\end{figure*}

Our main set of parameters is similar to previous work, such as in \cite{valls2021birkhoff,bojja2016costlyEclipse}. 
Unless otherwise stated, We set the fraction of large flows $\largeRatio$ to $20\%$, and the load of large flows $\largeLoad$ to $70\%$.

In \autoref{fig:DCTsystems} (a), we see the results of the three systems on traffic generated from our model. 
The $y$ axis represents the throughput, and the $x$ axis represents the number of flows generated per host.
In this simulation, we ran up to $4n^2$ 
flows per host and repeat the simulation $30$ times. 
We use $n=64$, and for \csn we use a reconfiguration time of $\crec=0.01s$.\footnote{We note that for the $\crec$ we use a similar value as the related work \cite{bojja2016costlyEclipse,valls2021birkhoff}, and that using this value in the previous analysis for $M(v,u)$ would have only improved the result further for \msn.}.
For \rsn we set $\eff=1$, that is $\rrec=0$.
In short, the traffic is generated from the model with parameters $\mathcal{T} = \TM(\largeRatio=0.2,\numberFlows=x,\largeLoad=0.7,n=64)$, where $x$ is our changing parameter and $4\le x \le 4n^2$.
First and foremost, we see the \msn consistently achieve higher throughput than \rsn and \csn. That is, the ability to decompose the matrix and serve it using two different types of schedules improves the throughput across the board.

We now use \autoref{fig:DCTsystems} (b) to help us explain the results. This figure shows the same $x$ axis, and on the $y$ axis, we see several normalized measures of the demand matrix. 
The figure presents i) Sparsity: the fraction of empty (zero) cells in the matrix. ii) Max: the value of the maximal element in the matrix. iii) BVN length:
the number of elements in the \bvn decomposition divided by the length of the longest decomposition among all those tested. iv) VarDist: the Variation distance of the matrix from the uniform matrix.
We observe using these measures that as we increase the number of flows, the generated matrices become more dense and uniform, and \bvn\ needs a larger number of permutations.
Looking at the \autoref{fig:DCTsystems} (a), on the curves of both \csn and \rsn, we can observe behavior that is similar in essence to the expected behavior from \autoref{fig:dctBoundMV} (a).  However, the crossing point between \rsn and \csn is very different here. 
The different traffic models can explain this difference in the results. In our current model, $\TM$, traffic flows are generated uniformly at random. As a result, some flows may overlap and increase the irregularity, thus decreasing the rate at which the maximal value is reduced.  However, we see that as more and more flows are introduced to the system, the more similar it becomes to $M(n-1)$ from the previous section.

Recall that \algPivot\ divides the original demand matrix $M$ into two sub-matrices $M^{\bvn}$ and $M^{\rr}$ each handles a different share of the traffic. 
In \autoref{fig:DCTsystems} (c), we see how this division is made on the demand matrices we generate. The $y$ axis shows the relative fraction of traffic scheduled on \csn or to \rsn by \algPivot, while the $x$ axis is the number of flows per host, as before.
Naturally, as less load is scheduled to \csn, more is scheduled to \rsn. 
It is possible to divide the figure into three regions: a region where traffic is almost exclusively scheduled with \csn (until about $50$ flows), a region where both systems are used significantly (until about $200$ flows), and a region where traffic is mostly scheduled with \rsn. 
We can see that the transition from one region to another is relatively abrupt. 
This can be explained in the following manner: in the first region, there are few \emph{small} flows; these flows are still relatively large, and thus, they are scheduled along with the large flows to \csn. In the second region, the result is slightly more predictable. We see that the load scheduled to \csn roughly fits with $\largeLoad$, the load of the large flows ($\largeLoad=0.7$) generated in the model. This implies that, in this region, the \emph{small} flows cross a threshold where they are small enough to be scheduled on \rsn more efficiently than with \csn. In the third region, more and more traffic is sent to \rsn, this implies that the matrix becomes more uniform as more flows are represented in the demand matrix, as we also see in \autoref{fig:DCTsystems} (b).
Comparing \autoref{fig:DCTsystems} (a) to (c), we might expect the DCT of \msn with \algPivot to be the same as \csn in the first region and the same with \rsn in the third region.
However, in the first region, there is a fraction of traffic, small in volume (related to the noise that we used in the traffic generation model), which \msn can schedule with \rsn. In contrast, \csn must schedule any traffic with the cost of a switch reconfiguration, which incurs a delay of $\crec$ per configuration regardless of the traffic volume sent. In the third region, the DCT of \rsn is becoming similar to that of \msn, however, if the \bvn decomposition can capture even a few large flows, scheduling them with \csn is still advantageous enough so that \msn has a lower DCT than either \csn or \rsn, on the entire tested range.

To conclude, we can see that the worst case throughput for each system is $\frac{1}{\dct(\msn,\mathcal{T})}\approx 0.58$ for \msn, $\frac{1}{\dct(\rsn,\mathcal{T})}\approx 0.507$ for \rsn, and for \csn  $\frac{1}{\dct(\csn,\mathcal{T})}\approx 0.232$.
The results for \csn are significantly different, which is expected as the total number of permutations rises, as we can also see in \autoref{fig:DCTsystems} (b)\footnote{We take note of the interesting phenomenon in this context: when the $\numberFlows$ increases, at some point, the length of the schedule starts to decrease. As more and more permutations are added to generate the demand, the closer the matrix is to the uniform matrix $M(n-1)$ (as is also supported by the variation distance), which can be decomposed to only $n-1$ elements, less than the worst case.}.

\begin{figure}[t]
  \begin{centering}
  \begin{tabular}{cc}
  \includegraphics[width=.47\columnwidth]{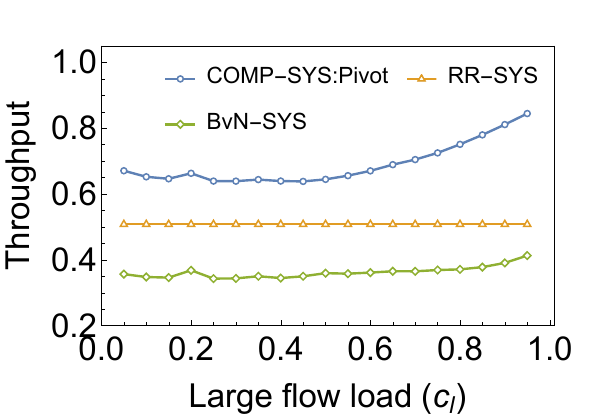} &
  \includegraphics[width=.47\columnwidth]{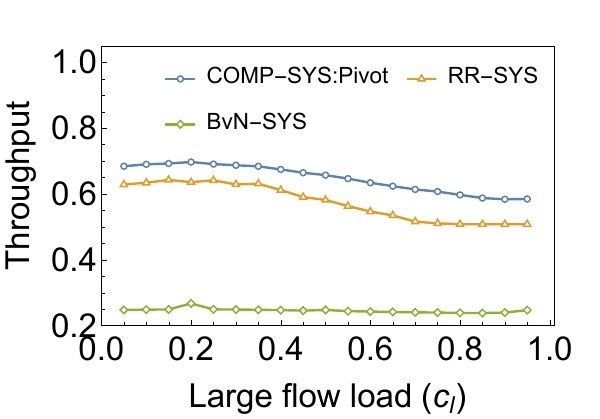} \\
  \small{(a) sparse workload} & \small{(b) dense workload}
  \end{tabular}
   \caption[The throughput on a dense and a sparse workload, with a change in the large flow load]{The throughput on a dense and a sparse workload, with a change in the large flow load $\largeLoad$}
    \label{fig:largeFlowLoad}
  \end{centering}
  \vspace{-.2cm}
\end{figure}

In the following part of this section, we study two other parameters of the model: the ratio (number) of large flows, $\largeRatio$, and the load of the large flows $\largeLoad$.
We consider two important workload types. \emph{sparse} workloads, where $\numberFlows=64$ and \emph{dense} workloads where $\numberFlows=3000$.
 In \autoref{fig:largeFlowLoad} (a) we test the following traffic model $\TM(\largeRatio=0.2,\numberFlows=64,\largeLoad=x,n=64)$, that is, a relatively sparse workloads and in \autoref{fig:largeFlowLoad} (b) and we test $\TM(\largeRatio=0.2,\numberFlows=3000,\largeLoad=x,n=64)$, a dense workloads.

In \autoref{fig:largeFlowLoad} (a) \& (b), we see that \msn achieves better throughput than either of the other systems regardless of the value of $\largeLoad$. In 
\autoref{fig:largeFlowLoad} (a) the throughput of \rsn is constant at $0.507$. This is because the matrices are sparse. In these kinds of matrices, there is a large gap between the maximal cell and average cell size, as seen in \autoref{fig:DCTsystems} (b).  We also see that both \msn and \csn improve with the larger $\largeLoad$. For \csn, since in this setting the number of permutations generated by the model is constant, we would expect no change in throughput; however, since the smaller permutations become so small when large flow becomes larger, they are discarded in the \bvn decomposition since we perform an $\epsilon$ schedule. 
Any discarded permutation will improve the completion time, but as we can observe in the figure, this effect is limited. For \msn we see a larger improvement than for \csn. Since the number of large permutations is static, the larger the load handled on these few permutations, the more effective \msn becomes, as this the system more effectively amortizes the reconfiguration time for \csn. In contrast, unlike \csn, it can handle the fraction of load represented in the small flows more effectively via \rsn.

For the dense workloads setting in \autoref{fig:largeFlowLoad} (b), we see that \csn is static, as the change in the load $\largeLoad$ doesn't affect the number of permutations. When the load $\largeLoad$ is small, these few matchings don't affect the shape of the matrix. Since the small flows are a majority in terms of numbers and load share, they form a more uniform matrix, which both \rsn and \msn can take advantage of in the same manner, hence their similar curves.

\begin{figure}[t]
  \begin{centering}
  \begin{tabular}{cc}
  \includegraphics[width=.47\columnwidth]{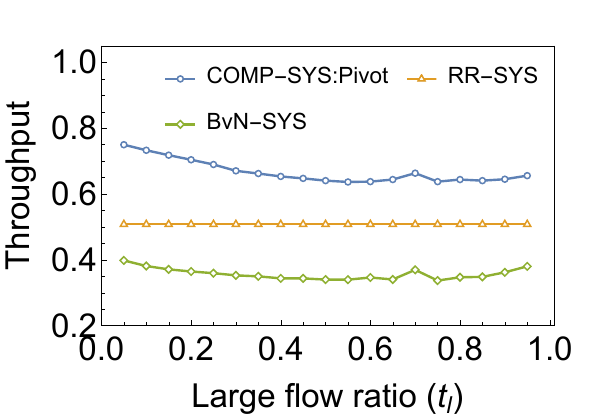} &
  \includegraphics[width=.47\columnwidth]{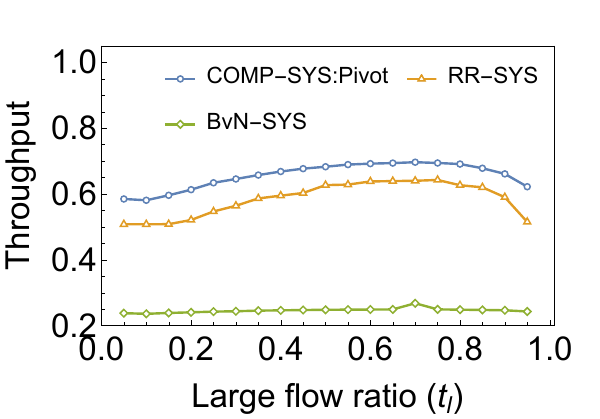} \\
  \small{(a) sparse workload} & \small{(b) dense workload}
  \end{tabular}
   \caption[The throughput on a dense and a sparse workload, with a change in the large flow ratio]{The throughput on a dense and a sparse workload, with a change in the large flow ratio $\largeRatio$}
    \label{fig:largeFlowRatio}
  \end{centering}
\end{figure}

In \autoref{fig:largeFlowRatio} (a) \& (b), we test what happens when we change the ratio of the number of large flows. In \autoref{fig:largeFlowRatio} (a)  we test the following traffic model $\TM(\largeRatio=x,\numberFlows=64,\largeLoad=0.7,n=64)$, sparse workloads, and in \autoref{fig:largeFlowRatio} (b) we test the following traffic model  $\TM(\largeRatio=x,\numberFlows=3000,\largeLoad=0.7,n=64)$, dense workloads.
For \autoref{fig:largeFlowRatio} (a), we see, as before in \autoref{fig:largeFlowLoad} (a), that for the sparse workload, the \rsn result is static.
But \msn and \rsn show a slight change, which seems very similar. This change results from the fact that at both extremes of this traffic model, where the ratio of large flows tends to one or zero, the matrix becomes more like a matrix with a single type of traffic and, therefore, generates fewer permutations. In \autoref{fig:largeFlowRatio} (b), we see the opposite effect. In the extremes of the traffic model, there are few of either large or small flows, which means that the matrix is less uniform and, therefore, the more uniform matrix benefits the throughput of \rsn. 

We can conclude that \msn is superior to either \rsn or \csn while changing the load and ratio parameter of the large flows. We see that for different workloads, different aspects of the \msn system become dominant, contributing to the change in throughput. Both \rsn and \csn showcase distinct strengths and weaknesses across various workloads. Leveraging their respective advantages, \msn can use them synergistically, resulting in better outcomes.

 \section{Related work}\label{sec:relwork}
Traditional datacenter network topologies are static and predominately based on different flavors of Clos or fat tree topologies \cite{closAl2008scalable,liu2013f10,jupiterSingh2015}, with some more recent proposals using expander graphs \cite{singla2012jellyfish,xpanderKassing2017beyond,zhang2019understandingFatClique}.  
More recently, Dynamic OCSs allowed the advent of reconfigurable datacenter networks 
 (RDCNs) \cite{wang2009your,hall2021survey}.

The two basic approaches to topology engineering in RDCNs are \emph{demand-aware} and \emph{demand-oblivious} topology designs~\cite{avin2019toward}. 
In a nutshell, demand-aware topologies are dynamic networks that adjust the topology to the actual demand, optimizing the topology to particular traffic patterns by rewiring connections and changing the topology.  
On the contrary, demand-oblivious topologies~\cite{hall2021survey} 
present the same topology to any traffic patterns, oblivious to the actual demand.  Note, however, that demand-oblivious systems can be dynamic (e.g., \rotornet \cite{mellette2017rotornet}) or static (e.g., Clos \cite{closAl2008scalable}).

 \para{Demand-aware networks and BvN type Systems}
Typical examples of demand-aware topologies are Eclipse \cite{bojja2016costlyEclipse}, projecTor \cite{ghobadi2016projector}, Morida  \cite{moridafarrington2013multiport}, and others \cite{zhou2012mirror,solsticeliu2015scheduling,hamedazimi2014firefly}. 
The topology in these systems can change dynamically. This topology is a union of matching generated from a schedule of matchings where all or most traffic flows are served in a direct connection, i.e., single hop. 
To construct the topology schedule, some works treat this problem as a max weighted matching problem on bipartite graphs, such as in Helios  \cite{farrington2010helios}, or an online stable matching problem, such as in projecTor. In Eclipse   \cite{bojja2016costlyEclipse} and FireFly  \cite{hamedazimi2014firefly}, the authors use Birkhoff von Neumann (BvN) decomposition to create a schedule of matchings, which we denoted in this paper as \csn type systems. 

While BvN saw a wide array of applications in several fields in the context of dynamic networks, it was often used to generate a schedule of matchings tailored to the demand of the network \cite{mckeown1999islip,chang2002load,hamedazimi2014firefly,bojja2016costlyEclipse,solsticeliu2015scheduling,livshits2018lumos}. In the context of dynamic networks, to be efficient, a schedule needs not only to cover all the traffic but also to use a minimal number of switch reconfigurations, which is a known NP-hard problem \cite{BVNNPdufosse2016notes, BVNNPkulkarni2017minimum}.  In a recent paper, \cite{valls2021birkhoff}, Valls et al. attempt to optimize \bvn and construct a schedule in a less naive way than previous works, allowing for fewer configurations (fewer matchings) and faster run time. Indeed, the run time of the generation process for a novel schedule itself could have a non-negligible effect on the reconfiguration time, which was also improved in \cite{valls2021birkhoff}, over simpler methods used in Eclipse \cite{bojja2016costlyEclipse} and Solstice \cite{solsticeliu2015scheduling}.
 
\lp{In Solstice \cite{solsticeliu2015scheduling}, the authors define the problem of scheduling for a hybrid network of electrical packet switches and optical circuit switches. They use an algorithm based on \bvn to generate matching for heavy flows and send smaller "flows" to the electrical network. They assume that the matrix is sparse and skewed, and this assumption helps to bound the number of matchings found by \bvn. In some sense, each matching here has a duration equal to the maximal entry since, in the first phase, they start to "stuff" the matrix until it is bistochastic. They greedily choose the matchings and stop finding new matches when the remaining traffic is small enough to be sent to the packets switch. 
Eclipse \cite{bojja2016costlyEclipse} presents a similar approach to Solstice. The authors also employ Multi-Hop Routing between each matching on the graph formed from the matchings. 
In Lumos \cite{livshits2018lumos} the authors study the trade-off between keeping the current configuration and the reconfiguration time.
Similarly to Solstice, they find the maximum cardinality matching, that is the matching with the largest minimum element. This should achieve better circuit utilization than Eclipse, using maximum weighted matching. }

The reconfiguration time is an important factor limiting demand-aware reconfigurable networks, which is limited by both hardware and software. The hardware (i.e., the OCS) must deal with moving physical mirrors, vibrations causing interference, etc. \cite{ghobadi2016projector}. The 3D MEMES switch ~\cite{memesF20011296}) allows, in principle, arbitrary reconfigurations of the topology in periods on the scale of \emph{micro-seconds}.
From a software perspective, algorithms need to generate a topology schedule, collect demand \cite{wang2009your}, etc., computations that might be significant compared to the physical reconfiguration time, as we have mentioned regarding BvN.  
 
\para{Demand Oblivious networks and round-robin type Systems}
Other systems, which we termed in this paper as \rsn (e.g., \rotornet), can achieve  \emph{nono-seconds} scale reconfiguration time. They achieve this mainly by using a predefined topology schedule of matchings and going through it periodically in a round-robin manner. In this paper, we assumed that the union of the matchings in one period of the topology schedule is a complete graph; however, this is not always the case, such as in Mars \cite{addanki2023mars}, where the graph could be of a lower degree.
\lp{In early precursors for round-robing type systems such as \rotornet and similar works,
\cite{chang2002load}, the authors show what they call a "Birkhoff-von Neumann" (\bvn) switch. This switch abstraction uses two phases. In the first, a load-balancing algorithm will attempt to generate uniform demand for the second phase. In the second phase, they use \bvn to generate $n$ (Or $O(n)$) matchings and then rotate between them. The matchings could be, for example, a set of circular permutation matrices. But they choose the duration of each matching according to \bvn.}
In Sirius \cite{ballani2020sirius}, \rotornet \cite{mellette2017rotornet} and Opera \cite{mellette2020expandingOpera}, the schedule generates full mesh connectivity over time, therefore, packets may be sent directly to any node in one hop. However, this may not be very efficient in every case, for example, when the matrix is skewed. In a skewed matrix, load-balancing methods are employed. 
For example, in Sirius, all packets are sent at two hops, while for \rotornet and Opera, only packets that cannot be sent at a single hop are sent at two hops. Furthermore, \rotornet uses RotorLB \cite{mellette2017rotornet}.  A decentralized protocol that allows nodes to negotiate routing decisions. Briefly, in RotorLB, nodes only help other nodes if they have available capacity, thus spreading the load more evenly.
In \cite{wilson2023extending}, the authors discuss a more generalized version of round-robin networks. In this work, rather than each node participating in a round-robin schedule between all nodes, each node participates in a partial schedule, where only a part of the nodes are connected. This allows the system to achieve better latency at a cost of lower throughput.

\para{Composite systems}
This paper considered the composite system a more flexible reconfigurable network design, which we denoted as  \msn. These systems combine the topology engineering approaches of both round-robin-like systems and BvN demand-aware-like systems.
A close example of a \msn type system is Cerberus~\cite{griner2021cerberus}. In Cerberus, the authors combine several network designs, rotor, demand-aware (not based on BvN), and, to a lesser extent, a static expander into a single unified system. The paper shows that throughput can be increased when traffic is routed towards the most appropriate sub-network. However, Cerberus did not offer an accurate, in-depth analysis of the bounds on the performance of its composite system in terms of demand completion time. 
\lp{In the paper CacheNet~\cite{griner2022cachenet}, the authors also combine \rotornet and a demand-aware system based on the operation of a computer cache. However, the topology in CacheNet is not a k-regular graph, and their measurements differ from those used in this work. }

\para{Performance metrics}
As we discussed in this section, there exists a large body of work on the design of communication networks, different metrics on which to compare the performance of different network designs \cite{yuan2014lfti,jyothi2016measuring,namyar2021throughput}, e.g., bisection bandwidth and other cut-based metrics. However, these give little information regarding the worst-case throughput performance \cite{jyothi2016measuring}.
In their work, Jyothi et-al. \cite{jyothi2016measuring}, defines throughput, in short, as the maximal scale-down factor for a demand matrix (traffic matrix), for which the matrix the feasible, a definition also used in some more recent works \cite{griner2021cerberus, addanki2023mars,namyar2021throughput}.  The authors note that finding the worst-case matrix for a given network is difficult. Therefore, they present ways to find a near-worst-case traffic matrix for several known systems. 
In the more recent paper by Namyar et al. \cite{namyar2021throughput} further explores the relation between bisection bandwidth and throughput. The authors also note that the throughput of reconfigurable demand-aware systems is an open problem. We note that our paper is supposed to be a step in this direction.

\section{Discussion}\label{sec:discution}
This paper proposes a simple model to compare and study the tradeoff between demand-aware and demand-oblivious topologies and traffic schedules.
Our model can be seen as an extension of the well-known \bvn-decomposition problem and an attempt to integrate Topology and Traffic engineering in a simple model.

We propose \msn and show that it can achieve higher throughput for a wide range of model parameters. Nevertheless, many open questions remain. First and foremost, what is the worst-case (double stochastic) demand matrix for this new type of system? Other interesting directions are how to extend the results to multiple spine switches and/or smaller size switches (matching). Also, can the throughput improve with even more than two-hop routing? We leave these questions for future work.

\bibliographystyle{IEEEtran} 
\bibliography{bibi.bib}

\appendices

\section{Schedules, Appendix to Section \ref{sec:modelAndProbDefinition}}\label{app:ps}
In this section, we state the properties of a feasible traffic schedule and a few helpful definitions regarding the traffic schedule needed for our proofs. The first definition is a 
of matrices that present the total indirect and direct traffic that is sent to each link at each time slot.

\begin{definition}[Indirect and direct total matrices]\label{app:def:IndirectDemandMat}
For a given traffic schedule $\ps$, with a subset of permutation list $\{\configElem_1,...,\configElem_v\}$, we define the $i$th indirect local demand matrix  $\mohSum_i$ as an $n \times n$ matrix with the following property.

 \begin{gather*}
\mohSum_i[k,j]:= 
\begin{cases}
\sum_{l=1}^n \moh_i[k,l] & \text{if }\configElem_i[k,j]=1\\  
  0    & \text{otherwise}
\end{cases}
\end{gather*}
Similarly, for a direct non-local demand matrix  $\mthSum_i$.
 \begin{gather*}
\mthSum_i[k,j]:= 
\begin{cases}
\sum_{l=1}^n \mth_i[l,j] & \text{if }\configElem_i[k,j]=1\\  
  0    & \text{otherwise}
\end{cases}
\end{gather*}
The direct local demand matrix is the same as the equivalent direct local traffic schedule matrix, 
\begin{align*}
    \mdlSum_i:=\mdl_i
\end{align*}
\end{definition}
 We can redefine the Total traffic matrix of \autoref{def:Totalmatrix} using the above matrices. 

\begin{definition}[Total traffic matrix]\label{app:def:Totalmatrix}
Given a demand matrix $M$, and \ps schedule, we define the total traffic matrix $\totalDemandMat$ as the total amount of bits that were sent on each link,
\begin{align*}
\totalDemandMat  = \sum_{i=1}^v\mdlSum_i+\sum_{i=1}^v\mohSum_i+\sum_{i=1}^v \mthSum_i
\end{align*}
\end{definition}
That is, it's the sum of all induced demand from direct and indirect traffic.

\subsection{Feasible Traffic Schedule Properties}\label{app:subsec:feasibltySchedule}

A \emph{feasible} traffic schedule needs to hold several properties. Recall that a detailed traffic schedule is a sequence of communication instructions: $\sentry_i=\{t_i,w_i,s_i,d_i,x_i,y_i\}$.
 
\para{property \#1 - admissibility}
For each $\sentry_i$, let $P_{t_i}$ be the active switch configuration at time $t_i$ then $P_{t_i}[x_i,y_i]=1$.

\para{property \#2 - time sufficiency}
Transmission time is sufficient on each configuration.
$\forall 1 \leq k,l \leq n$: 
\begin{align*}
    \alpha_{i} \ge \frac{\hat{m}_i[k,l]}{r}
\end{align*}

\para{property \#3 - causality}
For every traffic that is sent over 2-hops, every first hop must be followed by a second hop at a later time and every second hop must have a first hop at an earlier time. 

It follows from property \#3 that $\mth_1=0_n$ and $\moh_v=0_n.$

Another definition used in previous works is the $\epsilon$ schedule, a feasible but non ``complete'' schedule.

\begin{definition}[$\epsilon$ schedules]\label{def:epsilonSchedule}

Given a demand matrix $M$, we denote a pair of \ts and \ps 
as $\epsilon$ schedules if the schedules are feasible, but send all traffic but an $\epsilon \ge 0$ Frobenius fraction of it.
Formally,
\begin{align*}
\Forbcard{M - \left( \sum_{i=1}^v\mdl_i + \sum_{i=1}^v\moh_i \right ) }  \le  \epsilon 
\end{align*}
Where $\Forbcard{\cdot}$ is the Frobenius norm. 
\end{definition}

\section{Proofs for Sections \ref{sec:NetwrokDesignsRSN}}\label{app:sec:mainProofSectionForRSNDCT}
In this section, we prove the bounds we discussed in \autoref{sec:NetwrokDesignsRSN}.
We first prove the bound for the direct traffic scheduler for \rsn.

\subsection{Proofs for \autoref{thm:LowerBoundPerm}} \label{sec:LowerBoundPerm}



We begin by defining a useful concept, the \emph{traffic skewness} of $\ps$, denoted as  $\skew(\ps)$ (or $\phi$ for short), which we note was already defined similarly in Cerberus \cite{griner2021cerberus}. 
However, here, we will see how it can be naturally and formally defined using our traffic schedule framework.  
In short, $\skew(\ps)$ denotes the fraction of bits in \rsn, which are sent through a single hop, and $1-\skew(\ps)$ is the fraction that is sent via two hops. 
\begin{definition}[The skewness of a traffic schedule]\label{def:skew}
Given a \ps the skewness of the schedule $\skew(\ps)$ is the ratio between directly sent traffic to all sent traffic, formally
    \begin{align}
     \skew(\ps) &= \frac{\Wcard{\Mdl}}{\Wcard{\Mdl}+\Wcard{\Moh}+\Wcard{\Mth}}
    \end{align}
\end{definition}

When the traffic schedule is \emph{complete}, we can claim the following.

\begin{lemma}[The skewness of a complete traffic schedule]\label{lem:completeSkew}
Let \ps be a complete traffic schedule (using Definition \ref{def:Totalmatrix}), then,
    \begin{align}
        \skew(\ps) &=2-\frac{\Wcard{\totalDemandMat}}{\Wcard{M} }
    \end{align}
Where $M$ and $\totalDemandMat$ are the demand matrix and the total traffic matrix, respectively.
\end{lemma}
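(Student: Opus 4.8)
The plan is to compute the total weight $\Wcard{\totalDemandMat}$, tie it back to the demand $\Wcard{M}$ via the completeness identities, and then read off the skewness by elementary algebra. Write $A=\Wcard{\Mdl}$, $B=\Wcard{\Moh}$, $C=\Wcard{\Mth}$ for the total direct-local, indirect-local (first-hop), and direct-non-local (second-hop) mass. I would reduce the whole statement to two facts: (i) $\Wcard{\totalDemandMat}=A+B+C$, and (ii) completeness forces $A+B=\Wcard{M}$ and $A+C=\Wcard{M}$. Everything else is substitution.

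For (i), I would start from the decomposition $\totalDemandMat=\sum_i(\mdlSum_i+\mohSum_i+\mthSum_i)$ of \autoref{app:def:Totalmatrix} and use that $\Wcard{\cdot}$ is additive, so $\Wcard{\totalDemandMat}=\sum_i(\Wcard{\mdlSum_i}+\Wcard{\mohSum_i}+\Wcard{\mthSum_i})$. The only non-routine point is that projecting the aggregated traffic onto the single active link of the matching $\configElem_i$ preserves total mass. Since $\configElem_i$ is a permutation matrix, each row $k$ has exactly one column $j$ with $\configElem_i[k,j]=1$; hence $\Wcard{\mohSum_i}=\sum_{k}\sum_{l}\moh_i[k,l]=\Wcard{\moh_i}$, using the rows. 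Symmetrically, each column $j$ has a unique matched row, so summing over columns gives $\Wcard{\mthSum_i}=\Wcard{\mth_i}$, and trivially $\Wcard{\mdlSum_i}=\Wcard{\mdl_i}$. Summing over $i$ then yields $\Wcard{\totalDemandMat}=A+B+C$.

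For (ii), I take weights on both sides of the two completeness equations of \autoref{def:completeSchedule}; by linearity of $\Wcard{\cdot}$ this gives $A+B=\Wcard{M}$ and $A+C=\Wcard{M}$, hence $B=C=\Wcard{M}-A$. Substituting into (i) gives $\Wcard{\totalDemandMat}=A+2(\Wcard{M}-A)=2\Wcard{M}-A$, i.e. $A=2\Wcard{M}-\Wcard{\totalDemandMat}$. Finally, the skewness is the fraction of the demand routed in a single hop, which for a complete schedule is $\Wcard{\Mdl}/\Wcard{M}$: every unit of $M$ is launched exactly once, either as direct-local traffic or as the first hop of a two-hop path, so the total launched demand is $A+B=\Wcard{M}$. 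Substituting $A=2\Wcard{M}-\Wcard{\totalDemandMat}$ gives $\skew(\ps)=A/\Wcard{M}=\tfrac{2\Wcard{M}-\Wcard{\totalDemandMat}}{\Wcard{M}}=2-\tfrac{\Wcard{\totalDemandMat}}{\Wcard{M}}$, as claimed.

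The main obstacle is fact (i): one must verify carefully that the projection matrices $\mohSum_i,\mthSum_i$ merely redistribute mass onto the matching links without creating or destroying any, which is precisely where the permutation structure of $\configElem_i$ (exactly one nonzero per row \emph{and} per column) is essential — rows for the first-hop matrix, columns for the second-hop matrix. Once that mass-preservation is established, the rest is the two-line algebra above, and it is worth noting that $B=C$ (equal first- and second-hop mass) is exactly the symmetry that makes $\Wcard{\totalDemandMat}/\Wcard{M}=2-\skew(\ps)$ read as the average path length.
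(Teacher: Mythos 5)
Your proof is correct and reaches the lemma by a genuinely different, and more careful, route than the paper. The paper's own proof is a two-line argument: it reads $\skew(\ps)$ as the fraction of bits sent over one hop, notes the average path length is $1\cdot\skew+2\cdot(1-\skew)=2-\skew$, and concludes $\Wcard{\totalDemandMat}=(2-\skew)\Wcard{M}$ because each bit of a complete schedule contributes to $\totalDemandMat$ once per hop it traverses. You instead do the bookkeeping explicitly: you verify that the aggregation matrices $\mohSum_i,\mthSum_i$ preserve mass (using the permutation structure of $\configElem_i$ row-wise for first hops and column-wise for second hops) to get $\Wcard{\totalDemandMat}=A+B+C$, and you extract $A+B=A+C=\Wcard{M}$ by taking weights in \autoref{def:completeSchedule}. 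What this buys is a justification of the ``once per hop'' counting that the paper takes for granted, and it makes the symmetry $B=C$ explicit.

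One caveat applies equally to your proof and to the paper's. Both identify $\skew(\ps)$ with $\Wcard{\Mdl}/\Wcard{M}=A/(A+B)$, the fraction of the \emph{demand} routed in a single hop. The displayed formula in \autoref{def:skew}, however, is $\skew(\ps)=\Wcard{\Mdl}/(\Wcard{\Mdl}+\Wcard{\Moh}+\Wcard{\Mth})=A/(A+B+C)$, whose denominator is $\Wcard{\totalDemandMat}$ rather than $\Wcard{M}$; under that literal reading the lemma would require $A/(A+2B)=A/(A+B)$, which fails whenever any traffic is two-hop. So the lemma holds for the prose interpretation of skewness (which is what the paper's proof and the rest of the analysis actually use), and your final step silently adopts that interpretation rather than the displayed definition. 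This is a defect of the paper's definition rather than of your argument, but your write-up should state explicitly which definition of $\skew$ it is proving the identity for.
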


\begin{proof}
Since $\skew(\ps)$ also can also be used to denote the average (per bit) path length for $\ps$. As $\skew(\ps)$ is the fraction of bits sent at one hop and $1- \skew(\ps)$ is the fraction sent at two, the average path length will be \begin{align*}
    1\cdot \skew(\ps) + 2\cdot(1- \skew(\ps)) = 2- \skew(\ps)
\end{align*}
In turn, since $\ps$ is complete, every bit in $M$ is sent via one or two hops, so the size of the total traffic matrix $\Wcard{\totalDemandMat}$, is a function of the average path length that is
\begin{align*}
    \Wcard{\totalDemandMat}=(2- \skew(\ps)) \Wcard{M}
\end{align*}
Moving terms gives us the lemma.
\end{proof}

Note that, by definition  
$0\leq \skew(\ps) \leq 1$. Since we use either one or two-hop traffic. 
In the rest of the paper, for simplicity, we will assume that the \ps is implicit and denote $\skew(\ps)$ as simply  $\skew$.

Another definition we state here is the \emph{inactive cell}.
Not every algorithm we consider will use all cells in the matrix to transmit data.  We define an inactive cell in the following manner.

\begin{definition}[Inactive cell]\label{def:inactiveCell}
Given a traffic schedule \ps, and the total traffic matrix formed from it, a cell of $\totalDemandMat$ with the coordinates $[k,l]$ is inactive if it's never used to transmit traffic or  formally (using Definition \ref{app:def:IndirectDemandMat}),  
\begin{align}
   \forall 1 \leq i \leq v,  \mthSum_i[k,l]+ \mohSum_i[k,l]+  \mdlSum_i[k,l]=0
\end{align}
That is, this cell is zero in all the total traffic schedule matrices. 
\end{definition}

One example of such cells is the cells of the diagonal, which in our model are always inactive. 

We denote the set of all inactive cells in row $i$ as $\inActCell_i$, and the total number of these in row $i$ as $\card{\inActCell_i}$. Often, we will assume that the number of inactive cells in all rows is the same. We will denote this number as $\inActCellMax$. That is $\forall i \in [1...n], ~~\card{\inActCell_i} = \inActCellMax$.

We note that since the total traffic matrix results from a traffic schedule, we can use it to determine a lower bound on the DCT. This is given by looking at the maximal element in $\totalDemandMat$.

\begin{claim}[Total traffic matrix DCT]\label{app:clm:TotalmatrixDCT}
Consider any demand matrix $M$ and any algorithm \spalg.
Let $\totalDemandMat$ be the total traffic matrix of the traffic schedule. 
The DCT of \rsn, is lower bounded by the maximal entry in $\totalDemandMat$ 
 \begin{align}
    \dctrot(\rr ,\spalg,M ) \geq (n-1)\frac{\max(\totalDemandMat_{i,j})}{\eff r}
\end{align}  
\end{claim}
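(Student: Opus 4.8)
The plan is to prove this lower bound by a \emph{bottleneck} argument on the single physical link that carries the most traffic. First I would fix an off-diagonal cell $(k^\star,l^\star)$ attaining the maximum, i.e.\ $\totalDemandMat[k^\star,l^\star]=\max_{i,j}\totalDemandMat_{i,j}$. By \autoref{def:Totalmatrix}, the value $\totalDemandMat[k^\star,l^\star]$ is \emph{exactly} the total number of bits that the given schedule routes across the directed link $(k^\star,l^\star)$, summing the direct-local, first-hop, and second-hop contributions. Since every one of these bits must physically cross that link, the completion time is bounded below by the time needed just to drain this one link, so it suffices to lower bound the latter. Note this argument is independent of which traffic algorithm \spalg produced the schedule, which is exactly what is needed for a system-wide lower bound.

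The crucial structural feature of \rsn is that its topology schedule $\rr$ cycles through the fixed set $\cycleSched=\{\cycleSchedElem^{1},\dots,\cycleSchedElem^{n-1}\}$ of $n-1$ matchings whose union is the complete graph. Because each $\cycleSchedElem^{i}$ is a permutation and the union is complete, every directed pair $(k,l)$ with $k\neq l$ lies in \emph{exactly one} matching of the cycle; in particular $(k^\star,l^\star)$ is available in precisely one slot per cycle. Within that slot, of duration $\delta$ at rate $r$, at most $r\delta$ bits cross the link, while one full cycle occupies $(n-1)(\delta+\rrec)$ time units. Hence the long-run rate at which bits can traverse $(k^\star,l^\star)$ is at most $\tfrac{r\delta}{(n-1)(\delta+\rrec)}=\tfrac{r\eff}{n-1}$, using $\eff=\tfrac{\delta}{\delta+\rrec}$. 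Dividing the required volume by this rate yields $\dctrot(\rr,\spalg,M)\ge \totalDemandMat[k^\star,l^\star]\cdot\tfrac{n-1}{r\eff}=(n-1)\tfrac{\max_{i,j}\totalDemandMat_{i,j}}{\eff r}$, which is the claim.

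The step I expect to require the most care is turning the ``average rate'' statement into a rigorous bound at the granularity of whole slots. A clean discrete formulation counts the number $c$ of active slots of $(k^\star,l^\star)$ the schedule must contain: completeness forces $c\,r\delta\ge\totalDemandMat[k^\star,l^\star]$, and since consecutive active slots of a fixed link are separated by one full cycle, realising $c$ such slots costs at least $(c-1)(n-1)(\delta+\rrec)+\delta$ time. Substituting $c\ge \totalDemandMat[k^\star,l^\star]/(r\delta)$ recovers the stated expression up to a single additive cycle of slack caused by the final, possibly partial, cycle. In the fluid regime used throughout the paper, where $\delta$ may be taken small and in this case study $\rrec=0$ (so $\eff=1$)---exactly as in the dynamic-$\delta$ treatment of the $\rrbvn$ analysis---this lower-order term vanishes and the clean bound holds as stated. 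I would therefore present the average-rate derivation as the main line and relegate the slot-level bookkeeping to a short remark confirming the boundary correction only strengthens the inequality in the limit.
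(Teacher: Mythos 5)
Your proof is correct and takes essentially the same route as the paper's: the paper simply notes that every bit recorded in $\totalDemandMat$ must cross its link directly, reducing the claim to the \direct\ bound of \autoref{thm:defRotDCTGenApp} applied to $\totalDemandMat$, and the proof of that theorem is exactly your once-per-cycle bottleneck computation on the maximally loaded link. Your slot-level remark about the final partial cycle is, if anything, more careful than the paper, which works in the same idealized continuous regime without comment.
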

\begin{proof}
Since traffic in the total demand matrix is sent only directly (one hop)
the DCT could be determined by using the \direct\ traffic scheduler on the total demand matrix $\totalDemandMat$, similarly as in \autoref{thm:defRotDCTGenApp}. 
\end{proof}
Note that this is a lower bound, as, at best, the maximal element of the total demand matrix $\max(\totalDemandMat_{i,j})$ is transmitted continuously, without pause, but this may not always be the case.
When $\spalg$ generates a schedule that transmits continuously on all connections, this is an exact result, that is, the DCT for such a schedule is the transmission time of the maximal entry.


To find the lower bound on the DCT of \rsn, we prove the following theorem.  
\begin{restatable}{theorem}{lowerBoundRSN}[Lower Bound of DCT in \rsn]\label{thm:lowerBoundRSN}
 Let $M$ be a demand matrix and $\spalg$ a traffic scheduling algorithm for \rsn that generates a complete \ps with $\inActCellMax$ inactive cells in each row. 
 Then the \rsn demand completion time $\dctrot$ is lower bounded as
\begin{align}\label{eq:rotDCTGenInactive}
    \dctrot(\rr,\spalg ,M ) \geq  (2-\skew)\frac{1}{ \eff r} \frac{ \Wcard{M}}{ n}   \frac{ n-1}{n-\inActCellMax} 
\end{align}
 
\end{restatable}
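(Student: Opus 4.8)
The plan is to avoid reasoning about the round-robin schedule directly and instead reduce everything to the \emph{total traffic matrix} $\totalDemandMat$, composing three ingredients already in hand: the weight of $\totalDemandMat$ expressed through the skewness (\autoref{lem:completeSkew}), a counting argument that concentrates this weight on the active cells, and the lower bound on $\dctrot$ in terms of the maximal entry of $\totalDemandMat$ from \autoref{app:clm:TotalmatrixDCT}. The amplifying factor $\frac{n-1}{n-\inActCellMax}$ should emerge precisely from the averaging step, so the whole proof is essentially a chain of substitutions.

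First I would invoke \autoref{lem:completeSkew}: since $\spalg$ generates a \emph{complete} schedule, its total traffic matrix satisfies $\Wcard{\totalDemandMat} = (2-\skew)\Wcard{M}$. This rewrites the (a priori unknown) total weight routed across all links as an explicit expression in the skewness $\skew$ and the demand weight $\Wcard{M}$. Next comes the heart of the argument, an averaging step. By \autoref{def:inactiveCell}, every inactive cell of $\totalDemandMat$ carries zero traffic, so the entire nonnegative weight $\Wcard{\totalDemandMat}$ is supported only on the active cells. With exactly $\inActCellMax$ inactive cells in each of the $n$ rows, there are $n(n-\inActCellMax)$ active cells in total, and hence the maximal entry is at least the average over those cells:
\begin{align*}
\max(\totalDemandMat_{i,j}) \;\geq\; \frac{\Wcard{\totalDemandMat}}{n(n-\inActCellMax)} \;=\; \frac{(2-\skew)\Wcard{M}}{n(n-\inActCellMax)}.
\end{align*}
Finally, substituting this into \autoref{app:clm:TotalmatrixDCT}, which gives $\dctrot(\rr,\spalg,M)\geq (n-1)\tfrac{\max(\totalDemandMat_{i,j})}{\eff r}$, and regrouping the factors yields exactly the claimed bound $(2-\skew)\tfrac{1}{\eff r}\tfrac{\Wcard{M}}{n}\tfrac{n-1}{n-\inActCellMax}$.

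The only delicate point, and the step I would be most careful about, is the cell count: the bound crucially uses that the weight spreads over $n(n-\inActCellMax)$ cells rather than all $n^2$, which is what produces the factor $\tfrac{n-1}{n-\inActCellMax}$ (equal to $1$ in the baseline case $\inActCellMax=1$, where only the diagonal is inactive, recovering the clean bound used elsewhere). I would use the hypothesis that every row has the \emph{same} number $\inActCellMax$ of inactive cells honestly here; were the rows to differ, one would instead bound via the row with the fewest active cells to keep a valid lower bound. Both inequalities in the chain (max $\geq$ average, and \autoref{app:clm:TotalmatrixDCT}) point in the same direction, so their composition is a legitimate lower bound, and no step requires analyzing the periodic round-robin topology schedule itself — the reduction to the static matrix $\totalDemandMat$ does all the heavy lifting.
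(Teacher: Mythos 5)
Your proposal is correct and follows essentially the same route as the paper's proof: the paper also chains \autoref{app:clm:TotalmatrixDCT}, the max-at-least-average bound over the $n(n-\inActCellMax)$ active cells, and \autoref{lem:completeSkew} to substitute $\Wcard{\totalDemandMat}=(2-\skew)\Wcard{M}$. The only difference is the order in which the three substitutions are presented, which is immaterial.
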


  \begin{proof}
  We reach the results using the following derivation, which we explain below,
\begin{align}
    \dctrot(\rr,\spalg,M) &\ge    \frac{n-1}{\eff r} \max(\totalDemandMat_{i,j}) \label{eq:max} \\
    &\ge \frac{n-1}{\eff r} \frac{\Wcard{\totalDemandMat_}}{n \cdot (n-\inActCellMax)} \label{eq:average}\\
    &= (2-\skew)\frac{1}{\eff r} \frac{ \Wcard{M}}{n}\frac{n-1}{n-\inActCellMax} \label{eq:phi}
\end{align}
Eq. \eqref{eq:max} follows from \autoref{app:clm:TotalmatrixDCT} that a  lower bound on the DCT is the size of the maximal cell in the total traffic matrix $\totalDemandMat$. 
Eq. \eqref{eq:average} is true since the largest cell
is lower bounded by 
the average \emph{active} cell size (active cells are all cells that are not inactive).  
That is, this is $\max(\totalDemandMat_{i,j})\ge\frac{ \Wcard{\totalDemandMat}}{(n-\inActCellMax)n}$. Since \ps is complete, we can use our result for $\skew$ from \autoref{lem:completeSkew} in Eq. \eqref{eq:phi}. That is, we use the following result, $\Wcard{\totalDemandMat}=(2- \skew) \Wcard{M}$, which gives us the last step of the theorem.
\end{proof}

Note that the result from \autoref{thm:lowerBoundRSN} gives no bounds on the value of $\skew$. 
Next, we prove an upper bound on $\skew$ for the $M(v)$ matrix family (Section \ref{sec:Preliminaries}), which we will use to prove the DCT.

\begin{theorem}[Upper bound of $\skew$ on a M(v)]\label{thm:lowRSNApp}
Consider a demand matrix $M(v)$. Let $\ps^*$ be any complete traffic scheduling for \rsn with $\inActCellMax$ inactive cells in each row, which minimizes the DCT for $M(v)$. Than $\skew^*=\skew(\ps^*)$ is upper bounded by,
\begin{align}
\skew^* \leq \skew_{\min}=\frac{2v}{(n-\inActCellMax+v)}
\end{align}
 \end{theorem}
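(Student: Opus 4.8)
The plan is to bound the largest entry of the total traffic matrix $\totalDemandMat$ from below in two complementary ways and to observe that, as functions of $\skew$, these two bounds cross exactly at $\skew_{\min}$. Write $a=n-\inActCellMax$ for the number of \emph{active} cells per row. For a complete schedule, the reasoning behind \autoref{lem:completeSkew} gives $\Wcard{\totalDemandMat}=(2-\skew)\Wcard{M}$, and equivalently the single-hop (direct) weight is $\Wcard{\Mdl}=\skew\,\Wcard{M}$; for $M(v)$ we have $\Wcard{M}=n$. First I would obtain two lower bounds on $\max(\totalDemandMat_{i,j})$. (i) Averaging $\Wcard{\totalDemandMat}$ over the $na$ active cells (exactly the step in \autoref{thm:lowerBoundRSN}) gives $\max(\totalDemandMat_{i,j})\ge \tfrac{\Wcard{\totalDemandMat}}{na}=\tfrac{2-\skew}{a}$. (ii) In $M(v)$ every direct packet rides on one of the $v$ demand cells of its row, so the whole direct weight $\Wcard{\Mdl}=\skew n$ is confined to the $nv$ demand cells; averaging over them, some demand cell carries direct load at least $\tfrac{\skew n}{nv}=\tfrac{\skew}{v}$, and since each entry of $\totalDemandMat$ is at least its own direct load, $\max(\totalDemandMat_{i,j})\ge \tfrac{\skew}{v}$.

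Combining both with \autoref{app:clm:TotalmatrixDCT} yields
\begin{align*}
\dctrot(\rr,\spalg,M(v))\ \ge\ \frac{n-1}{\eff r}\,\max\!\Big(\tfrac{\skew}{v},\ \tfrac{2-\skew}{a}\Big).
\end{align*}
The first argument increases in $\skew$ while the second decreases, and they coincide precisely when $a\skew=v(2-\skew)$, i.e. when $\skew=\tfrac{2v}{a+v}=\tfrac{2v}{n-\inActCellMax+v}=\skew_{\min}$, where the bound attains its minimum $\tfrac{n-1}{\eff r}\cdot\tfrac{\skew_{\min}}{v}$. Consequently, any schedule with $\skew>\skew_{\min}$ has its direct term dominate, so $\dctrot(\rr,\spalg,M(v))\ge\tfrac{n-1}{\eff r}\cdot\tfrac{\skew}{v}>\tfrac{n-1}{\eff r}\cdot\tfrac{\skew_{\min}}{v}$. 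To conclude that the optimizer $\ps^*$ obeys $\skew^*\le\skew_{\min}$, I would exhibit a competing schedule (with the same $\inActCellMax$) of no larger DCT: the \emph{balanced} schedule that routes exactly $\skew_{\min}$ of the demand directly and spreads the remaining two-hop traffic evenly over the $a-v$ non-demand active cells makes every active entry equal to $\tfrac{\skew_{\min}}{v}=\tfrac{2-\skew_{\min}}{a}$, attaining the crossover value. Since $\ps^*$ minimizes the DCT, it is at least as good as this competitor, hence cannot have $\skew^*>\skew_{\min}$.

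The routine part is the two averaging bounds (i)--(ii); the main obstacle is the optimality step, i.e. certifying that the crossover DCT $\tfrac{n-1}{\eff r}\cdot\tfrac{\skew_{\min}}{v}$ is genuinely achievable by a \emph{feasible} schedule with the prescribed $\inActCellMax$. Concretely, one must place the evened-out two-hop traffic on the fixed round-robin matchings causally (each first hop before its second hop) while keeping the bottleneck cell busy continuously, so that the max-entry bound of \autoref{app:clm:TotalmatrixDCT} is met with equality; for the extreme case $v=1$ this balanced construction is exactly the \algOnePerm scheduler of \autoref{lemma:feasibleApp}, whose DCT $(2-\tfrac{2}{n})\tfrac{1}{\eff r}$ matches the crossover at $\skew_{\min}=\tfrac{2}{n}$, and I would generalize that construction to $M(v)$. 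An alternative that avoids an explicit global construction is a local exchange argument: whenever a demand cell carries direct load exceeding the active-cell average $\tfrac{2-\skew}{a}$, reroute an infinitesimal slice through an under-loaded intermediate, strictly lowering $\max(\totalDemandMat_{i,j})$ and hence the DCT; there the delicate point is verifying that the reroute preserves feasibility and the count $\inActCellMax$.
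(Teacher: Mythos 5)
Your proof takes essentially the same route as the paper's: two averaging lower bounds on $\max(\totalDemandMat_{i,j})$ --- the direct load confined to the $nv$ demand cells giving $\skew/v$, and a second bound decreasing in $\skew$ --- whose crossover is exactly $\skew_{\min}=\frac{2v}{n-\inActCellMax+v}$, followed by the observation that an optimizer cannot sit above the crossover. The only differences are minor: the paper obtains its decreasing bound by averaging the \emph{indirect} load over the $n(n-\inActCellMax-v)$ non-demand active cells (justified by an exchange claim that an optimal schedule sends no indirect traffic through demand cells, which your all-active-cells average $\tfrac{2-\skew}{n-\inActCellMax}$ neatly avoids), and it simply asserts the $\argmin$ conclusion without addressing the achievability of the crossover DCT that you correctly flag as the delicate step --- a gap the paper's own proof shares rather than resolves.
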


\begin{proof}
Consider a complete schedule and its skewness $\skew$.
From \autoref{lem:completeSkew} we have $\Wcard{\totalDemandMat}=(2- \skew) \Wcard{M}$.
From \autoref{app:clm:TotalmatrixDCT} it follows that 
the lower bound for the DCT is minimized when the maximum cell in $\totalDemandMat$ is minimized. 
Next, we claim that since the schedule is optimal, each cell in $M(v)$ that has an entry $\frac{1}{v}$ will only send \emph{direct} traffic. This follows from the observation that if it was needed to send an \emph{indirect} traffic that travels two hops, it was better to send the same amount of traffic directly, reducing the overall traffic sent. Contradicting the optimality of the schedule.

Therefore, there are two types of cells in the total demand matrix $\totalDemandMat$ : $nv$ cells that send direct traffic and at most $n(n-w-v)$ cells that send \emph{indirect} traffic. The total amount of direct traffic 
is $\skew \Wcard{M} = \skew n$. The maximal cell that sends direct traffic is at least the average cell $\frac{n\skew}{nv}$.
Similarly, the total amount of indirect traffic 
is $2(1-\skew)\Wcard{M} =2(1-\skew)n$.
The maximal cell that sends indirect traffic is at least the average cell $\frac{2(1-\skew) n}{n(n-\inActCellMax-v)}$.
Therefore, the minimum of the maximum cell in $\totalDemandMat$ is achieved for $\skew_{\min}$,
\begin{align*}
    \skew_{\min} = \argmin_{\skew} \max(\frac{\skew}{v},\frac{2(1-\skew)}{n-\inActCellMax-v}).
\end{align*}
The minimum $\skew$ is then when 
\begin{align*}
    \frac{\skew_{\min}}{v} = \frac{2(1-\skew_{\min})}{n-\inActCellMax-v}
\end{align*}
Which result in 
\begin{align*}
    \skew_{\min} = \frac{2v}{n-\inActCellMax+v}
\end{align*}
\end{proof}

We can now prove \autoref{thm:LowerBoundPerm}.

\LowerBoundPerm*
\begin{proof}
   To prove the lower bound, we first set the number of inactive cells to $\inActCellMax=1$, that is, only cells on the diagonal are inactive. 
   now, we can state the lower bound for \rsn as a function $\skew$ as
   \begin{align*}
       &\dctrot(\rr,\spalg ,M ) \geq (2-\skew)\frac{1}{ \eff r} \frac{ \Wcard{M}}{ n}
   \end{align*}
Clearly since $\frac{1}{ \eff r} \frac{ \Wcard{M}}{ n}>0$ we need to find an upper bound on $\skew$.
This bound is also true for matrices from the double stochastic matrix family $M(v)$.

Since, as we recall, the only inactive cells are on the diagonal, $\inActCellMax=1$, we can state the upper bound for $\skew$ on the $M(v)$ matrix family as
\begin{align*}
    \skew  \leq \frac{2v}{(n-1+v)}
\end{align*}
Setting $v=1$ in the $M(v)$ matrix family results in the permutation matrix family. That $P_\pi=M(1)$, and thus, for any $P\in P_\pi$ we get
\begin{align*}
    &\dctrot(\rr,\spalg ,P ) \geq (2-\frac{2}{n})\frac{1}{ \eff r} \frac{ \Wcard{M}}{ n}
\end{align*}
This concludes the proof for the theorem. 
\end{proof}

\subsection{ The \algOnePerm Scheduler}\label{app:sec:AlgOnePerm}

\begin{figure*}[t]
  \begin{centering}
  \begin{tabular}{c}
 \includegraphics[width=1.0\linewidth]{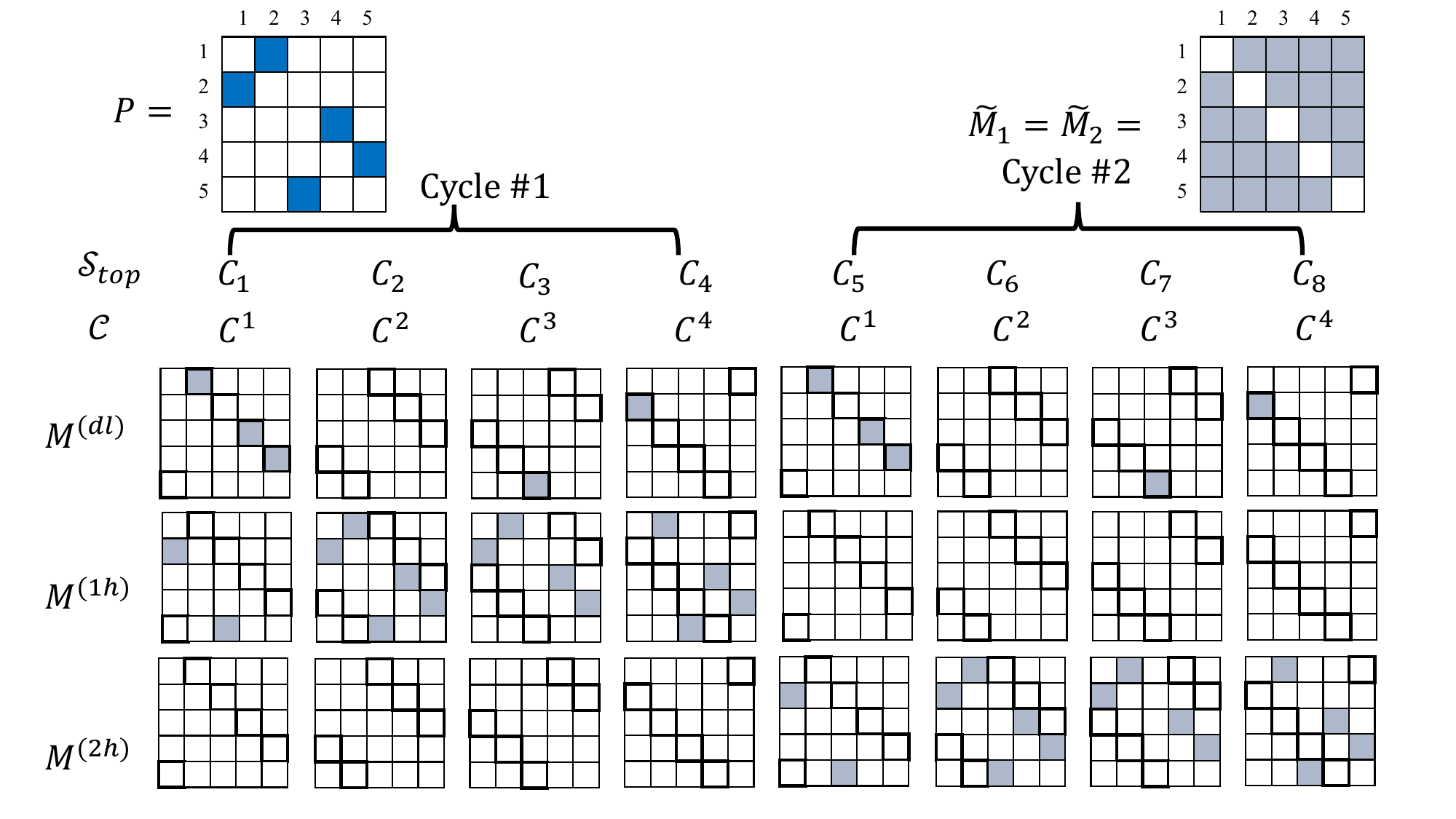}  
  \end{tabular}
   \caption[An example of the traffic schedule generated from \algOnePerm for a small matrix]{
  An example for a $5\times 5$ demand permutation matrix $P$, and the traffic schedule generated for it, \ts. The traffic schedule $\ts=\{\configElem_1,\configElem_2,\configElem_3,\configElem_4\}$ is the same topology schedule as used in \autoref{fig:example}.  Each of the two cycles used in the schedule is marked. They generate the same total traffic matrix for each cycle $\totalDemandMat_i$. Where each element marked in grey is equal to $\frac{1}{n}=\frac{1}{5}$, and the elements of the permutation matrix are marked by blue, are equal, and represent a traffic of $1$  }
    \label{fig:EDPSschdual}
  \end{centering}
\end{figure*}

In this section, we will discuss the Permeation Scheduler or, as we denote it, $\algOnePerm$. This algorithm generates a detailed traffic schedule for the permutation matrix, which, as we will later see, is optimal.  In \autoref{alg:onePremSched}, we see the pseudo-code for the scheduler. 
This scheduler receives a single permutation demand matrix $P$ and a 
cycle configuration $\cycleSched$ for \rsn
and uses them to generate a detailed traffic schedule for the demand.
We note that a detailed traffic schedule could be summarized as in \autoref{subsec:ScheduleDefinition}.
This topology schedule is composed of two cycles, that is $v=2n-2$. The first $n-1$ elements in $\ts$ belong to the first cycle, and the last $n-1$ elements belong to the second. Lastly, recall that the union of permeation matrices of the cycle configuration $\cycleSchedElem^i \in \cycleSched$ is a complete graph.

Before we begin describing the algorithm, we define a tool that will be useful in the pseudo-code of the algorithms.
For a permutation matrix $P$
we denote with the vector $\pi^{(P)}=\{\pi_1,\pi_2,...,\pi_n\}$ the indices of each non zero element in the matrix $P$, where for each row $k$ in $P$, it holds that $P[k,\pi_k]=1$. 
 
The traffic schedule generated from \algOnePerm is composed of two equal ``phases'' each corresponding to a single cycle, in turn, corresponding to the two nested \emph{For} loops in the pseudo-code which start at line \ref{line:alg:perm:FirstC} and line \ref{line:alg:perm:SecC} of the pseudo-code.  
Since $P$ is a permutation matrix, for each row $k$ in $P$ we have one cell with demand (whose indexes are denoted by  $\pi_k$) and $n-2$ empty cells, which could be used for indirect routing (other than those on the diagonal). 
Inside the two nested \emph{For} loops, we see how the algorithm redistributes $\frac{(n-2)}{n}$ bits of traffic from the demand cell in each row towards the empty cells. 
For example, in the first nested \emph{For} loop in line  \ref{line:alg:perm:directFirstC}, we see that the algorithm generates an entry for direct traffic, this will happen exactly $n$ times per run of the nested \emph{For} loop, once per each element in $P$. In line \ref{line:alg:perm:indirectFirstC}, we see an indirect local traffic entry for the empty cells of which there will $n(n-2)$ instances. This is similar to the second cycle. However, here instead of generating $n(n-2)$  indirect local entries, we generate $n(n-2)$  direct non-local traffic entries. 
We can see that for all entries in the schedule, their weight ($w$) is set to $\frac{1}{n}$. Since there are $n-2$ empty cells and one demand cell on each row, each cell in the first and last cycle will send $\frac{1}{n}$ bits of data together each cell will send $\frac{2}{n}$ bits during the course of the entire schedule.  

To help us understand the result of this algorithm, we can first summarize the detailed traffic schedule as in \autoref{subsec:ScheduleDefinition}, and then look at it from a cycle perspective. To generate the first total demand \emph{cycle} matrix, which we denote as $\Tilde{M}_1$, we take the first $n-1$ indirect and direct demand matrices and summarize them similarly to \autoref{app:def:Totalmatrix}.
\begin{align*}
    \Tilde{M}_1 =\sum_{i=1}^{n-1}\mdlSum_i+\sum_{i=1}^{n-1}\mohSum_i+\sum_{i=1}^{n-1} \mthSum_i
\end{align*}
We construct  $\Tilde{M}_2$ similarly, using the following set of $n-1$ matrices in the schedule.
Since the algorithm will send  $\frac{1}{n}$ bits on each connection once per cycle, we know that each element in each matrix is equal to $\frac{1}{n}$. Therefore, each of the total demand cycle matrices is of the form  $\Tilde{M}_1=\Tilde{M}_2=\frac{1}{n}(J_n-I_n)$. Where $J_n$ is an all ones $n\times n$ matrix and $I_n$ is the $n\times n$ identity matrix. 
In \autoref{fig:EDPSschdual}, we see an example of a schedule generated from algorithm \algOnePerm on a $5\times 5$ permutation matrix $P$. The traffic schedule is composed of a set of four permutation matrices, $\{\cycleSchedElem_1,\cycleSchedElem_2,\cycleSchedElem_3,\cycleSchedElem_4\}$,
the same we use in \autoref{fig:example}, they are denoted by the thicker border inside the schedule matrices.Looking at the two equal total traffic matrices for the two cycles $\Tilde{M}_1$, $\Tilde{M}_2$, we see that eventually, \algOnePerm distributes the permutation evenly among all connections.
The schedule itself will use the first cycle to send all indirect local traffic, $\Moh$, and keep $\mdl_i=0_n$, when $i\leq4$, and the second to send direct nonlocal traffic, $\Mth$, and keep $\moh_i=0_n$, when $i\geq4$. Both cycles are used to send direct local traffic $\Mdl$. We further note that the colored squares of each matrix in the schedule  $\Mdl$, $\Moh$, $\Mth$, and in $\Tilde{M}_1$, $\Tilde{M}_2$ represent a (normalized) demand of $\frac{1}{5}$ bits. 
After describing the algorithm, we prove the following result for \algOnePerm.

 \feasibleApp*
\begin{proof}
To prove this bound, we first prove the schedule generated is complete, and then we prove it is feasible, and lastly, we calculate the DCT of this schedule. 
For the purpose of brevity and simplicity, we will assume that $P$ is saturated and double stochastic and that the cycle hold time is $\eff=1$, and that the transmission rate is normalized $r=1$, that is, we set $\frac{\Wcard{P}}{\eff n r}=1$. 

First, recall that there are two cycles in the $\ps$ generated by the algorithm, which means we have $2n-2$ elements in $\ps$ in total.  We show that $\ps$ is complete since it satisfies \autoref{def:completeSchedule} of a complete schedule.
Summarizing the total direct and indirect local traffic matrices as in \autoref{subsec:ScheduleDefinition}  we get
\begin{align*}
&\sum_{k=1}^{2n-2}\mdl_k+ \sum_{k=1}^{2n-2}\moh_k=P
\end{align*}
\khen{see}
To see why this is the case let us look at the first nested \emph{For} loop in the pseudo-code of \autoref{alg:onePremSched} in line \ref{line:alg:perm:FirstC}. We see that the loop runs exactly once for each different connection of the first cycle, as this loop is used once for each $\cycleSchedElem^i$ and once per connection in each $\cycleSchedElem^i$. In total, for each node $k$,  we will have $(n-1)$ entries ($\sentry$) generated. $1$ entry for the direct traffic and $(n-2)$ entries for indirect local traffic. 
In the second \emph{For} loop in the pseudo-code of \autoref{alg:onePremSched} in line \autoref{line:alg:perm:SecC} we similarly have $1$ entry per node of direct traffic and zero indirect local traffic in this cycle.  Since the weight of each entry is $\frac{1}{n}$ we have a total weight of $\frac{1}{n}+\frac{1}{n}+\frac{n-2}{n}=1$ per node, as each line of $P$. Since every entry is generated with the following source and destination fields: $s=k$ and $d=\pi_k(P)$, which are the same as $P$ for every node $k$, that is, for every line in $P$, we have our proof for completeness.


We must look at the properties described in \autoref{app:subsec:feasibltySchedule} to show feasibility.
To prove the first property, property \#1, we need to prove that each entry $\sentry$ is admissible.
Since we built the schedule using only cells where $\cycleSchedElem^i[k,j]=1$, for all $\cycleSchedElem^i\in \cycleSched$, that is, only for connected matching, as we also set $t=i$ for every traffic entry $\sentry$,  as we see in lines~\ref{line:alg:perm:directFirstC},~\ref{line:alg:perm:indirectFirstC},~\ref{line:alg:perm:indirectSecC} ~and~\ref{line:alg:perm:indirectSecC},
  of the pseudo-code of \autoref{alg:onePremSched}, this propriety thus holds trivially.  

For property \#2,
In lines~\ref{line:alg:perm:directFirstC},~\ref{line:alg:perm:indirectFirstC},~\ref{line:alg:perm:indirectSecC} ~and~\ref{line:alg:perm:indirectSecC}, of \autoref{alg:onePremSched} we see that each entry has exactly $\frac{1}{n}$ bits. Since the size of the transmission time for each switch configuration is also $\frac{1}{n}$ seconds (recall that $r=1$) property \#2 will hold with one assumption: That no two entries are sent on the same connection at same switch configuration $\cycleSchedElem^i$.
To see if this is indeed the case, let us look at the first nested \emph{For} loop in the pseudo-code of \autoref{alg:onePremSched} in line \ref{line:alg:perm:FirstC}. We see that the loop runs exactly once for each connection of the first cycle, as this loop is used once for each $\cycleSchedElem^i$ and once per connection in each $\cycleSchedElem^i$. In total, we will have  $n(n-1)$ entries $\sentry$ generated.  So, the number of entries is correct, but is each entry sent on a different connection during the cycle? Again, looking at the first nested \emph{For} loop, each entry has a different source and destination (as $k$ and $j$ are always different) we see this is also correct. 
We can use the same arguments for the second nested \emph{For} loop of the second cycle in line \ref{line:alg:perm:SecC}. 
Thus we can conclude that each connection will be used exactly once for each switch configuration to transmit $\frac{1}{n}$ bits, and therefore, property \#2 holds.  

The rest of the properties can be proven naively by looking at the construction of the schedule from two cycles.
For propriety \#3 
this is true since, indeed, as we can see in the pseudo-code of \autoref{alg:onePremSched}, we see that all indirect local traffic is sent first. Therefore, this property holds, as there is exactly one packet sent in the first cycle, as in the second with the same weight.

Since the schedule is complete and feasible, we only need to calculate the duration of each cycle. Since, as we learn by looking at the total demand cycle matrices $\Tilde{M}_1$, and $\Tilde{M}_2$ each cell will send $\frac{1}{n}$ (normalized) bits in a cycle, we set $\delta$ to be $\delta=\frac{1}{n}$ seconds and each cycle has length of $(n-1)\frac{1}{n}$ for a total of 

\begin{align*}
    2(n-1)\frac{1}{n}=2-\frac{2}{n}
\end{align*} 

We can re-scale this expression by $\frac{\Wcard{P}}{\eff n r}$.
This means that  each cell will send $\frac{1}{n}\cdot \frac{\Wcard{P}}{n}$ bits, and we need to set $\delta$ to be $\delta=\frac{1}{n}\cdot \frac{1}{\eff r}$ seconds, to get our theorem.
\end{proof}

\begin{algorithm}[t]
  \caption{The \algOnePerm\ Algorithm}\label{alg:onePremSched}
  \begin{algorithmic} [1]
   \Require  $P \in {\mathbf{R}^+}^{n\times n}$, $\cycleSched$ \Comment{{\color{blue} saturated permutation matrix and a $\rr$ cycle configuration, $\cycleSched=\{\cycleSchedElem^1,\dots. \cycleSchedElem^{n-1}$\} }}
   \Ensure $\ditPS$\Comment{{\color{blue} A detailed traffic schedule (Section \ref{subsec:ScheduleDefinition})}}
   \State $\ditPS=\{\}$
  \ForAll {$\cycleSchedElem^i\in \cycleSched$}  \Comment{{\color{blue}  Generate the first cycle}} \label{line:alg:perm:FirstC}
   \ForAll {$k,j \in \cycleSchedElem^i[k,j]=1$} 
     \If{$j= \pi_k(P)$}   
      \State $\sentry=\{i,\frac{1}{n},k,\pi_k(P),k,\pi_k(P)\}$  \Comment{{\color{blue}  Direct traffic}} \label{line:alg:perm:directFirstC}
      \Else 
     \State $\sentry=\{i,\frac{1}{n},k,\pi_k(P),k,j\}$
      \Comment{{\color{blue}  Indirect traffic}} \label{line:alg:perm:indirectFirstC}
      \State  $\ditPS=\ditPS.\textsc{append}( \sentry)$ 
     
       \EndIf
    \EndFor
      \EndFor

   \ForAll {$\cycleSchedElem^i\in \cycleSched$}  \Comment{{\color{blue} Generate the second cycle}} \label{line:alg:perm:SecC}
   \ForAll {$k,j \in \cycleSchedElem^i[k,j]=1$} 
     \If{$j= \pi_k(P)$}
          \State $\sentry=\{i,\frac{1}{n},k,\pi_k(P),k,\pi_k(P)\}$   \Comment{{\color{blue}  Direct traffic}} \label{line:alg:perm:directSecC}
      \Else 
    \State  $\sentry=\{2i,\frac{1}{n},k,\pi_k(P),j,\pi_k(P)\} $ 
      \State  $\ditPS=\ditPS.\textsc{append}( \sentry)$   \Comment{{\color{blue}  Indirect traffic}} \label{line:alg:perm:indirectSecC}
       \EndIf
    \EndFor
      \EndFor
     \end{algorithmic}
\end{algorithm}
\label{alg:CycleMatchingNum}
 \subsection{ The \rrbvn Scheduler}\label{app:sec:rrbvnscheduler}
 
In this section, we describe the \rrbvn traffic scheduler algorithm.
This can schedule any doubly stochastic matrix $M$ and achieves a DCT, which we described in \autoref{thm:upperRSNBVN}. We give pseudo-code for this algorithm in \autoref{alg:rrbvn}.
The algorithm requires a demand matrix and a $\rr$ topology schedule ($n-1$ permutation matrices whose union creates a complete graph), The algorithm first creates a BvN decomposition from $M$, that is, it creates a permutation matrix list, $\{P_M=\{P_1,...,P_v\}$, and a coefficients list $\beta_M=\{\beta_1,...,\beta_v\}\}$. 
The algorithm then generates 'sub-schedules' $\ps^*$ for each permutation matrix using the \algOnePerm algorithm.  However, \algOnePerm requires a doubly stochastic matrix, which is why \rrbvn will normalize each matrix $P_i$ by $\beta_i$, that is, it will schedule $\frac{1}{\beta_i}P_i$. After receiving the schedule from \algOnePerm, the algorithm restores the demand to its original size. This can be done by, for example, multiplying each element in the traffic schedule by $\beta_i$, e.g. $\beta_i\cdot\moh_i$.
The algorithm then schedules each permutation sequentially, each time setting the slot hold time to $\delta_i=\frac{\beta_i}{r}$, where $\delta_i$ is the slot hold time for the $i$th sub-schedule.
Since the sum total of all coefficients is one, $\sum_v \beta_i=1$ the total DCT for any $M$ on \rsn is the same as the DCT for a permutation matrix (as we've proved in \autoref{thm:upperRSNBVN}.

\begin{algorithm}[t]
   \begin{algorithmic}[1]
     \caption{The \rrbvn\ Algorithm}\label{alg:rrbvn}
 
     \Require  {$M \in {\mathbf{R}^+}^{n\times n}$, ~$\cycleSched$ }\Comment{{\color{blue} a doubly stochastic matrix and a $\rr$ cycle configuration}} 
      \Ensure $\ps$ \Comment{{\color{blue} a traffic schedule}} 
        \State $\ps=\{\}$
     \State  $P_M=\{P_1,...,P_v\}$,$\beta_M=\{\beta_1,...,\beta_v\}$  \Comment{{\color{blue} Decompose $M$ to a set permutation matrices and coefficients using \bvn}} 
  
    \ForAll {$i \in \card{P_M}$}
      \State $\ps^* =\algOnePerm (\frac{1}{\beta_i}P_i) $
      \Comment{{\color{blue}  Create a schedule for $\frac{1}{\beta_i}P_i$ using \algOnePerm}}
      
      \State re-scale the schedule the schedule $\ps^*$  by $\beta_i$ (i.e., By multiplying each element in the traffic schedule by $\beta_i$)\;
      \State $\ps \cup \ps^*$
    \EndFor
    \end{algorithmic}
\end{algorithm}

\subsection{Proofs for \autoref{thm:defRotDCTGenApp}}\label{sec:directBoundProof}

\defRotDCTGenApp*

\begin{proof}[Proof of Theorem \ref{thm:defRotDCTGenApp}]
Recall that \rsn rotates between $n-1$ predefined matchings, which, when combined in a full cycle, form a complete graph. During a \emph{slot}, there is a single active matching on which packets are sent. 
Since each source-destination pair appears only once in the complete graph, it has to wait for all other $n-2$ matchings to rotate until it is available again to send packets in a single hop.  
We start from the most general expression for DCT as stated in \autoref{eq:scheduleGeneral}, where $\spalg=\rr$ and $\stalg=\direct$, we get.
\begin{align*} 
    \dctrot(\rr ,\direct,M )&= \sum_{i=1}^v (\alpha_i  +R_i) \nonumber
    \\ &=\sum_{i=1}^v (\delta  +\rrec) =v (\delta  +\rrec)
\end{align*}
We want to find a closed-form formula for this result without $v$.
Since, in this setting, we cannot send packets at more than one hop to transmit $M$, we need to wait for the network to send each entry on the demand matrix individually. Therefore, the DCT will be a function of the maximal entry of $M$, that is $\max(M_{i,j})$. 
Let us denote the number of complete cycles used to transmit $M$ as $x$.
In each cycle there are $n-1$ matchings use, and therefore $n-1$ periods of length $\delta$ used for transmission and $n-1$ reconfigurations of length $\rrec$ in each cycle, that is the DCT is,
$ \dctrot(\rr ,\direct,M )=x(n-1)(\delta  +\rrec)$.
Now, we need to find $x$. The number of cycles is the number of periods of length $\delta$ needed to transmit $\max(M_{i,j})$. This is simply the time to transmit the entire entry at full rate $\frac{\max(M_{i,j})}{r}$, to get $x$, we divide by $\delta$, that is,$x=\frac{\max(M_{i,j})}{\delta r} $. We get
\begin{align*}
    &\dctrot(\rr,\direct,M )=v(n-1)(\delta+\rrec)=\\
     &=\frac{(n-1)\max(M_{i,j})(\delta+\rrec)}{\delta r}
\end{align*}
And by setting $\eff=\frac{\delta}{\delta+\rrec}$ the proof is complete. 
\end{proof}

\section{The Running Time of \algPivot}\label{sec:sruuningtime}
The running-time complexity of \algPivot\ is foremost a function of the \bvn decomposition used, denoted as $O(\bvn)$ (and there exist many results in the literature as in \cite{valls2021birkhoff}).
Given a decomposition, we need to sort the $v$ elements of the decomposition, $O(v\log(v))$. To calculate the DCT of \csn, we need to look at the list of coefficients $\beta_i$ with $O(v)$.  To calculate the DCT of \rsn on its fraction of traffic, we need to find the maximal and mean element of each sub-matrix. We, therefore, need to construct sub-matrices, adding or removing one permutation at a time, which, since a permutation is sparse, is an $O(n)$ operation. This operation can be combined with simple bookkeeping to keep track of the maximal and mean elements. 
The run time of the for loop is $ vn+v$  in total. So the running-time complexity of the algorithm is $O(v(n +\log(v)))+O(\bvn)$.

 \section{Proofs and Analysis for Section \ref{sec:thecasestudy}}\label{app:sec:caseStudyTheromproof}

In \autoref{sec:thecasestudy}, we analyzed the system DCT for \msn on the $M(v)$ matrix. In this section, we expand this analysis to the more general case $M(v,u)$, and for all ranges of $\crec$ and $n$, we prove that the worst-case system DCT for all systems is at $u=0$ we also state the expressions for the other two systems on $M(v,u)$.


Before starting this section, we discuss simple extensions to \algPivot\ and the \upper traffic schedulers, denoted as \algPivotU and \upperU, which will allow both to work optimally with the $M(v,u)$ matrix family when $u>0$. 
For the systems \rsn and \msn, we will see that we can decompose the matrix  $M(v,u)$ into two parts $uM(n-1)$ and $(1-u)M(v)$, and schedule each part independently with no negative effects on DCT. 
We first define \algPivotU as a pair of topology and traffic schedulers $\{\stalgmsnU,\spalgmsnU\}$.
In this algorithm, we remove the uniform matrix component, $uM(n-1)$, from the demand matrix (if such exists) and then schedule it to be served using \rsn.
We then use \algPivot\ to schedule the remaining demand matrix, $(1-u)M(v)$. Since  $M(v,u) = (1-u)M(v) + uM(n-1)$ we serve all traffic.
We define the DCT of such a system and algorithm as follows
 \begin{definition}[DCT of \msn with \algPivotU]\label{def:naiveMixDCTUniform}  \begin{align}
    & \dctmix(\stalgmsnU, \spalgmsnU, M)=\nonumber\\
    & =\dctrot(\rr, \upper, uM(n-1))+ \nonumber\\&+\dctmix(\stalgmsn, \spalgmsn, (1-u)M(v))
\end{align}  
\end{definition}
Similarly, algorithm  \upperU, decomposes the matrix  $M(v,u)$ into two parts $uM(n-1)$ and $(1-u)M(v)$ and schedules the uniform part optimally on \rsn with the $\direct$ algorithm, while it schedules  $(1-u)M(v)$ with $\upper$ as the traffic schedule. 
We note that when $u=0$, both algorithms are the same as \algPivot\ and \upper.

We also note that with \csn it would not be helpful to decompose the matrix in such a way. Since \csn cannot use \rsn type schedules, this will increase the number of reconfigurations and the DCT.

\subsection{Worst Case Analysis for M(v,u)}\label{app:worstCaseAnalysisMVU}
In this section, we analyze the worst case for each system and show that it is at $u=0$ as stated in \autoref{thm:mainSystemDCTreslut}.

To prove the worst case for \csn, we use \bvn to decompose the matrix $M(v,u)$ into $n-1$ permutations.  $M(v, u)$ can be decomposed optimally into $v$ permutations, each with a total load per row of $\frac{1-u}{v}+\frac{u}{n-1}$, and another set of $n-1-v$ permutations each with a total load per row of  $\frac{u}{n-1}$ combining these, will give us.
\begin{align}
 &\dctda(\bvn,\direct,M(v,u))=\nonumber \\
 &=(\frac{1-u}{v}+\frac{u}{n-1})v+ \frac{(n-1-v)(u)}{n-1}+(n-1)\crec\nonumber \\ &=1+(n-1)\crec 
\end{align}
Note that here we assume a perfect \bvn decomposition algorithm with full knowledge of the original $v$ permutations. 

Since we have shown that there is no difference between the case of $M(v,u)$ for $u>0$ and $1\leq v\leq n-2$ and $M(n-1)$, this shows that for \csn, the worst case is found (also) at $u=0$.

For \rsn we prove the following lemma.
\begin{lemma}[worst case DCT for \rsn]\label{lem:RsnWorstCaseDCT}
    The worst case DCT for \rsn is for $u=0$. 
        \begin{align}
        \dctrot(\rr,\upperU, M(v,u))\leq  \dctrot(\rr,\upper ,M(v ))
    \end{align}
\end{lemma}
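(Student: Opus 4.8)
The plan is to unfold the definition of the \upperU scheduler on \rsn, reduce both resulting terms to the closed-form bounds already proved, and then show that the single quantity governing the inequality, $\dctrot(\rr,\upper,M(v))$, is always at least $1$; throughout I work with the case-study normalization $\eff=1$, $r=1$. By Definition \ref{def:naiveMixDCTUniform} instantiated for \rsn, the \upperU scheduler splits $M(v,u)=uM(n-1)+(1-u)M(v)$ into a uniform component $uM(n-1)$, served by \direct, and a component $(1-u)M(v)$, served by \upper, so
\begin{align*}
\dctrot(\rr,\upperU,M(v,u)) = \dctrot(\rr,\direct,uM(n-1)) + \dctrot(\rr,\upper,(1-u)M(v)).
\end{align*}

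Next I would evaluate each term. Every off-diagonal cell of $uM(n-1)$ equals $\frac{u}{n-1}$, so its maximum is $\frac{u}{n-1}$ and \autoref{thm:defRotDCTGenApp} gives $\dctrot(\rr,\direct,uM(n-1)) = (n-1)\frac{u}{n-1} = u$. For the second block I would exploit scale-invariance: the \rrbvn bound (\autoref{thm:upperRSNBVN}) is linear in $\Wcard{\cdot}$ and the \direct bound (\autoref{thm:defRotDCTGenApp}) is linear in the maximal entry, so multiplying $M(v)$ by $(1-u)\ge0$ scales both by that common factor, and since the minimum of two quantities scaled by a common nonnegative factor is that factor times the minimum, $\dctrot(\rr,\upper,(1-u)M(v)) = (1-u)\,\dctrot(\rr,\upper,M(v))$. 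Writing $D := \dctrot(\rr,\upper,M(v)) = \min\!\big(2-\tfrac{2}{n},\,\tfrac{n-1}{v}\big)$ (from Definition \ref{def:rsnUpperBound}, using $\Wcard{M(v)}=n$ and $\max(M(v)_{i,j})=\tfrac1v$), the two evaluations combine to
\begin{align*}
\dctrot(\rr,\upperU,M(v,u)) = u + (1-u)D = D - u(D-1).
\end{align*}

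The lemma then reduces to the single claim $D\ge1$: since $u\ge0$, the inequality $D-u(D-1)\le D$ holds exactly when $D-1\ge0$. I would verify $D\ge1$ directly from its defining minimum, using $2-\tfrac{2}{n}\ge1$ for $n\ge2$ and $\tfrac{n-1}{v}\ge1$ for every admissible $v\le n-1$. As a byproduct, $D-u(D-1)$ is non-increasing in $u$, so its maximum over $u\in[0,1]$ is attained at $u=0$, which makes the phrase ``the worst case is at $u=0$'' precise.

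The arithmetic is short; the one point that genuinely needs care --- and hence the main obstacle --- is justifying the additive decomposition in the first display: that serving the two components sequentially on \rsn is feasible and incurs no cross-term cost, and in particular that routing the uniform block through \direct is both legitimate and optimal there (it beats \rrbvn because $2-\tfrac{2}{n}\ge1$). This is exactly the property asserted for \upperU in the paragraph preceding Definition \ref{def:naiveMixDCTUniform}, so I would lean on it; the rest is bookkeeping.
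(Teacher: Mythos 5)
Your proposal is correct and follows essentially the same route as the paper: decompose $M(v,u)$ into $uM(n-1)$ served optimally (DCT $u$) and $(1-u)M(v)$ served by \upper, use scale-invariance to get $(1-u)\,\dctrot(\rr,\upper,M(v))$ for the second block, and conclude via $u\le u\,\dctrot(\rr,\upper,M(v))$ --- which is exactly your $D\ge 1$ step made explicit, a point the paper leaves implicit when it asserts $\dctrot(\rr,\upper,uM(n-1))\le\dctrot(\rr,\upper,uM(v))$. The only cosmetic difference is that you evaluate the two terms in closed form and rearrange to $D-u(D-1)\le D$, whereas the paper chains the inequality through the intermediate quantity $\dctrot(\rr,\upper,uM(v))$ and then sums by linearity.
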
 
\begin{proof}

With $\upperU$ for any $u>0$ we decomposes $M(v,u)$ to $M(v,u)=(1-u)M(v)+uM(n-1)$.
We then send each sub-matrix separately on \rsn. 
For any $v \ge 1$ and $0\leq u \leq 1$  is holds that  
 \begin{align*}
 &\dctrot(\rr,\upperU, M(v,u))=\\
  &= \dctrot(\rr,\upper,uM(n-1))+\dctrot(\rr,\upper,(1-u)M(v)) \\
  &\leq\dctrot(\rr,\upper,uM(v))+\dctrot(\rr,\upper,(1-u)M(v)) \\
  &= \dctrot(\rr,\upper,M(v))
 \end{align*}

\noindent  The third line holds since: 
\begin{align*}
\footnotesize
    &\dctrot(\rr,\upper,uM(n-1)) \leq  \dctrot(\rr,\upper,uM(v)),
\end{align*}
which it true since $M(n-1)$ is sent optimally in with a DCT of $1$ for any $u$ and $n$.

\noindent The fourth line holds since \upper will use the same algorithm (i.e., \direct or \rrbvn) in $M(u,v), (1-u)M(v)$ and $uM(v)$ since the ratio between the maximum and the average cells is the same in all three matrices. In turn, the DCT equality follows.
\end{proof}

Finally, for \msn we prove the following lemma.

\begin{lemma}[worst case DCT for \msn]\label{lem:mixNetWorstCaseDCT}
    The worst case DCT for \msn is for $u=0$.  
    \begin{align*}
        &\dctmix(\stalgmsnU,\spalgmsnU,M(v,u) \leq \dctmix(\stalgmsn,\spalgmsn,M(v))
    \end{align*}
\end{lemma}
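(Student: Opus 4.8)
The plan is to evaluate both sides explicitly using \autoref{def:naiveMixDCTUniform} and \autoref{obs:naiveMixDCT}, so that the inequality collapses to an elementary comparison of two minima that share a common first argument. First I would expand the left-hand side. By \autoref{def:naiveMixDCTUniform},
\[
\dctmix(\stalgmsnU,\spalgmsnU,M(v,u)) = \dctrot(\rr,\upper,uM(n-1)) + \dctmix(\stalgmsn,\spalgmsn,(1-u)M(v)).
\]
The uniform component $uM(n-1)$ is a best case for \rsn: its maximal entry is $\tfrac{u}{n-1}$, so (with $\eff=1$, $r=1$) the \direct\ schedule attains DCT $u$, which is already no larger than the \rrbvn\ value $(2-\tfrac{2}{n})u$. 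Hence $\dctrot(\rr,\upper,uM(n-1)) = u$.

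Next I would evaluate the \msn\ term on the scaled matrix $(1-u)M(v)$. Since $(1-u)M(v)$ is still a single-coefficient $M(v)$-type matrix, \autoref{obs:naiveMixDCT} applies and gives
\[
\dctmix(\stalgmsn,\spalgmsn,(1-u)M(v)) = \min\!\bigl((1-u)+v\crec,\ (1-u)\,\dctrot(\rr,\upper,M(v))\bigr),
\]
where I use that the \csn\ transmission time scales with the weight (giving $(1-u)$) while the reconfiguration cost $v\crec$ is invariant under scaling, and that both the \direct\ and \rrbvn\ DCTs (hence their minimum \upper) scale linearly in the weight, so that $\dctrot(\rr,\upper,(1-u)M(v)) = (1-u)\dctrot(\rr,\upper,M(v))$ exactly as in the fourth line of the proof of \autoref{lem:RsnWorstCaseDCT}. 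Adding the $u$ from the uniform block and pulling it inside the minimum yields
\[
\dctmix(\stalgmsnU,\spalgmsnU,M(v,u)) = \min\!\bigl(1+v\crec,\ u + (1-u)\,\dctrot(\rr,\upper,M(v))\bigr).
\]

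The right-hand side, by \autoref{obs:naiveMixDCT}, equals $\min(1+v\crec,\ \dctrot(\rr,\upper,M(v)))$. Since both minima share the first argument $1+v\crec$, it suffices to compare second arguments, i.e.\ to show $u + (1-u)B \le B$ with $B := \dctrot(\rr,\upper,M(v))$. This rearranges to $u(1-B)\le 0$, which holds precisely when $B \ge 1$. The final step is to establish this bound: from \autoref{lemma:DCTRROnMvUpper}, $B = \min(\tfrac{n-1}{v},\,2-\tfrac{2}{n}) \ge 1$ for every $1 \le v \le n-1$ and $n \ge 2$, which closes the argument.

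I expect the main obstacle to be the scaling bookkeeping in the middle step: one must track carefully that the reconfiguration penalty $v\crec$ is \emph{unchanged} under scaling of the demand while the transmission and \rsn\ costs scale by $(1-u)$, so that the $u$ contributed by the uniform block exactly fills in the missing weight and lets the two minima be compared termwise. Everything then hinges on the clean fact $B \ge 1$, which is the intuitive reason the uniform traffic is at worst neutral and hence $u=0$ is the worst case.
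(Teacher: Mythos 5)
Your proof is correct and follows essentially the same route as the paper's: decompose $M(v,u)$ into $uM(n-1)+(1-u)M(v)$, serve the uniform part on \rsn at cost exactly $u$, and then compare the two branches of \autoref{obs:naiveMixDCT} on the remainder (exact equality $1+v\crec$ in the \csn branch, and $u+(1-u)B\le B$ in the \rsn branch). The only difference is presentational — you make explicit the reduction to the termwise comparison of minima and the key fact $B=\dctrot(\rr,\upper,M(v))\ge 1$, which the paper leaves implicit by deferring to the argument of \autoref{lem:RsnWorstCaseDCT}.
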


\begin{proof}
With $\spalgmsnU$, for any $u>0$ we decomposes $M(v,u)$ to $M(v,u)=(1-u)M(v)+uM(n-1)$. 
We can send $uM(n-1)$ optimally on \rsn, incurring a DCT of $u$ seconds. We get,
 \begin{align*}
        &\dctmix(\stalgmsnU,\spalgmsnU,M(v,u) =  \\ 
        &=u+ \dctmix(\stalgmsn,\spalgmsn,(1-u)M(v))
    \end{align*}
For $(1-u)M(v)$, recall that we have two options in accordance with \autoref{obs:naiveMixDCT}: either we send $M(v)$ wholly with \rsn or with \csn.
When using \rsn, we have a similar case as with \autoref{lem:RsnWorstCaseDCT}, since the matrix is sent entirely on \rsn, the lemma follows for this case. 
When using \csn.
 the DCT for the submatrix $(1-u)M(v)$ is $1-u+\crec v$. 
 In total DCT for $M(v,u)$ on \msn in this case, therefore $u+1-u+\crec v=1+\crec v$, the same as the general case in \autoref{lem:RsnWorstCaseDCT}.
As a result, the DCT will remain the same. Therefore, the lemma follows.
\end{proof}
Following these lemmas, we note that the worst-case demand for all types of systems we study is $u=0$.

\subsection{System DCT of {\msn} on M(v,u)}
In this section, we finalize the proof for \autoref{thm:mainSystemDCTreslut}.

We begin by stating the generalized theorem for the DCT of \msn on $M(v,u)$ and the system DCT for the case of $\crec \ge \frac{(1-u)(2n-4)}{n^2}$. This theorem and its proof, therefore, will complete the result of \autoref{thm:mainSystemDCTreslut} for other values of $\crec$.
 
\begin{theorem}[System DCT of \msn on $M(v,u)$ on the entire range of $\crec$]\label{app:thm:DCTMixMvuFUll}

Given $M$ an $n\times n$ matrix $M\in M(v,u)$, the DCT of 
  \msn is the following two-part expression partitioned by the value of the reconfiguration time $\crec$.
  
\para{First case} $\crec < \frac{(1-u)(2n-4)}{n^2}$
\begin{gather}
   \dctmix(\stalgmsnU,\spalgmsnU,M) =\nonumber \\=
\begin{cases}
   1+\crec v & 1\le v<(1-u)\frac{\sqrt{1+4\crec(n -1)}-1}{2\crec}\\
  u+ (1-u)\frac{(n-1)}{v}   & \text{otherwise}
\end{cases}
\end{gather}

\para{Second case} $\crec \ge \frac{(1-u)(2n-4)}{n^2}$

\begin{gather}
   \dctmix(\stalgmsnU,\spalgmsnU,M)= \nonumber \\= 
\begin{cases}
  1+\crec v & 1\le v< \frac{(1-u)(n-2)}{n\crec}\\
 u+ (1-u) (2-\frac{2}{n})   &\frac{n-2}{n\crec} \le v <\frac{n}{2} \\
 u + (1-u)  \frac{(n-1)}{v}   & \text{otherwise}
\end{cases}
\end{gather}

Now, let $\mathcal{M}$ be the family of all $M\in M(v,u)$ matrices.
The system DCT of \msn is given when $u=0$, and the first case was already stated in \autoref{thm:mainSystemDCTreslut}, for completeness, we restate it here.

 \para{First case} $\crec \le \frac{(2n-4)}{n^2}$,
the worst case is given for $v=\Ddot{v}$ where $\Ddot{v}=\frac{\sqrt{1+4\crec(n -1)}-1}{2\crec}$.
\begin{align}
    &\dct(\msn,\mathcal{M}) \le \dctmix(\stalgmsnU,\spalgmsnU,M(\Ddot{v}))=\nonumber \\&=1+\frac{\sqrt{1+4\crec(n -1)}-1}{2\crec}\crec
\end{align}

\para{Second case} $\crec> \frac{(2n-4)}{n^2}$,
the worst case is given for $v=\Dot{v}$ where this value is $\Dot{v}=\frac{n-2}{n\crec}$.
For this value of $v$ the DCT is the same as the worst case DCT of \rsn from \autoref{thm:mainSystemDCTreslut}.
\begin{align}
    &\dct(\msn,\mathcal{M}) \le \dctmix(\stalgmsnU,\spalgmsnU,M(\Dot{v})) =\nonumber\\
    &=\dctrot(\rr,\upper,M(\Dot{v}))=2-\frac{2}{n} 
\end{align}
\end{theorem}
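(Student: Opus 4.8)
The plan is to reduce the composite DCT to a pointwise minimum of three elementary curves in $v$ and then read off the piecewise expressions by a monotonicity-and-crossover argument. First I would invoke the decomposition that defines \algPivotU. By \autoref{def:naiveMixDCTUniform}, $\dctmix(\stalgmsnU,\spalgmsnU,M)=\dctrot(\rr,\upper,uM(n-1))+\dctmix(\stalgmsn,\spalgmsn,(1-u)M(v))$. The uniform term is immediate: $uM(n-1)$ has maximal entry $\frac{u}{n-1}$, so by \autoref{thm:defRotDCTGenApp} (with $\eff=r=1$) \direct\ serves it in $u$ seconds, and since this is optimal for the uniform matrix $\dctrot(\rr,\upper,uM(n-1))=u$. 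For the residual $(1-u)M(v)$, all $v$ permutations of its \bvn\ decomposition carry the equal coefficient $\frac{1-u}{v}$; hence \algPivot\ cannot split them and degenerates to the all-or-nothing choice of \autoref{obs:naiveMixDCT}, giving $\dctmix(\stalgmsn,\spalgmsn,(1-u)M(v))=\min(\dctda(\bvn,\direct,(1-u)M(v)),\dctrot(\rr,\upper,(1-u)M(v)))$. By \autoref{eq:damatDCT} the first term is $(1-u)+\crec v$, and by the scaled form of \autoref{lemma:DCTRROnMvUpper} (equivalently \autoref{eq:RRUPPER}) the second is $(1-u)\min(2-\frac2n,\frac{n-1}{v})$. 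Adding back $u$, the whole DCT equals $\min\{B(v),R_1,R_2(v)\}$ with $B(v)=1+\crec v$, $R_1=u+(1-u)(2-\frac2n)$, and $R_2(v)=u+(1-u)\frac{n-1}{v}$.

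Next I would carry out the crossover analysis. Here $B$ is strictly increasing, $R_1$ is constant, and $R_2$ is strictly decreasing; moreover \upper\ selects $R_1$ for $v<\frac n2$ and $R_2$ for $v>\frac n2$, the two meeting at the kink $v=\frac n2$. The whole behaviour is governed by the position of $B$ relative to the RR level at this kink: $B(\frac n2)=R_1$ exactly when $\crec=\frac{(1-u)(2n-4)}{n^2}$, which is precisely the threshold separating the two cases. When $\crec$ lies below it, $B$ stays under $R_1$ on all of $[1,\frac n2]$, so the plateau never surfaces; $B$ is the minimum until it meets $R_2$ at the positive root $\Ddot v$ of $\crec v^2+(1-u)v-(1-u)(n-1)=0$ (which exceeds $\frac n2$), after which $R_2$ dominates, yielding the two-region expression. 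When $\crec$ lies above it, $B$ pierces the level $R_1$ already at $v=\frac{(1-u)(n-2)}{n\crec}<\frac n2$; the minimum is then $B$, followed by the plateau $R_1$ on $[\frac{(1-u)(n-2)}{n\crec},\frac n2)$, then $R_2$ for $v\ge\frac n2$, where a monotonicity check at the kink ($B(\frac n2)\ge R_1=R_2(\frac n2)$ with $B$ increasing and $R_2$ decreasing) guarantees $B$ stays above $R_2$. This gives the three-region expression.

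Finally, for the system DCT I would take the worst case over the family. By \autoref{lem:mixNetWorstCaseDCT} the maximum is attained at $u=0$, so it suffices to maximize the now parameter-free piecewise curve over $v$. In the first case the curve rises along $1+\crec v$ and then falls along $\frac{n-1}{v}$, so its maximum sits at the crossover $\Ddot v=\frac{\sqrt{1+4\crec(n-1)}-1}{2\crec}$; substituting gives $1+\crec\Ddot v=\frac{1+\sqrt{1+4\crec(n-1)}}{2}$, matching \autoref{eq:sysDCTMSN}. In the second case the maximum is the plateau value $2-\frac2n$, attained at $\Dot v=\frac{n-2}{n\crec}$, which coincides with \rsn's worst case and closes the statement of \autoref{thm:mainSystemDCTreslut}.

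The main obstacle I expect is the bookkeeping of the three-way minimum: one must argue rigorously that the constant RR-BvN plateau is genuinely absent below the threshold (the claim $B<R_1$ throughout $[1,\frac n2]$, not merely at the endpoint) and genuinely present above it, and must pin down each crossover abscissa — in particular solving the BvN versus RR-Direct quadratic and verifying its root exceeds $\frac n2$ in the first case. By contrast, the uniform contribution and the reduction through \autoref{obs:naiveMixDCT} are routine once the \algPivotU\ decomposition is in hand.
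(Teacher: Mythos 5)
Your proposal is correct and follows essentially the same route as the paper: decompose via \algPivotU\ into the uniform part (served in $u$ seconds) plus \algPivot\ on $(1-u)M(v)$, reduce to the three-way minimum of $1+\crec v$, $u+(1-u)(2-\frac{2}{n})$ and $u+(1-u)\frac{n-1}{v}$, and locate the crossovers, with the threshold $\crec=\frac{(1-u)(2n-4)}{n^2}$ and the worst case at $u=0$. One small note: your quadratic $\crec v^2+(1-u)v-(1-u)(n-1)=0$ is the correct crossing condition, and its positive root coincides with the paper's stated boundary $(1-u)\frac{\sqrt{1+4\crec(n-1)}-1}{2\crec}$ only at $u=0$ --- a slip in the paper's algebra rather than in yours, and immaterial to the system-DCT conclusions since the worst case sits at $u=0$.
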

\begin{proof}

First, we show how to find the expressions for the DCT of \msn, we then prove what is the system DCT in each case.
In \msn we decompose into two matrices.
Using the \algPivotU, since \rsn can handle $M(n-1)$ optimally, we always send this sub-matrix using \rsn. 
For the other $M(v)$ matrix, use the \algPivot algorithm.

To define the DCT of \msn in a piecewise formula, we need to find the $v$ value where the different components of \msn intersect and then use the appropriate expression for their DCT.
The DCT for \msn on $M(v,u)$ is therefore
\begin{align}\label{eq:DCTofMSNGen}
       &\dctmix(\stalgmsnU,\spalgmsnU,M(v,u  ) ) = \\
    &= u+\dctmix(\stalgmsnU,\spalgmsnU,(1-u)M(v))
\end{align}
Note that since the total traffic load here is $1-u$ the DCT for each subsystem is slightly different.  
For systems with a different load size, $x$ (i.e., $x \neq 1$), 
the DCT of \csn on $xM$ where $M$ is a double stochastic demand matrix with a \bvn decomposition of length $v$ is, $\dctda(\bvn,\direct,xM)=x+v\crec$. While for \rsn we get $\dctrot(\rr,\upper,xM)=x\dctrot(\rr,\upper, M)$, we, therefore, place these expressions in a piecewise expression as a function of $v$ and $\crec$.
We can use \autoref{fig:dctBoundMV} as a visual aid in our explanation, as the curves we are interested in remain similar in principle.
To find the crossing point for \rsn with \csn, we have two cases: either the \csn crosses the \rsn, i.e., the $\upper$ curve at  at $u+(1-u)(\frac{n-1}{v})$  (that is, between point $(B)$ to $(E)$), or $u+(1-u)(2-\frac{2}{n})$ (that is, between point $(D)$ to $(B)$) . 
This crossing point area is a function of $\crec$, which changes for the \csn curve with $(B)$ being the point of discontinuity. At this point the \bvn reconfiguration time is $\crec = \frac{(1-u)(2n-4)}{n^2}$.

In the The first case for $\crec$ values smaller than $\crec\le \frac{(1-u)(2n-4)}{n^2}$. 
We denote the value of $v$ for the cross point where $1+\crec \Ddot{v}=(1-u)\frac{n-1}{\Ddot{v}}$ as $\Ddot{v}$. After solving for $\Ddot{v}$ we get $\Ddot{v}=(1-u)\frac{\sqrt{1+4\crec(n-1)}-1}{2\crec}$

The second case happens for $\crec > \frac{(1-u)(2n-4)}{n^2}$. The \csn DCT curve will cross the \rsn DCT curve at $u+(1-u)(2-\frac{2}{n})$.
We denote the value of $v$ for the second crossing point as $\Dot{v}$, here we get $1+\crec \Dot{v}=(1-u)(2-\frac{2}{n})$. After solving for $\Dot{v}$ we get  $\Dot{v}=(1-u)\frac{n-2}{n\crec}$.

Opening the expression in \autoref{eq:DCTofMSNGen} using our formulas for the DCT of \rsn and \csn in a piecewise manner and using our different cases for $\crec$ will give us the first part of the proof \autoref{app:thm:DCTMixMvuFUll}, namely, the expressions for the DCT. We now continue with proving the system DCT of \msn.

Since the worst case is at $u=0$ (as shown in \autoref{app:worstCaseAnalysisMVU}), for \msn the worst-case matrix is a function of $n$ and $\crec$.
We have two cases as a function of the \bvn reconfiguration time $\crec$.
 
In the first case, $\crec\le \frac{2n-4}{n^2}$, 
The cross point is at $\Ddot{v}=\frac{\sqrt{1+4\crec(n -1)}-1}{2\crec}$, therefore 
\begin{align}
    & \dct(\msn,\mathcal{M}) \le \dctmix(\stalgmsn,\spalgmsn,M(\Ddot{v})) \nonumber =\\&
     =1+\frac{\sqrt{1+4\crec(n -1)}-1}{2\crec}\crec
\end{align}

For the second case, where $\crec > \frac{2n-4}{n^2}$, \msn will cross the path of \upper algorithm where \upper is using $\rrbvn$, that is, where the DCT is at $2-\frac{2}{n}$. In this case, the upper bound is the same value as \rsn, but at a different $v$ value, we denote this value as $\Dot{v}=\frac{n-2}{n\crec}$.
 
\begin{align}
   & \dct(\msn,\mathcal{M}) \le \dctmix(\stalgmsn,\spalgmsn,M(\Dot{v}))=\nonumber\\&
   =2-\frac{2}{n}
\end{align} 
This finalizes the proof.
\end{proof}

Combining the result from \autoref{app:worstCaseAnalysisMVU}, that the worst case matrix for $M(v,u)$ for all systems is at $u=0$ with our recently proven \autoref{app:thm:DCTMixMvuFUll}, and our results regarding \csn, form \autoref{obs:main:DAsysDCTBound}  and \rsn from \autoref{thm:main:RRsysDCT} for the system DCT of both systems, we prove our main theorem of interest, that is \autoref{thm:mainSystemDCTreslut}.

\subsection{Theoretical \msn Using the Lower Bound of \rsn }\label{sec:loweroundOfRSN}
Let us denote \lowerRsn~ as a traffic scheduling algorithm that achieves the lower bound for \rsn on $M(v)$ as we found in \autoref{cor:lowerBoundRSNMV}. We make no claims here about whether this algorithm is achievable for every $M(v)$.
The lower for \rsn with $M(v)$ is the following
 \begin{align}
     \dctrot(\rr,\lowerRsn, M(v) \ge \frac{2(n-1)}{n-1+v} 
\end{align} 

We may now ask at which value of $v$ does the curve of \csn crosses the lower bound of \rsn, that is $\lowerRsn$.

\begin{align}
     &\dctrot(\rr,\lowerRsn, M(v))= \dctda(\bvn,\direct, M(v))=\\
&=2-\frac{2v}{n-1+v}= 1+\crec v=\\
 &=\crec v^2+v(1+\crec (n-1))-n+1=0
\end{align}
We can solve for $v$ by solving a quadratic equation.
\begin{align}
    v=\frac{\crec (1-n)-1 \pm \sqrt{4\crec(n-1)+(\crec -1 -n \crec)^2}}{2\crec}
\end{align}
Only the positive sign solution gives a relevant solution where $v>0$ .
Let us denote this value as $\textsc{low}(\crec,n)$.
$$\textit{low}(\crec,n)=$$
$$=\frac{\crec (1-n)-1 + \sqrt{4\crec(n-1)+(\crec -1 -n \crec)^2}}{2\crec}$$

\begin{gather*}
   \dctmix(\stalg,\spalg,M(v)) = \\
\begin{cases}
  1+\crec v & \text{for }1\le v<\text{low}(\crec,n)\\
  \frac{2(n-1)}{n-1+v}   & \text{otherwise}
\end{cases}
\end{gather*}

\section{Improving the Lower DCT Bounds on M(v) for {\rsn}}\label{sec:lowerDCTBoundRSNMC}
In this section, we discuss the lower bounds for \rsn for the particular case of the $M(v)$ matrix family we defined in \autoref{subsec:MvuMatrixDef}. Recall that in this doubly stochastic matrix family, the demand matrix is composed of the summation of a set of $v$ equal and non-overlapping permutation matrices. This family, as we will later see, presents both the worst and best-case matrices for \rsn, and thus, we believe it allows us to explore a wide range of possible performances for the \rsn type systems. While the bounds we present here are not strictly necessary for the analyses of the system DCT, we present this section as an independent contribution to the paper.

Recall the lower bound for \rsn on $M(v)$ in \autoref{cor:lowerBoundRSNMV}. We note that the question of 
what is a tight lower bound for any matrix is still an open question. 
However, we believe that this lower bound is achievable by some algorithm for all values of $v$, where $2\le v \le n-2$, but only for some configurations of permutations in $M(v)$.  
In the following section, we show that our lower bound in not tight even for $M(2)$. We show that there exists some matrices in $M(2)$ which do not allow DCT lower than the upper bound for any \rsn type packet schedule. These matrices have propriety we denote as a dual collision as in \autoref{def:collisionDual}.

\subsection{Bounds For The M(2) Matrix}\label{app:sec:MtwoBounds}
In this section, we discuss bounds to the $M(2)$ family of matrices. Recall that according to \autoref{cor:lowerBoundRSNMV}, the lower bound for $M(2)$ is $2-\frac{4}{n+1}$. In the following theorem we prove a better bound for some cases. 

\begin{restatable}{theorem}{lowerBoundvTwo}[Lower bound of $M(2)$ with a dual collision]\label{thm:lowerBoundv2}
For \rsn there exists a demand matrix $M \in M(2)$ that has, at least, two duel collision cells such that DCT for $M$ lower bounded by the upper bound in \autoref{thm:upperRSNBVN} for any traffic scheduling $\spalg$. That is;
\begin{align}
 \dctrot(\rr, \spalg, M &\geq (2-\frac{2}{n})
\end{align}
 \end{restatable}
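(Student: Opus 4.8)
The plan is to reduce the claim to a statement purely about the \emph{total traffic matrix} $\totalDemandMat$ and then exploit the collision structure to force one of its entries to be large. Recall that \autoref{app:clm:TotalmatrixDCT} gives, for any traffic scheduler $\spalg$ and with the normalization $\eff = r = 1$ in force,
\begin{align*}
\dctrot(\rr,\spalg,M) \geq (n-1)\,\max_{i,j}\totalDemandMat_{i,j}.
\end{align*}
Hence it suffices to construct a matrix $M \in M(2)$ carrying (at least) two dual-collision cells with the property that \emph{every} feasible, complete schedule induces a total traffic matrix with $\max_{i,j}\totalDemandMat_{i,j} \geq \frac{2}{n}$; plugging this in yields $\dctrot(\rr,\spalg,M) \geq (n-1)\frac{2}{n} = 2-\frac{2}{n}$, which is exactly the value of the \rrbvn\ upper bound of \autoref{thm:upperRSNBVN}. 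Thus the whole theorem collapses to the single combinatorial inequality $\max_{i,j}\totalDemandMat_{i,j} \geq \frac{2}{n}$.

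Before attacking that inequality, I would record why the generic argument is too weak and where the collisions must enter. The lower bound of \autoref{cor:lowerBoundRSNMV} for $M(2)$ equals $2-\frac{4}{n+1}$, and (in the relaxation) it is attained only by a perfectly balanced $\totalDemandMat$ whose off-diagonal entries all equal $\frac{2}{n+1}$: almost all of the $\frac{1}{2}$ mass of each of the $2n$ demand cells is routed on two hops, and the resulting relay load is spread uniformly over the empty cells. A short flow count makes the tension explicit: if every entry of $\totalDemandMat$ were below $\frac{2}{n}$, then summing a row (whose total traffic is $1+\mathrm{Rel}(a)$, the unit demand leaving $a$ plus the traffic relayed through $a$) forces $\mathrm{Rel}(a) < 1-\frac{2}{n}$ for every $a$, while bounding the direct mass on demand cells forces the skewness to satisfy $\frac{2}{n} < \skew < \frac{4}{n}$. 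These constraints are mutually consistent in general — which is precisely why the balanced solution exists — so any contradiction must come from the extra rigidity imposed by the dual-collision cells of \autoref{def:collisionDual}.

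The heart of the proof is therefore the construction together with a local overload argument. I would place two dual-collision cells, say $[a,b]$ and its partner, inside a constituent permutation aligned with the round-robin order so that the first-hop and second-hop slots needed to relay their $\frac{1}{2}$ masses are forced to coincide, and then complete $M$ to a genuine $M(2)$ matrix on the remaining nodes (each row and column acquiring its second $\frac{1}{2}$ entry). The key step is a case analysis, driven by the causality constraint (property \#3), on how much of each colliding demand is sent directly versus relayed: sending it directly already deposits at least the direct share onto the demand cell, while relaying it is shown — using that the admissible relay slots collide — to concentrate the deferred mass onto a single incident cell. Balancing these two options, one argues that in every case some cell incident to a collision accumulates at least $\frac{2}{n}$, giving the desired inequality.

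The main obstacle I anticipate is exactly this last quantitative step. The naive per-row and per-column averages only give $\frac{1}{n-1} < \frac{2}{n}$, so the argument cannot rest on flow conservation alone: it must genuinely use the timing collision to rule out the uniform $\frac{2}{n+1}$ assignment and push the forced entry all the way up to $\frac{2}{n}$. If this step goes through, tightness is immediate — \rrbvn\ attains $2-\frac{2}{n}$ on any doubly stochastic matrix by \autoref{thm:upperRSNBVN} — and as a byproduct the theorem shows that the lower bound of \autoref{cor:lowerBoundRSNMV} is \emph{not} achievable for all of $M(2)$.
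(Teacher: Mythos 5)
Your reduction to the single inequality $\max_{i,j}\totalDemandMat[i,j]\ge \frac{2}{n}$ via \autoref{app:clm:TotalmatrixDCT} is exactly the paper's starting point, and your diagnosis of why the generic bound stalls at $2-\frac{4}{n+1}$ (the perfectly balanced relaxation) is correct. But the proposal stops short of a proof precisely at the step you flag yourself: you never establish that every feasible complete schedule forces some entry of $\totalDemandMat$ up to $\frac{2}{n}$; you only assert that it should follow from a timing collision ``if this step goes through.'' Moreover, the mechanism you propose --- aligning the collision cells with the round-robin order so that first-hop and second-hop \emph{slots} coincide, invoking causality (property \#3) --- is not the lever that works. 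The round-robin schedule revisits every link once per cycle, and the scheduler may spread a relay over arbitrarily many cycles (and, in the \rrbvn-style analysis, even choose $\delta$), so no slot-level conflict can be forced; indeed \autoref{def:collisionDual} is a purely combinatorial property of the demand pattern, independent of the topology schedule's timing.

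The paper's argument is time-free and runs entirely on $\totalDemandMat$. A dual collision cell $[l,j]$ is an empty cell whose \emph{both} admissible second-hop links $(j,a)$ and $(j,b)$ are themselves demand cells of $M$; the construction places these cells at the empty positions $[1,n]$ and $[n,1]$ of $M=\frac{1}{2}(P_1+P_2)$, where rows $1$ and $n$ share the same pair of nonzero columns (not ``inside a constituent permutation'' as you describe). The proof then splits into two cases. If the schedule leaves the collision cells unused, the affected rows have $\inActCellMax=2$ inactive cells, which tightens the skew bound of \autoref{thm:lowRSNApp} from $\frac{4}{n+1}$ to $\frac{4}{n}$ and the averaging step of \autoref{thm:lowerBoundRSN} to $n-2$ active cells per row; the two effects combine to give exactly $(2-\frac{4}{n})\frac{n-1}{n-2}=2-\frac{2}{n}$. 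If instead $x_1$ (resp.\ $x_2$) units from the $P_1$ (resp.\ $P_2$) demand cell are relayed through the collision cell, that mass is necessarily re-deposited onto the other permutation's demand links, so the loads on the two link classes change by $-x_1+x_2$ and $-x_2+x_1$; one of these is nonnegative, so the maximum entry never drops below its value in the unused case. This elementary exchange argument on link loads is what replaces your missing quantitative step.
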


We note that this is the same DCT given by \rrbvn.
This theorem shows that for some matrices from $M(2)$, the upper bound is optimal. However, we also present an example of a matrix from this family and a traffic schedule where the lower bound us reached.  
We, therefore, conjecture that the optimal DCT on \rsn on any matrix in $M(v)$ is found between the upper and lower bounds as described in this section. 

For the following proof of \autoref{thm:lowerBoundv2}, we must first define a \emph{collision cell}. This is a cell in the demand matrix that has a collision between direct non-local traffic and direct local traffic two-hop routing. These collisions might limit what a routing algorithm could do in some instances.
A collision occurs when there is a demand to send two 'types' of traffic from the same cell simultaneously. We are only interested when these two types of traffic are direct non-local and direct local traffic. Furthermore, although this can be generalized, we define our collision for matrices from the matrix family $M\in M(v)$.  

We are only interested in the relation between cells with demand and cells with zero demand in the demand matrix. That is, a collision is caused by a certain non-zero demand cell sending indirect traffic, which, on its second hop, collides with traffic from another non-zero demand cell. 
Our definition of a \emph{collision cell} is the following.
 
\begin{definition}[Collision Cell]\label{def:collisionSing}
Given a demand matrix $M\in M(v)$, and a cell with demand $M[l,a]>0$, an empty cell $M[l,j]=0$ is a collision cell if its second hop when helping the cell $M[l,a]$, that is $M[j,a]$ also has any demand. Or more formally, the cell $M[l,j]$ is a collision cell, for the cell $M[l,a]$ if the following conditions holds
\begin{align}
     M[l,a]>0 \And M[l,j]=0 \And M[j,a]>0
\end{align}
\end{definition}
The previous definition relates to a collision cell caused by a single other cell. An expansion of this definition is a \emph{dual collision cell}.

\begin{definition}[Dual Collision Cell]\label{def:collisionDual}
Given a demand matrix  $M\in M(v)$, and two cells with demand $M[l,a]>0$,$M[l,b]>0$, an empty cell with the coordinates $M[l,j]$ is a dual collision cell if it has a collision both with $M[l,a]$ and $M[l,b]$. Or, more formally, $M[l,j]$ is a dual collision if the following conditions holds
\begin{align}
    M[l,a]>0 \And M[l,b]>0 \And M[l,j]=0\And \nonumber \\ \And M[j,a]>0 \And M[j,b]>0
\end{align}
\end{definition}
Our observation here is that dual collision cells often come in pairs.
\begin{observation}[Pairs of Dual Collision Cell]\label{obs:PaircollisionDual}
   If $M[l,j]$ is a dual collision cell in relation to $M[l,a]$, $M[l,b]$, then if $M[j,l]=0$, it is also a dual cell in relation to  $M[j,a]>0 \And M[j,b]$.
\end{observation}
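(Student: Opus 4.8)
The plan is to prove this observation by a direct verification against \autoref{def:collisionDual}, exploiting the near-symmetry of the defining conditions under the exchange of the row indices $l$ and $j$. First I would unpack the hypothesis. Since $M[l,j]$ is a dual collision cell in relation to $M[l,a]$ and $M[l,b]$, \autoref{def:collisionDual} guarantees that the five conditions
\begin{align*}
M[l,a]>0, \quad M[l,b]>0, \quad M[l,j]=0, \quad M[j,a]>0, \quad M[j,b]>0
\end{align*}
all hold, and I am given the one extra hypothesis $M[j,l]=0$. Reading the (slightly abbreviated) conclusion in its only sensible form, the goal is to show that the symmetric entry $M[j,l]$ is itself a dual collision cell in relation to the two demand cells $M[j,a]$ and $M[j,b]$.

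Next I would write down exactly what must be checked. Applying \autoref{def:collisionDual} with the collision cell now taken to be $M[j,l]$ (row index $j$, collision column $l$) and the two demand cells $M[j,a]$, $M[j,b]$, the required conditions are
\begin{align*}
M[j,a]>0, \quad M[j,b]>0, \quad M[j,l]=0, \quad M[l,a]>0, \quad M[l,b]>0.
\end{align*}
The heart of the argument is then the observation that four of these five requirements are literally among the conditions already established: $M[j,a]>0$ and $M[j,b]>0$ are the second-hop conditions of the original dual collision, while $M[l,a]>0$ and $M[l,b]>0$ are precisely the demand conditions of the two cells that $M[l,j]$ collided with. The only condition not inherited for free is the emptiness of the new collision cell, $M[j,l]=0$, and this is supplied exactly by the extra hypothesis of the observation. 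Hence all five conditions of \autoref{def:collisionDual} are met for $M[j,l]$, which completes the proof.

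I do not expect any real obstacle here: the entire content is that the defining conditions of a dual collision cell are invariant under swapping $l \leftrightarrow j$, \emph{except} for the emptiness condition, which flips from $M[l,j]=0$ to $M[j,l]=0$. This single asymmetry is exactly why the statement must carry the assumption $M[j,l]=0$; without it the conclusion fails whenever the mirror entry $M[j,l]$ carries demand. The one point I would be careful about is bookkeeping: the paper's convention fixes the two demand cells in the \emph{same row} as the collision cell, so I must make sure the relabeling maps the hypothesis conditions onto the required conditions with the correct second-hop direction (second hops from node $j$ in the original, from node $l$ in the mirror), as traced above.
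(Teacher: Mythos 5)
Your proposal is correct and matches the paper's own argument: both simply verify the five conditions of the dual-collision definition for the mirrored cell $M[j,l]$, noting that four are inherited from the original dual collision and the fifth is supplied by the hypothesis $M[j,l]=0$. The only thing the paper adds is the remark that $M[j,l]=0$ holds automatically for any $M \in M(2)$, which you do not need for the statement as written.
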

When $M[j,l]$ is also zero, we can see [that]] it also must fulfill the condition $ M[j,a]>0 \And M[j,b]>0 \And M[j,l]=0 \And M[l,a]>0 \And M[l,b]>0$. The case $M[j,l]=0$  will be true for any matrix $M \in M(2)$ since it has exactly two non-zero cells per line.

To find a matrix $M\in M(2)$ with two dual collision cells, we need to find a matrix with two rows with the same two columns with non-zero demand.

We build this matrix $M$, from two permutation matrices, $P_1$ and $P_2$, where $M=0.5(P_1+P_2)$. We can set $P_1[1,2]=1$, $P_1[n,n-1]=1$ and $P_2[1,n-1]=1$ ,$P_2[n,2]=1$.
Other elements can be chosen arbitrarily, keeping with the $M(v)$ conditions as we described in \autoref{sec:thecasestudy}.
In this case, the cells $M[1,n]$ and $M[n,1]$ are dual collision cells.

After stating our definitions, we can prove  \autoref{thm:lowerBoundv2}
 \begin{proof}[Proof of \autoref{thm:lowerBoundv2}]
The first part resembles our proof from \autoref{thm:lowRSNApp}. 
Consider a complete schedule from \autoref{app:clm:TotalmatrixDCT}.  
The lower bound for the DCT is minimized when the maximum cell in $\totalDemandMat$ is minimized. 
Hence, in this proof we focus on rows with a duel collision cell, as even if some rows might allow for a lower DCT, rows with duel collision cells will dominate the demand completion time.
Next, we claim that since the schedule is optimal, each cell in 
$M$ that has an entry $\frac{1}{2}$ will only send \emph{direct} traffic. This follows from the observation that if it was needed to send an \emph{indirect} traffic that travels two hops, it was better to send the same amount of traffic directly, reducing the overall traffic sent. Contradicting the optimality of the schedule.

Looking at the dual collision cells, there are only two fundamental ways in which any algorithm $\spalg$ can use the dual collision cells in the \rsn scheme: either it is not used, or it is used to send indirect traffic.

For the first case, where $\spalg$ leaves these cells unused, the dual collision cells are inactive cells.
To find the lower bound on DCT, in this case, we use the upper bound $\skew$ we found in \autoref{thm:lowRSNApp} and using the expression for DCT we found in \autoref{thm:lowerBoundRSN}, for any algorithm $\spalg$ with $\inActCellMax=2$ we find that the following bound.
\begin{align}
    & \dctrot(\rr,\spalg, M\geq \nonumber\\ &\geq (2-\frac{4}{n } )\frac{n-1}{n-2}\frac{1}{ \eff r} \frac{ \Wcard{M}}{n}= (2-\frac{2}{n})\frac{1}{ \eff r} \frac{ \Wcard{M}}{n}
\end{align}
Which is the same bound as the general \emph{upper} bound for $M(2)$ we found in \autoref{thm:upperRSNBVN} by using the algorithm $\rrbvn$.

For the second case, $\spalg$ \emph{uses} the dual collision cells to send indirect traffic.
We denote the two permutations which compose the matrix $M$ as $P_1$ and $P_2$.
The algorithm uses use the dual collision cells to send $x_1$ bits from $P_1$ and $x_2$ bits of data from $P_2$ (per row), where $0\leq x_1, x_2\leq 0.5$.
Note that sending data from the cells of $P_1$ to cells of $P_2$ via the dual collision cells means that any data sent from the cells of $P_1$ to these cells will end up being transmitted on connections that could have been used to transmit data from the cells of $P_2$ directly (and vice versa for $P_2$). That is, if $P_2[k,j]>0$ then a bit sent to a dual collision cell would eventually have to be sent on the connection $(k,j)$, instead of direct local packets from $P_2$.
This means that by using the collision cells, we can send fewer bits directly from each permutation in the same time frame.

To prove that using the dual collision cells doesn't help to lower the DCT we prove that the following holds. 
First we establish the amount of data sent in the base case, where the dual collision cells are inactive cells, that is, $x_1 = x_2 =0$. Let $\totalDemandMat$ be a total traffic matrix formed from an optimal traffic algorithm that doesn't use the dual collision cell. Let $\card{P_1^T}$ and $\card{P_2^T}$ be the total demand sent on connections that are equivalent to $P_1$ and $P_2$ in the total traffic matrix.

If the the cell of permutation $P_1$ transfer $x_1$ bits  
and the cell of $P_2$ transfer $x_2$ bits to the dual collision cell (of each row with a duel collision), then the total size of data transmitted on the connections, which are equivalent to $P_1$ is now $\card{c_2^T}-x_2+x_1$ and for $P_2$ is now $\card{c_2^T}-x_1+x_2$. Since the cells use the dual collision cells, any demand reduced from a cell in $P_1$ is transferred to a cell in $P_2$ (or vice versa). Lastly, we will assume w.l.o.g that $x_1\geq x_2$, and we can conclude that $\card{P_2^T}-x_1+x_2\geq \card{P_2^T}$, that is, the new transmission time is greater or equal to the original transmission time before using the dual collision cell for any value of $x_1$ and $x_2$. 
Since we haven't been able to reduce the size of connections on the permutations, the size of the other connections is irrelevant. 
In conclusion, the total DCT for any algorithm $\spalg$ cannot be less than the bound we found for $\rrbvn$. That is   
\begin{align}
&\dct (\rr,\spalg,M)\geq \dct (\rr,\rrbvn,M(2))=\\&=(2-\frac{2}{n})\frac{1}{ \eff r} \frac{ \Wcard{M}}{n}
\end{align}
Since we have already proven that this bound is possible, the theorem follows.

 \end{proof}
 
\begin{figure*}[ht]
  \begin{centering}
  \begin{tabular}{c}
 \includegraphics[width=1\textwidth]{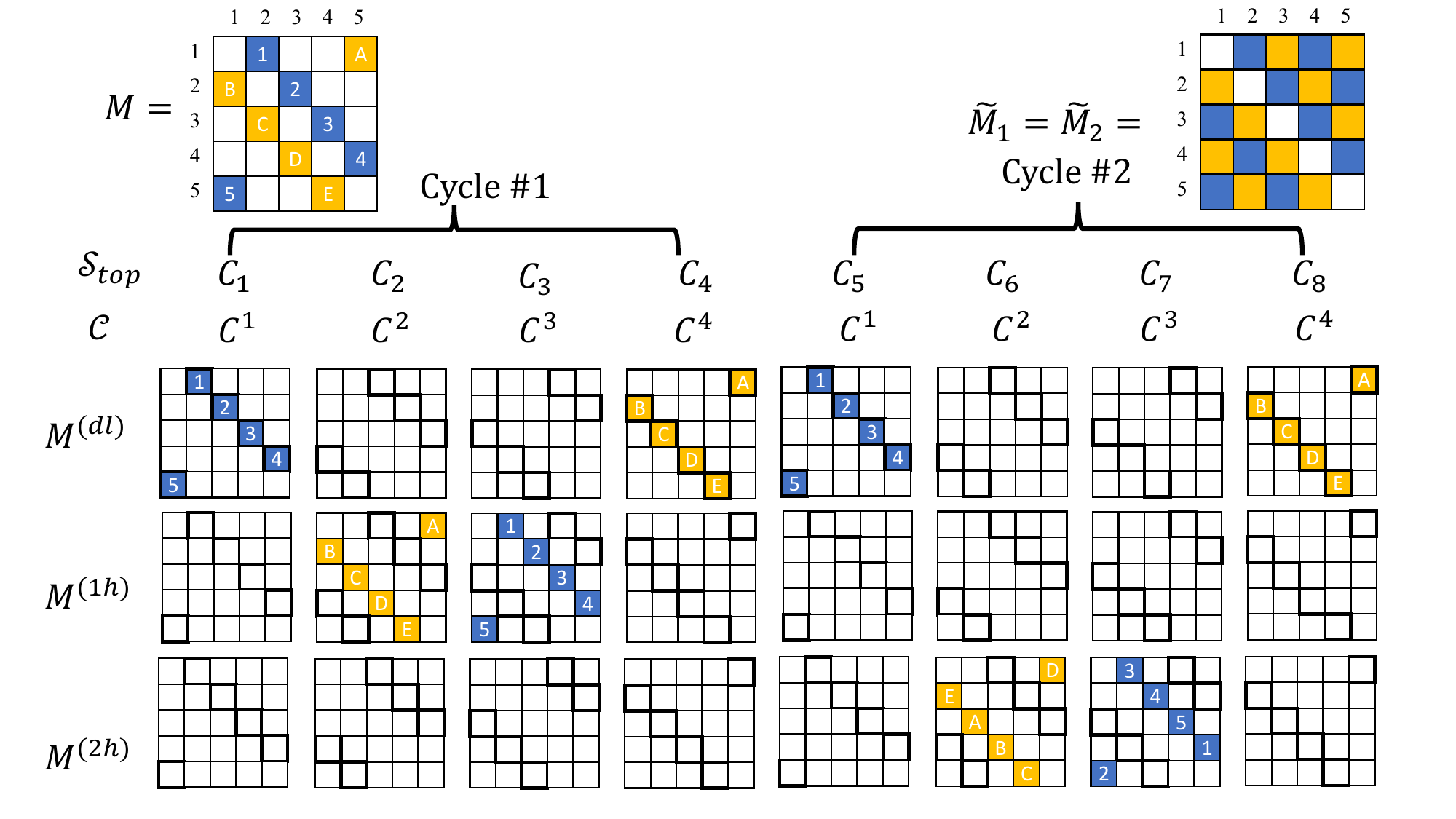}  
  \end{tabular}
   \caption[A brief example of a schedule for a small matrix from $M(2)$ which achieves the lower DCT bound]{An example of a two-stage traffic schedule which reaches the lower bound on a $5\times 5$ demand matrix $M$, and the traffic schedule generated for it, \ts. The traffic schedule $\ts=\{\configElem_1,\configElem_2,\configElem_3,\configElem_4\}$ is the same topology schedule as used in \autoref{fig:example}. The permutation matrices composing $M$ are marked with yellow and blue colors.  
   }
    \label{fig:5on5LowerBoundSched}
  \end{centering}
 \end{figure*}

\subsection{Possible Optimal Traffic Schedule Example}\label{app:PossibleLowerBoundMTwo}
The proof of \autoref{thm:lowerBoundv2} shows that the upper bound we found is tight not just for all matrices from $M(1)$ and $M(n-1)$ but for also for some matrices from $M(2)$, that contains a dual collision cell. However, we note that this criterion may not be the only one. 
We now show an example for a matrix $M\in M(2)$, which doesn't contain a dual collision but two (singular) collisions instead.

In \autoref{fig:5on5LowerBoundSched}, we show a traffic schedule ($\ps=\{\Mdl,\Mth,\Moh\}$) presenting a possible scheduling achieving the lower bound we found in \autoref{cor:lowerBoundRSNMV}. Here the topology scheduling used is the same as in \autoref{fig:example}.
Looking at $M\in M(2)$, in the figure, we see that it is the sum of two permutations matrices, the one in blue marked with numbers ($1$ thorough $5$), and the other in yellow marked with letters ($A$ thorough $E$). These color markings are persistent throughout the figure.  
In the indirect local traffic set, $\Moh$, we see how the traffic from each matching is transferred to a different cell. While in the direct local set  $\Mth$, we see the destination of each cell in the indirect first hop traffic.  Similarly to schedules algorithm $\algOnePerm$ this schedule has two cycles, with two identical total traffic matrices in each cycle $\totalDemandMat_1$ and $\totalDemandMat_2$, each being uniform.
In total, these would give us a traffic schedule that achieves the lower bound where each matching cell has a normalized load of $\frac{1}{2}$ in $M$ and a load of $\frac{1}{6}$ during the schedule. Finally, we set $\delta=\frac{1}{6}$ and see how this schedule is viable.
The DCT of this schedule is $ 2\cdot4\cdot\frac{1}{6}=\frac{4}{3}\approx1.34 [sec]$, that is, it the slot duration $\delta=\frac{1}{6}$ times the number of slots in a cycle, $n-1=4$ times the number of cycles which is $2$. While if we used $\rrbvn$, the result for this matrix $M$ would have been $2-\frac{2}{5}=1.6 [sec]$, and this is clearly higher. 
Note that the topology schedule we presented here in \autoref{fig:5on5LowerBoundSched} contains the two permutations from $M$. However, this is not a requirement, and we could generate a schedule with the same DCT for another topology schedule.  
 
Following the proof and lower bound example, we believe that all matrices from $M(v)$ have specific cases, that is, arrangements of the $v$ matchings, where the matrix DCT is at the lower bound and others where it is at the upper bound.
and that this depends on the number of collisions. 
We leave proof of this for future work.

 \end{document}